\theoremstyle{plain}
\newtheorem{theorem}{Theorem}
\newtheorem{assumption}{Assumption}
\theoremstyle{definition}
\theoremstyle{plain}
\newtheorem{hypothesis}[theorem]{Hypothesis}
\definecolor{Mathematica1}{rgb}{0.368417, 0.506779, 0.709798}
\definecolor{Mathematica2}{rgb}{0.880722, 0.611041, 0.142051}
\definecolor{Mathematica3}{rgb}{0.560181, 0.691569, 0.194885}
\definecolor{darkgreen}{rgb}{0,0.4,0}
\definecolor{darkred}{rgb}{0.4,0,0}
\definecolor{darkblue}{rgb}{0,0,0.4}
\definecolor{lightblue}{rgb}{.6,.6,0.9}
\definecolor{uglybrown}{rgb}{0.8,  0.7,  0.5}
\definecolor{palatinatepurple}{rgb}{0.41, 0.16, 0.38}
\definecolor{celebrationcolor}{rgb}{0.75,  0.0,  0.9}
\definecolor{shadecolor}{rgb}{0.90,0.90,0.90}
\definecolor{DVcolor}{rgb}{0.95,  0.5,  0.2}
\definecolor{lightbluemuons}{rgb}{0.0,.65,1.0}
\definecolor{chartreuse}{rgb}{0.70, 1.00, 0.00}
\tikzset{
    vector/.style={decorate, decoration={snake}, draw},
    fermion/.style={postaction={decorate},
        decoration={markings,mark=at position .55 with {\arrow{>}}}},
    fermionbar/.style={draw, postaction={decorate},
        decoration={markings,mark=at position .55 with {\arrow{<}}}},
    fermionnoarrow/.style={},
    gluon/.style={decorate,
        decoration={coil,amplitude=4pt, segment length=5pt}},
    scalar/.style={dashed, postaction={decorate},
        decoration={markings,mark=at position .55 with {\arrow{>}}}},
    scalarbar/.style={dashed, postaction={decorate},
        decoration={markings,mark=at position .55 with {\arrow{<}}}},
    scalarnoarrow/.style={dashed,draw},
%
	vectorscalar/.style={loosely dotted,draw=black, postaction={decorate}},
}
\def\centerarc[#1](#2)(#3:#4:#5)
\def\ketbra#1#2{ | #1 \rangle\hskip-2pt\langle #2|}
\newmdenv[%
        backgroundcolor=lightgray,
    linecolor=black,
    outerlinewidth=2pt,
]{boxedandshaded}
\numberwithin{equation}{section}
\renewcommand{\theequation}{\arabic{section}.\arabic{equation}}
\newlength{\extraspace}
\newlength{\extraspaces}
\def\be{\begin{equation}}
\def\ee{\end{equation}}
\newcommand{\bea}{\begin{eqnarray}}
\newcommand{\eea}{\end{eqnarray}}
\def\p{\partial}
\def\eps{\epsilon}
\def\Tr{{{\rm Tr}}}
\def\tr{{\rm tr}}
\def\bra#1{\left\langle#1\right|}
\def\ket#1{\left|#1\right\rangle}
\def\CD{{\cal D}}
\def\CH{{\cal H}}
\def\CM{{\cal M}}
\def\CO{{\cal O}}
\def\CV{{\cal V}}
\def\CW{{\cal W}}
\def\II{\relax{I\kern-.10em I}}
\def\IB{\relax{\rm I\kern-.18em B}}
\def\ID{\relax{\rm I\kern-.18em D}}
\def\IE{\relax{\rm I\kern-.18em E}}
\def\IF{\relax{\rm I\kern-.18em F}}
\def\IG{\relax\hbox{$\inbar\kern-.3em{\rm G}$}}
\def\IGa{\relax\hbox{${\rm I}\kern-.18em\Gamma$}}
\def\IH{\relax{\rm I\kern-.18em H}}
\def\II{\relax{\rm I\kern-.18em I}}
\def\IK{\relax{\rm I\kern-.18em K}}
\def\inbar{\,\vrule height1.5ex width.4pt depth0pt}
\def\p{\partial}
\def\simgt{\hskip0.05in\relax{ 
\raise3.0pt\hbox{ $>$
{\lower5.0pt\hbox{\kern-1.05em $\sim$}} }} \hskip0.05in}
\def\im{{\rm im\ }}
\def\lp10{\ell_p^{10}}
\def\lp11{\ell_p^{11}}
\def\R11{R_{11}}
\def\frac#1#2{{#1 \over #2}}
\newdimen\tableauside\tableauside=1.0ex
\newdimen\tableaurule\tableaurule=0.4pt
\newdimen\tableaustep
\def\phantomhrule#1{\hbox{\vbox to0pt{\hrule height\tableaurule width#1\vss}}}
\def\phantomvrule#1{\vbox{\hbox to0pt{\vrule width\tableaurule height#1\hss}}}
\def\sqr{\vbox{%
  \phantomhrule\tableaustep
  \hbox{\phantomvrule\tableaustep\kern\tableaustep\phantomvrule\tableaustep}%
  \hbox{\vbox{\phantomhrule\tableauside}\kern-\tableaurule}}}
\def\squares#1{\hbox{\count0=#1\noindent\loop\sqr
  \advance\count0 by-1 \ifnum\count0>0\repeat}}
\def\tableau#1{\vcenter{\offinterlineskip
  \tableaustep=\tableauside\advance\tableaustep by-\tableaurule
  \kern\normallineskip\hbox
    {\kern\normallineskip\vbox
      {\gettableau#1 0 }%
     \kern\normallineskip\kern\tableaurule}%
  \kern\normallineskip\kern\tableaurule}}
\def\gettableau#1 {\ifnum#1=0\let\next=\null\else
  \squares{#1}\let\next=\gettableau\fi\next}
\def\({\left(}
\def\){\right)}
\def\ii{{\bf i}}
\def\aa{{\bf a}}
\def\AA{{\bf A}}
\def\lsim{\mathrel{\mathstrut\smash{\ooalign{\raise2.5pt\hbox{$<$}\cr\lower2.5pt\hbox{$\sim$}}}}}
\def\gsim{\mathrel{\mathstrut\smash{\ooalign{\raise2.5pt\hbox{$>$}\cr\lower2.5pt\hbox{$\sim$}}}}}
\def\overleftrightarrow#1{\vbox{\ialign{##\crcr
     $\leftrightarrow$\crcr\noalign{\kern-0pt\nointerlineskip}
     $\hfil\displaystyle{#1}\hfil$\crcr}}}
     \def\overleftarrow#1{\vbox{\ialign{##\crcr
     $\leftarrow$\crcr\noalign{\kern-0pt\nointerlineskip}
     $\hfil\displaystyle{#1}\hfil$\crcr}}}
\newif{\ifeq}           
\newcounter{lecturecounter}
\definecolor{XLgreen}{RGB}{34,139,34}
\definecolor{JMblue}{RGB}{25,25,125}
\definecolor{BSorange}{RGB}{140,50,0}
\def\({\left(} 
\def\){\right)}
\def\[{\left[} 
\def\]{\right]}
\def\fc{\mathfrak{c}_{\mathrm{tot}}}
\def\dune{\mathbb{D}}
\def\Aone{{\bf A1}\xspace}
\def\Azero{{\bf A0}\xspace}
\newcommand{\eqover}[1]{\buildrel{#1}\over{=}}
\def\smooth{\text{smooth}}
\def\corner{\text{corner}}
\def\rr{{\bf r}}
\def\Hrec{H_{\text{rec}}}
\def\rec{\text{rec}}
\def\tot{\text{tot}}
\def\euler{\text{Euler}}
\def\inte{\text{int}}
\def\CFT{\text{CFT}}
\def\p@subsection{}
\begin{document}
\title{\bfseries\Large 
Chiral gapped states are universally non-topological
}
\author{Xiang Li, Ting-Chun Lin, Yahya Alavirad,  
John McGreevy\\[1mm]
\it\small  Department of Physics, University of California San Diego, La Jolla, CA 92093, USA}

\begin{abstract}
We propose an operator generalization of the Li-Haldane conjecture regarding the entanglement Hamiltonian of a disk in a 2+1D chiral gapped groundstate.
The logic applies to regions with sharp corners, from which we derive several universal properties regarding corner entanglement. 
These universal properties follow from a set of locally-checkable conditions on the wavefunction. 
We also define a quantity $(\mathfrak{c}_{\text{tot}})_{\text{min}}$ that reflects the robustness of corner entanglement contributions, and show that it 
provides an obstruction to a gapped boundary.
One reward from our analysis is that we can construct a local gapped Hamiltonian within the same chiral gapped phase from a given wavefunction; we conjecture that it is closer to the low-energy renormalization group fixed point than the original parent Hamiltonian. 
Our analysis of corner entanglement reveals the emergence of a universal conformal geometry encoded in the entanglement structure of bulk regions of chiral gapped states that is not visible in topological field theory. Our formalism also gives an explanation of the modular commutator formula for the chiral central charge. 
\end{abstract}
\maketitle
\tableofcontents
\date{\today}

\section{Introduction}
{\bf Background and motivations.}
In the realm of 2+1D gapped states, there are states that, if put on a disk, have gapless degrees of freedom along the boundary that cannot be gapped by any local perturbation. In the scope of this paper, we regard this feature of ungappable edge for a gapped state as the defining property of \emph{chiral gapped states}, as such states generically break time-reversal symmetry\footnote{This convenient nomenclature is not perfect.  In fact there are states invariant under some action of time-reversal symmetry that do not admit gapped boundary, such as the T-Pfaffian described at the end of \cite{Barkeshli:2016mew}.  Thanks to Maissam Barkeshli for bringing this example to our attention.  We note that our nomenclature also fails in the other direction, i.e.~there are gapped states with gapped boundaries that are not time-reversal invariant for some non-universal reason.}. 
Chiral gapped states possess many other interesting features, such as quantized electric or thermal Hall conductance, no commuting-projector parent Hamiltonian, and no zero-correlation-length renormalization group (RG) representative in a tensor product Hilbert space with finite local dimension. 

Chiral gapped states also have a close relation to conformal field theories (CFTs), known as the bulk/edge correspondence. Here we enumerate some of the connections: First, the gapless boundary is described by a CFT that is anomalous. The anomaly is manifest as modular non-invariance and can be detected by chiral central charge $c_{-}$ or higher central charge \cite{Kaidi:2021gbs, Kobayashi:2023eev}. Such an anomaly prevents the CFT from existing on its own in a tensor product Hilbert space. Second, consider a chiral gapped groundstate $\ket{\Psi}$ on a disk; the low-lying spectrum of the entanglement Hamiltonian $K_A$ (which will be explicitly defined later) of a disk $A$ matches that of the edge CFT Hamiltonian. This was first conjectured by Li and Haldane \cite{Li-Haldane}. Third, the algebraic theory of the infrared (IR) renormalization group (RG) fixed point \cite{Kitaev:2005hzj}, matches the algebraic description of the CFT \cite{Moore:1988uz, Moore:1988ss, Moore:1988qv, Moore:1989yh, Moore:1989vd, Kong:2019byq,Kong:2019cuu}. They both admit a mathematical structure that can be described by unitary modular tensor category (UMTC); this common structure has played an important role in the development of both subjects.  

The motivating question for this paper is: \emph{Can we develop a logical framework to attribute all these properties, which are universal in the phase, to a few locally-checkable entanglement conditions of a given representative state of the phase?} There has been lots of progress in the Entanglement Bootstrap program \cite{Shi_2020, Shi:2019ngw,Shi:2020jxd, Shi:2020rne, knots-paper, Shi:2018bfb, Shi:2018krj, Kim:2021tse, Kim2021, Shi:2023kwr, Lin:2023pvl, Kim:2023ydi,  Kim:2024amo, Vir,kim2024conformalgeometryentanglement,no-go, figure-eight, Kim:2024gtp, Yang:2025pke,Li:2025czz} that shares the same motivation in various physical contexts. 
The idea is to identify and exploit locally-checkable entanglement conditions (which can be called axioms) on a wavefunction that tell us that it represents a given category of state of matter. This amounts to deriving the defining properties of the phase of matter from the axioms. Generally speaking, the benefits from such an investigation are not only a clear logical relation among various universal properties, but also a guarantee of success of many schemes that extract the universal properties from a representative wavefunction (we will give examples later); one can also use the axioms to systematically search for new states \cite{Li:2025czz}.  Philosophically, the axioms should be renormalization group (RG) monotones, in the sense that the violations of the axioms decay under the RG flow. As a result, ultimately, one can explain why these universal properties emerge.

For example \cite{Shi_2020}, in the context of a 2+1D gapped state, based on two locally-checkable conditions in terms of entanglement entropies (known as \Azero and \Aone), one can derive much of the UMTC algebraic description of anyons. Later in \cite{Kim:2024amo, Kim:2024gtp}, it was shown that the string-net classification of non-chiral states can be derived from exact \Azero and \Aone condition in systems with finite local dimension. If one can further prove that \Azero and \Aone are RG monotones (which is a subject in progress called ``robust Entanglement bootstrap'') then we have a full explanation why UMTC or string-net model is the right description for the universal properties of 2+1D non-chiral gapped phases of matter. 

For chiral gapped phases, which lie outside the string-net classification, the origin of the chiral central charge $c_{-}$ from entanglement remains somewhat mysterious. In \cite{Kim:2021tse, Kim2021}, it is proposed and argued that one can extract $c_{-}$ via the modular commutator: 
\begin{equation}\label{eq:modular-commutator-intro}
    J(A,B,C)_{\ket{\psi}} \equiv \ii \langle \Psi| [K_{AB},K_{BC}] |\Psi\rangle = \frac{\pi}{3} c_{-},
\end{equation}
where $A,B,C$ are shown in Fig.~\ref{fig:modular-commutator-intro} (a). 
How to show that such $c_{-}$ satisfies the constraints from both UMTC and the chiral CFT is unknown. For example, for a chiral bosonic topologically-ordered system, it is known that $c_{-}$ is constrained by the quantum dimensions $d_{\mathfrak{a}}$ and topological spins $\theta_{\mathfrak{a}}$ of the anyons $\mathfrak{a}$ by the Gauss-Milgram equation $e^{2\pi \ii c_{-}/8} = \CD^{-1}\sum_{\mathfrak{a}}d_{\mathfrak{a}}^2 \theta_{\mathfrak{a}}$, where $\CD = \sqrt{\sum_{\mathfrak{a}} d_{\mathfrak{a}}^2}$ is the total quantum dimension \cite{Kitaev:2005hzj}. Can we show that $c_{-}$ in Eq.~\eqref{eq:modular-commutator-intro} satisfies this relation based on a few locally checkable entanglement conditions?

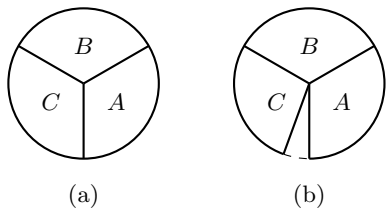
\begin{figure}[!h]
    \begin{tikzpicture}
        \draw[thick] (0,0) circle (1); 
        \draw[thick] (0,0) -- (30:1);
        \draw[thick] (0,0) -- (150:1);
        \draw[thick] (0,0) -- (270:1);
        \node at (-30:0.5) {$A$}; 
        \node at (90:0.5) {$B$};
        \node at (210:0.5) {$C$}; 
        \node at (0,-1.5) {(a)};

        \draw[thick] (3,0) ++ (-90:1) arc (-90:250:1);
        \draw[dashed] (3,0) ++ (250:1) arc (250:270:1);
        \draw[thick] (3,0) -- ++(30:1);
        \draw[thick] (3,0) -- ++(150:1);
        \draw[thick] (3,0) -- ++(270:1);
        \draw[thick] (3,0) -- ++(250:1);
        \node at ($(3,0) + (-30:0.5)$) {$A$}; 
        \node at ($(3,0) + (90:0.5)$) {$B$};
        \node at ($(3,0) + (210:0.5)$) {$C$}; 
        \node at ($(3,0) + (0,-1.5)$) {(b)};
    \end{tikzpicture} 
    \caption{Modular commutator on a ``complete'' and ``incomplete'' disk in (a) and (b) respectively.}
    \label{fig:modular-commutator-intro}
\end{figure} 

Motivated by these questions, it is desirable to formulate a bulk/edge correspondence, which connects the algebraic relations of the bulk entanglement Hamiltonians to the algebra of CFT. In this paper, we propose a hypothesis, namely the \emph{operator} bulk/edge correspondence, that serves this purpose. The role of the bulk/edge correspondence hypothesis can be briefly summarized in the following two aspects. (1) It is an explicit physical picture in which one can have a clear understanding of the universal properties in the entanglement entropies and entanglement Hamiltonians, including the various schemes such as Eq.~\eqref{eq:modular-commutator-intro} that extract universal data of the phase. (2) Furthermore, the operator bulk/edge correspondence is a navigation map towards the goal of building a logical framework that proves the universal properties discussed before from a set of local entanglement conditions. In such a framework, the operator bulk/edge correspondence is not assumed explicitly, but is the underlying rationale. In the rest of the introduction, we will elaborate more on these two aspects. 

{\bf Operator bulk/edge correspondence.}
We first set the stage. In this context of gapped chiral states in 2+1 dimensions on a lattice, there are three regimes of length scales associated with the entanglement properties. At a length scale $l \sim a$, the lattice spacing, one can change the entanglement by operations such as tensoring our state with Bell pairs or acting with a finite-depth unitary; this is clearly non-universal information and we refer to such a scale as the ``non-universal regime''. 
At length scales $l \gg \xi$, the correlation length, we have the familiar universal physics of topological quantum field theory (TQFT). Since the physics at this scale is independent of the metric, we refer to this scale as the ``topological regime''. This is also the length scale over which \Azero and \Aone hold. This in fact can also be regarded as an intrinsic definition of $\xi$ from the state. The surprising point we want to emphasize here is that there is also universal physics to be found in the regime $ a \ll l < \xi $. We name this as the ``corner regime'' because it is inevitably involved in the study of sharp corner. Notice that, if we demand a finite-dimensional local Hilbert space, the existence of such a regime is guaranteed by the no-go theorem of \cite{no-go}. 
A limit with infinite-dimensional local Hilbert space and zero correlation-length hides this important regime inside a single site. Entanglement properties in such a regime are short-ranged, but should not be treated the same as the regime of $l \sim a$. In fact, as we will show, many universal properties in the topological regime, such as Eq.~\eqref{eq:modular-commutator-intro}, have their ``roots'' in the entanglement properties in the corner regime. 

We apply the operator bulk/edge correspondence to study the universal properties of entanglement entropies and entanglement Hamiltonians of subregions. The rewards are the following:
\begin{itemize}
\item In the topological regime (meaning for regions with boundaries that are smooth on the scale of the correlation length) [Section~\ref{sec:bulk-edge}], we first obtain two conjectured parent Hamiltonians of the chiral gapped phase, reconstructed from a representative state. Consider a chiral gapped groundstate $\ket{\Psi}$ of a local Hamiltonian $H$ on a disk $D$. We obtain the reconstructed Hamiltonian density by decomposing the entanglement Hamiltonian $K_A = \sum_{v \in A}h^{\rec}_v$ of a disk $A$ inside the bulk. $K_A$ and $H$ have the same low-lying spectrum (up to a rescaling and a shift), which matches the low-lying spectrum of the edge CFT. Therefore, it is plausible to conjecture that $H_{\rec} = \sum_{v \in D}h^{\rec}_v$ (i.e.  extending the sum into the whole $D$) can be regarded as a Hamiltonian in the same phase. With $h_v^{\rec}$ near the entanglement boundary $\partial A$, we can also obtain a 1d Hamiltonian that describes the edge CFT, and matches the form of the reconstructed Hamiltonian for a CFT groundstate in \cite{Lin:2023pvl}. 
We note that \cite{Kim:2024amo} found a local parent Hamiltonian for exact {\bf A1} states, which is similarly made from local entanglement Hamiltonians, but which is a sum of commuting projectors; we do not expect this construction to be directly applicable for chiral gapped states.

\item The groundstate of the reconstructed Hamiltonian is observed to be closer than the input groundstate to the zero-correlation-length fixed point in the phase, as numerically tested in Section~\ref{sec:num}. Hence, one can iterate the procedure ``find groundstate -- reconstruct Hamiltonian -- find groundstate -- reconstruct Hamiltonian...'' to drive the state closer to the fixed point. The benefit of doing so is that the new state will have a smaller finite size error without changing the system size.

\item We apply the operator bulk/edge correspondence to study entanglement of a region $A$ with corners. \emph{A corner in $A$ is a region near $\partial A$ where the radius of curvature is smaller than the correlation length;} note that this definition applies even if the system is defined on a lattice. Based on the definition, it is inevitable that the entanglement in the intermediate regime $a \ll l < \xi$ is involved. 
In a chiral gapped state, the entanglement entropy $S_A$ of such a region must contain a corner contribution, by which we mean the contribution to the entanglement entropy from the degrees of freedom of the corner\footnote{We will give an explicit definition in the main text.}. With such a corner contribution $f(\theta)$, which is a function of the opening angle $\theta$, one can write the entanglement entropy as
\begin{align}\label{eq:SA-corner-intro}
    S_A = \alpha |\partial A| - \gamma + f(\theta) + \cdots,
\end{align}
where $|\partial A|$ is the circumference of $A$ with a non-universal coefficient $\alpha$, and $\gamma$ is a universal quantity of the phase known as the topological entanglement entropy (TEE). 
We note that there exist fine-tuned states in a gapped phase for which the $\gamma$ in Eq.~\eqref{eq:SA-corner-intro} contains spurious contributions; this usually results from the presence of a subsystem symmetry (e.g. \cite{Williamson:2018zig}), but not always \cite{Kato:2019cdi}. Such examples are ruled out by the Entanglement Bootstrap axioms \Azero and \Aone, which constrain the TEE to be $\gamma = \log\CD $.
One virtue of the Entanglement Bootstrap axioms is that they rule out such spurious contributions to the TEE.
Similar considerations apply to the extraction of $c_-$ by modular commutator.
The $\cdots$ in Eq.~\eqref{eq:SA-corner-intro} denotes subleading terms which decrease as one increases $|\partial A|$ (or decreases the correlation length). Such a decaying subleading term must exist as shown in \cite{no-go}. 

\end{itemize}

We pause to comment on our use of the word ``universal'' in the previous paragraph. By acting with a finite-depth local unitary circuit (and thus remaining in the same phase of matter), it is possible to change the value of $\gamma$. 
The spurious TEE contributions mentioned earlier are examples of such scenarios.
However, \cite{Kim:2023ydi} showed that by this process, the value of $\gamma$ can only be {\it increased} relative to its value at the zero correlation fixed-point state satisfying the {\bf A1} condition. Here we say $\gamma$ in Eq.~\eqref{eq:SA-corner-intro} contains a universal piece $\log \CD$ in the sense that $\gamma = \log \CD + a'$ where $a'\ge 0$ vanishes as the state approaches the zero-correlation-length RG fixed point.\footnote{Note such $a'$ does not decay away by increasing the size of $A$, and hence does not belong to $\cdots$ in Eq.~\eqref{eq:corner-universal-intro}.} 
Later we are going to use the word ``universal'' in a similar way in the discussion of $f(\theta)$. 

One way to see that such an $f(\theta)$ has to be present is as follows. Consider a chiral gapped state $\ket{\Psi}$ on a disk with the partition in Fig.~\ref{fig:circles}.
\begin{figure}[!h]
    \begin{tikzpicture}
        \draw[thick] (-2.2, -1.2) -- (2,-1.2) -- (2,1.2) -- (-2.2,1.2) -- cycle; 
        \draw[thick] (0,0) circle (1); 
        \draw[thick] (-1,0) circle (1); 
        \node at (-1.5,0) {$1$};
        \node at (-.5,0) {$2$};
        \node at (.5,0) {$3$};
        \node at (1.5,0) {$4$};
        \node at (1.5,0.8) {$\ket{\Psi}$};
    \end{tikzpicture}
\caption{Partition of a disk into four regions.}
\label{fig:circles}
\end{figure}
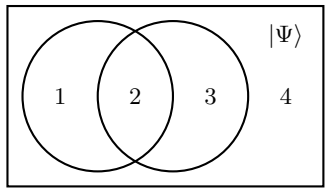
Suppose, by way of contradiction, that for an arbitrary region $A$, the state satisfies 
\be S_A = \alpha |\partial A| - \gamma_\text{topo},\ee
with a region-independent constant $\gamma_{\text{topo}}$; this is sometimes called a ``strict area law''. This implies that $I(i-1:i+1|i) = 0$, $i=1..4$\footnote{Regarding the notation, $I(A:C|B) \equiv S_{AB}+S_{BC}-S_{ABC}-S_B$ and $i+1,i-1$ should be understood modulo 4.}. A combination of arguments from \cite{Zou:2020bly} and \cite{Siva:2021cgo} (which we summarize in Appendix \ref{app:siva}) then implies that the state admits a gapped boundary and hence we find a contradiction. Notice that the regions in Fig.~\ref{fig:circles} necessarily have sharp corners.  
One way in which the conclusion can be evaded is the presence of a corner-dependent term in Eq.~\eqref{eq:SA-corner-intro}, instead of the strict area law\footnote{The $\cdots$ term will not be enough to evade this conclusion because it decays away as one increases the subsystem size.}.

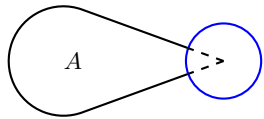
\begin{figure}[!h]
  \begin{tikzpicture}
    \draw[thick] (0,0) -- (180-20:2) arc [start angle=180-20-90, end angle=180+20+90, radius={2*tan(20)}] -- cycle; 
    \filldraw[fill=white, draw=blue, thick] (0,0) circle (0.5); 
    \draw[thick, dashed] (0,0) -- (180-20:0.5);
    \draw[thick, dashed] (0,0) -- (180+20:0.5);
    \node at (-2,0) {$A$};  
  \end{tikzpicture}
  \caption{A region $A$ with a sharp corner (dashed line), regulated by a hole. We will show that it is useful to envision a boundary CFT groundstate living on the boundary of the hole (blue line).}
  \label{fig:corner-intro}
\end{figure}

In [Section~\ref{sec:hypothesis}], we discuss the application of the operator bulk/edge correspondence to regions with sharp corners. In particular, we propose a regularization of a sharp corner by a hole [Fig.~\ref{fig:corner-intro}]. The edge CFT lives on the boundary of the hole. As we will discuss further, one can imagine that the hole 
is made by a local unitary as in \cite{Siva:2021cgo}, or that it was there already, as in \cite{Sopenko:2023utk}. In general, one can also argue the validity of this regulation from the Li-Haldane conjecture. As the Li-Haldane conjecture is believed to be universally satisfied across the chiral gapped phase, such a regularization scheme is also generically applicable. 
This picture has the following consequences [Section~\ref{sec:universal-corner-contributions}]. 

First, in the corner contribution $f(\theta)$ in Eq.~\eqref{eq:SA-corner-intro}, there exists a universal piece that is of the same form as the groundstate entanglement entropy of an interval of the 1d edge CFT on a circle. Explicitly,
\begin{equation}\label{eq:corner-universal-intro}
     f(\theta) = f_{\text{univ}}(\theta) + c', \text{ with }f_{\text{univ}}(\theta) = \frac{c_{\tot}}{6} \log(\sin(\theta/2)), 
\end{equation}
where $c_{\tot}$ is the minimal total central charge of the edge CFT, $\theta$ is the opening angle of the corner and $c'$ is the non-universal piece from the short-range entanglement in regime $l \sim a$. 
Corner contributions to the entanglement entropy in chiral gapped states were first identified by \cite{Rodriguez:2010dm}, and further studied by \cite{Sirois:2020zvc,Rozon:2019evk}. The specific form of $f_{\text{univ}}(\theta)$ in Eq.~\eqref{eq:corner-universal-intro} was obtained in \cite{Liu:2023pdz,Liu:2024ulq} using arguments of topological quantum field theory.
It also agrees with the numerical results for integer quantum Hall states studied by \cite{Rodriguez:2010dm, Sirois:2020zvc} when $\theta$ is not small\footnote{In the small angle limit $\theta \to 0$ studied in \cite{Rozon:2019evk,Sirois:2020zvc}, their results show $f(\theta) \sim - 1/\theta$, which does not match Eq.~\eqref{eq:corner-universal-intro}. First of all, $\theta \to 0$ is a very ultraviolet regime, whereas we anticipate that our formula works when $\theta \xi \gg \xi$. Moreover, our definition of corner contribution, which will be given explicitly in the main text, is different from the one in \cite{Rozon:2019evk,Sirois:2020zvc}. In fact, as explained in \cite{Liu:2022mop}, following their definition, the $-1/\theta$ behavior can be understood as a consequence of having a diverging length of the entanglement boundary which is moreover colliding in the limit $\theta \to 0$. We thank Meng Cheng for discussions of this apparent discrepancy.}. 

Here the meaning of ``universal'' of Eq.~\eqref{eq:corner-universal-intro} is similar to that of the TEE in Eq.~\eqref{eq:SA-corner-intro} discussed earlier. For a region finite size region $A$ with a corner, there are two sources of errors: the finite size error and spurious contributions. The finite size error can be reduced by increasing the subsystem size $|A|$. $f(\theta)$ could include a spurious corner contribution that depends on $\theta$, which cannot be reduced simply by increasing $|A|$. For example, one could imagine stacking arbitrary ``garbage'' in the corner region and the corner contribution will be changed. Nevertheless, just as in the result of the universal lower bound of TEE shown in \cite{Kim:2023ydi}, spurious contribution can only increase the corner entanglement and there is a sense that $f_{\text{univ}}(\theta)$ is the ``minimal'' contribution over states related by a finite depth quantum circuit. Explicitly, the word ``minimum'' is reflected by the $c_{\text{tot}}$. That is, for any corner contribution, one can formally write $f(\theta) = \frac{c_{\text{eff}}}{6} \log(\sin(\theta/2))$, where $c_{\text{eff}}$ might also be a function of $\theta$. The $f_{\text{univ}}(\theta)$ is ``minimal'' in the sense that $c_{\text{eff}} \geq c_{\text{tot}}$. 

Moreover, we can have a direct explanation of various topological quantities extracted from entanglement. For example, based on the bulk/edge correspondence, \cite{Kitaev:2005dm} already explains the origin of the TEE term using the CFT partition function. In the same spirit, from the operator bulk/edge correspondence with the corner regulation, we will explicitly derive the relation between the modular commutator and chiral central charge of the edge CFT [Eq.~\eqref{eq:modular-commutator-intro}]. In addition, we can use this tool to obtain a formula for the modular commutator for an ``incomplete disk'' [Fig.~\ref{fig:modular-commutator-intro} (b)], explaining some numerical observations of \cite{Fan:2022ipl}. 

{\bf A logical framework for corner entanglement.}
Now we switch gears to discuss the logical framework [Section~\ref{sec:logical-framework}]. The goal is to attribute the results about universal properties discussed earlier to a set of local entanglement conditions (axioms) on a quantum state $\ket{\Psi}$. In this context, we do \emph{not} assume the operator bulk/edge correspondence.

To state the axiom of corner entanglement, we first define $\fc(\ket{\Psi})$. It is a function of the state that, under the \Aone axiom, measures the amount of entanglement near a corner. $\fc$ can be used as a quantum-information-theoretic diagnostic for edge ungappability from the bulk wavefunction. We will show [Section~\ref{sec:diagnostic-for-ungappable}] that with a representative state of a gapped phase, $\{\fc\}_{\min} > 0$ if and only if the edge is ungappable (i.e. the gapped phase is chiral), where the minimal value of $\fc$ is taken across the states connected to the given representative by a local unitary around the corner. This is an alternative way of detecting the ungappability of the edge in a similar spirit to \cite{Siva:2021cgo}. The intuition is that among the states $\{\ket{\Phi}\}$ in a chiral gapped phase, $\fc(\ket{\Phi})$ has a non-zero minimal value $\fc(\ket{\Phi^\star})$. Under the corner regulation picture, the minimum is equal to the minimal total central charge of the edge CFT in the IR limit. 

Moreover, $\fc(\ket{\Phi})$ is stationary at the state $\ket{\Phi^\star}$ that achieves the minimum. Motivated by this, our axiom for $\ket{\Psi}$ is: $\fc(\ket{\Psi})$ is stationary. This stationarity condition can be locally checked by a vector fixed-point equation, whose relation to the one for 1+1D CFT \cite{Lin:2023pvl} and 2+1D chiral edge \cite{kim2024conformalgeometryentanglement} is manifest through the operator bulk/edge correspondence around the corner. 

The stationarity condition plays the following important roles in studying the universal corner entanglement. (1) As a starter, it can rule out the spurious corner contributions mentioned earlier, because $\fc$ of that state depends on $\xi$ with non-vanishing first derivative, and one can always tune the correlation length of the state by an adiabatic evolution. (2) Moreover, one can derive Eq.~\eqref{eq:corner-universal-intro} with $c_{\tot}$ replaced by $\fc$ of the given state. In addition, as we will explain, the opening angle $\theta$ is measured intrinsically from the entanglement. This in fact is a better measure of the opening angle, especially in the cases where the system is on a lattice or the rotational symmetry around the corner is unclear. (3) Furthermore, with the stationarity condition, we can derive the formula for modular commutator of the incomplete disk in Fig.~\ref{fig:modular-commutator-intro}(b). In summary, the stationairty condition results in a conformal geometry of corners -- a universal and intrinsic measurement of the angle of a corner modulo global conformal transformation. We will numerically verify these axioms as well as their logical conclusions [Section~\ref{sec:num}].

We end this introduction by posing a question 
that explains the title of the paper:
Does a TQFT encode all the universal properties of a chiral gapped groundstate? 
The TQFT description is valid for length scales $l \gg \xi$, in which the universal physics is independent of the metric of the space on which it lives. In this paper, we show that there is still universal physics at length scales $a \ll l < \xi$. In such a regime, there is a universal dependence of entanglement information on the conformal geometry of a metric near a corner --  it allows us to extract the chiral central charge with Eq.~\eqref{eq:modular-commutator-intro}; one can also obtain the minimal total central charge and detect edge ungappability. All of these properties are independent of any UV regulators. These results suggest that, for the purpose of understanding all the universal properties of a chiral gapped groundstate, it is perhaps not enough to only utilize TQFT. 

{\bf Organization.}
The paper is organized as follows: 

\S\ref{sec:bulk-edge} proposes an operator version of the Li-Haldane conjecture relating the entanglement Hamiltonian of a bulk disk in a chiral gapped state to its edge conformal field theory. We also propose several forms of reconstructed Hamiltonians within the same chiral gapped phase.

The purpose of \S\ref{sec:hypothesis} is to extend this hypothesis about the entanglement Hamiltonian in a chiral gapped state to regions with sharp corners.  
The key idea is that the bulk groundstate can be viewed as being filled with holes.  In \S\ref{sec:universal-corner-contributions} we use the picture established in the preceding sections to derive a number of concrete predictions about the entanglement structure of chiral groundstates.  
In particular, we derive the universal corner contribution to the entanglement entropy \eqref{eq:corner-universal-intro}, and explain how to isolate it from non-universal angle-dependent contributions. We also propose a vector fixed point equation, similar to the one in a 1+1D CFT groundstate \cite{Lin:2023pvl} and near a gapless edge of a 2+1D chiral gapped state \cite{Vir, kim2024conformalgeometryentanglement}. Finally, we use this picture to derive formulae for the modular commutators, consistent with results of \cite{Zou:2022nuj}.

In \S\ref{sec:logical-framework} we encapsulate these results into a logical framework, where we regard the stationarity of $\fc$ as the axiom and derive the rest of the universal properties of corners. 
This allows the extraction of a conformal geometry in analogy with \cite{kim2024conformalgeometryentanglement} that measures the sharp corners in a universal and intrinsic way. Then in \S\ref{sec:diagnostic-for-ungappable}, we define a quantity $(\fc)_{\text{min}}$ that can be interpreted as a minimal total central charge of the edge and is a diagnostic of edge ungappability like the one proposed in \cite{Siva:2021cgo}.

Finally, in \S\ref{sec:num} we provide numerical verifications of our results. 
One result from that section that we highlight here is a method to reconstruct a parent Hamiltonian from the groundstate wavefunction. A similar task is also accomplished in the context of 1+1D CFT \cite{Lin:2023pvl}. We show that the iterated process of reconstructing the Hamiltonian and finding its groundstate decreases the error of all of our RG fixed-point diagnostics.  

Appendix \ref{app:siva} contains a reminder of a sufficient condition for a gapped boundary found in \cite{Zou:2020bly} and \cite{Siva:2021cgo}. Appendix \ref{app:Hrec} explains the relationships between the entanglement Hamiltonian of a disk and various forms of the reconstructed Hamiltonians that follow from the bulk {\bf A1} condition. 
Appendix \ref{app:TT} contains the details of our calculations of modular commutators using the edge CFT.  
Several other appendices sequester details best enjoyed in a dark, quiet room.

\section{Operator bulk/edge correspondence hypothesis}

\label{sec:bulk-edge}
In this section, we discuss an explicit bulk/edge correspondence hypothesis in the context of a generic chiral gapped state $\ket{\Psi}$. The original bulk/edge correspondence, conjectured by Li and Haldane \cite{Li-Haldane}, refers to a matching of two spectra, namely, the low-lying spectrum of the entanglement Hamiltonian $K_A = -\ln \rho_A$ of a reduced density matrix $\rho_A = \Tr_{\overline{A}}\ketbra{\Psi}{\Psi}$ matches the low-lying spectrum of the edge chiral CFT. The precise statement will be provided later. Here we propose an \emph{operator} bulk/edge correspondence, namely that one can construct a sum of local operators from the bulk entanglement Hamiltonians which can be regarded as the edge Hamiltonian density. The operator bulk/edge correspondence is a universal description of the operator content in $K_A$.

\subsection{Preliminaries: bulk entropic conditions and locality of entanglement Hamiltonians}
\phantomsection\label{subsec:local-ent-hmt}
To begin, we set the context and discuss the EB axioms for liquid topological orders, which are two conditions on the entanglement entropies of local regions in the bulk. In a gapped phase, these two axioms are approximately satisfied with errors that decay as the states approach the zero-correlation-length RG fixed point, where it is believed that the state can be described by a TQFT, or pretty much equivalently, a UMTC \cite{Kitaev:2005hzj}. In this section, we mainly use their implication that one can decompose a entanglement Hamiltonian of a disk in the bulk into a sum over local operators. 

Throughout this paper, we will focus on a quantum state $\ket{\Psi}$ on a disk. The microscopic degrees of freedom live on a lattice, and the Hilbert space $\CH$ is a tensor product of local Hilbert spaces $\CH_v$ on each site $v$, $\CH = \otimes_{v}\CH_v$. The state $\ket{\Psi}$ can be thought of as a groundstate of a chiral gapped local Hamiltonian with a correlation length $\xi$.  

Given a region $A$, which is a collection of sites, the entanglement between $A$ and its complement is described by the local reduced density matrix $\rho_A \equiv \Tr_{\overline{A}}\ket{\Psi}\bra{\Psi}$. It can be written as $\rho_A = e^{-K_A}$ where $K_A$ is called the entanglement Hamiltonian. 

We posit two local conditions concerning the entanglement entropies $S(A) \equiv - \Tr \rho_A \log \rho_A$ among a set of local regions, namely \Azero and \Aone. They are two crucial axioms of Entanglement bootstrap that allow one to derive various universal properties of gapped phases of matter \cite{Shi_2020, Shi:2019ngw,Shi:2020jxd, Shi:2020rne, knots-paper, Shi:2018bfb, Shi:2018krj, Shi:2023kwr, Lin:2023pvl,Kim:2023ydi, figure-eight,  kim2024strict, Kim:2024gtp}. The statement of \Azero and \Aone is: 

\begin{figure}[thb]
    \centering
    \begin{tikzpicture}
        \draw[thick] (0, 0) circle (1);
        \draw[thick] (0, 0) circle (0.5);
        \draw[thick] (3, 0) circle (1); 
        \draw[thick] (3, 0) circle (0.5);
        \draw[thick] (3,0.5) -- (3,1); 
        \draw[thick] (3,-0.5) -- (3,-1); 
        \node at (-0.75,0) {$B$}; 
        \node at (0,0) {$C$}; 
        \node at (3,0) {$C$}; 
        \node at (2.25,0) {$B$};
        \node at (3.75,0) {$D$};
        \node at (0, -1.5) {(a)}; 
        \node at (3, -1.5) {(b)}; 
    \end{tikzpicture}
	\caption{\Azero and \Aone partitions. The linear size of the regions are larger than the correlation length $\xi$.}
	\label{fig:A0-A1}
\end{figure}
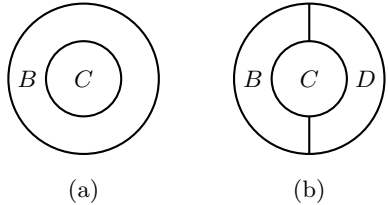

\begin{assumption}[\Azero]
    \label{assumption:A0}
    For any disk-like region in the bulk with a partition $BC$ topologically equivalent to the one in Fig.~\ref{fig:A0-A1} (a), where the typical linear sizes of $B,C$ are larger than $\xi$, we assume 
    \begin{align}
        \label{eq:A0}
         \Delta(B,C)\equiv (S_{BC} + S_C - S_B)_{\ket{\Psi}} \approx 0. 
    \end{align}
\end{assumption}
\begin{assumption}[\Aone]
    \label{assumption:A1}
    For any disk-like region in the bulk with a partition $BCD$ topologically equivalent to the one in Fig.~\ref{fig:A0-A1} (b), where the typical linear sizes of $B,C,D$ are larger than $\xi$, we assume 
    \begin{align}
        \label{eq:A1}
         \Delta(B,C,D) \equiv (S_{BC} + S_{CD} - S_B - S_D)_{\ket{\Psi}} \approx 0. 
    \end{align}
\end{assumption}
These two conditions are approximately satisfied by generic gapped states due to the area law entanglement property. That is, for a region $A$, the entanglement entropy is 
\begin{align}\label{eq:area-law}
    S(A) = \alpha |\partial A| - n\gamma + f_{\text{corner}} + \cdots
\end{align}
where the leading order is the circumference of $A$, denoted as $|\partial A|$, with a non-universal prefactor $\alpha$, the subleading terms include the universal topological entanglement entropy (TEE) $\gamma$ multiplied by the number of disconnected boundaries $n$ and $f_{\text{corner}}$ for the corner contributions. The $\cdots$ term includes the possible corner contributions and the other subleading terms which decays as the size of $A$ increases. The \Azero and \Aone combination is designed such that the the leading order boundary term, TEE term and the corner term is canceled. 

We should comment on the meaning of the $\approx$ symbol in in Eq.~\eqref{eq:A0} and Eq.~\eqref{eq:A1}. For a chiral gapped state on a lattice with finite local dimensions, as shown in \cite{no-go}, the \Aone combination in Eq.~\eqref{eq:A1} will not vanish exactly, but with a small violation. We also believe the same for \Azero combination in Eq.~\eqref{eq:A0}. The source of the violation is from the $\cdots$ term in Eq.~\eqref{eq:area-law} which decays as the subsystem size increases. As a result, the violation of the \Azero and \Aone condition, namely $\Delta(B,C)$ and $\Delta(B,C,D)$, shall decay as a function of the size of the regions, as numerically demonstrated in \cite{Vir}. Once the linear sizes are larger than $\xi$, one can ignore the error and replace $\approx$ with $=$ in Eq.~\eqref{eq:A0} and Eq.~\eqref{eq:A1}. This can also be regarded as the definition of our $\xi$. We also assume that the logical conclusions derived from \Azero and \Aone with the error ignored are applicable. Although there are unphysical pathological counterexamples \cite{2018arXiv180205477S} to this assumption, regarding to the Markov conditions that we will discuss momentarily, we believe a physical system shall have continuous behavior under scaling and therefore it is safe to make such an assumption. One can think of the statements we are about to make, following from \Azero and \Aone with error ignored, as statements for the physics in the scale $l/\xi \to \infty$. In the following, to avoid clutter, we shall just regard the $\approx$ as an equal sign in Eq.~\eqref{eq:A0} and Eq.~\eqref{eq:A1}, as well as all the equations resulting from them that we will introduce below. 

What are implications of \Azero and \Aone? On the high level of the physical picture, these two assumptions imply that, on length scales larger than $\xi$, the state behaves as a renormalization group (RG) fixed-point representative of the chiral gapped phase. This statement is reflected in the following two technical properties: 
\begin{itemize}
    \item Vanishing correlations: If \Azero is satisfied, then for any region $A$ buffered with an annulus $B$ from any region $C$ as shown in Fig.~\ref{fig:implication-from-A0-A1} (a), we have 
    \begin{align}
        I(A:C)_{\ket{\Psi}} = 0 \quad \Leftrightarrow \quad \rho_{AC} = \rho_A \otimes \rho_C \quad \Leftrightarrow \quad K_{AC} = K_A + K_C. 
    \end{align} 
    This is saying that, any regions $A$ and $C$ separated by a distance larger than $\xi$ can be thought of as having no correlations. 
    \item Local recoverability: If \Aone is satisfied, then for any region $A$ buffered with a disk $B$ from any region region $C$ as shown in Fig.~\ref{fig:implication-from-A0-A1} (b), we have \cite{Petz:2002eql}
    \begin{align}
        I(A:C|B)_{\ket{\Psi}} = 0 \quad &\Leftrightarrow \quad \rho_{ABC} = \rho_{BC}^{1/2}\rho_B^{-1/2} \rho_{AB} \rho_B^{-1/2}\rho_{BC}^{1/2} \label{eq:petz} \\ 
        \quad &\Leftrightarrow \quad K_{ABC} = K_{AB} + K_{BC} - K_B. \label{eq:Markov-decompose}  
    \end{align}  
    Here Eq.~\eqref{eq:petz} is the Petz recovery map and we shall refer to Eq.~\eqref{eq:Markov-decompose} as a Markov decomposition. This result is saying that one can obtain the entanglement of a larger region encoded in $\rho_{ABC}$ from the entanglement information of smaller regions encoded in $\rho_{AB},\rho_{BC}$. Therefore, on the intuitive level, one can conclude that the ``information'' encoded in regions with length scale $a \xi$ is the same as the ``information'' encoded in regions with length scale $b \xi$, with $b > a >1$, which is a reflection of the scale invariant and hence RG fixed point.
\end{itemize}
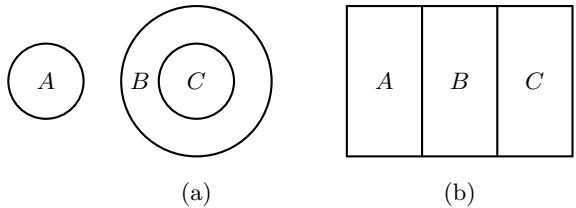
\begin{figure}[htb]
    \centering
    \begin{tikzpicture}
        \draw[thick] (0, 0) circle (1);
        \draw[thick] (0, 0) circle (0.5);
        \draw[thick] (-2, 0) circle (0.5);
        \node at (-2,0) {$A$}; 
        \node at (-0.75,0) {$B$}; 
        \node at (0,0) {$C$}; 
        \node at (0, -1.5) {(a)}; 

        \draw[thick] (2,-1) -- (5, -1) -- (5, 1) -- (2, 1) -- cycle; 
        \draw[thick] (3, -1) -- (3, 1); 
        \draw[thick] (4, -1) -- (4, 1); 
        \node at (2.5,0) {$A$}; 
        \node at (3.5,0) {$B$};
        \node at (4.5,0) {$C$};
        \node at (3.5, -1.5) {(b)}; 
    \end{tikzpicture}
	\caption{Implications from \Azero and \Aone conditions.}
	\label{fig:implication-from-A0-A1}
\end{figure}

An immediate consequence of Eq.~\eqref{eq:Markov-decompose} is that one can do Markov decomposition of a entanglement Hamiltonian $K_D$ of a disk $D$ whose linear size is much larger than $\xi$ \cite{Kim:2021tse}. Inside the disk $D$, we first group the sites into supersites, as shown in Fig.~\ref{fig:sum-Deltas-bdy} where each plaquette is a supersite. We demand the linear size of each supersite is much larger than $\xi$ and one can apply \Azero and \Aone by treating a supersite as a unit site. Then one can apply the Markov decomposition Eq.~\eqref{eq:Markov-decompose} to $K_D$ and obtain \cite{Kim:2021tse}
\begin{align}
    \label{eq:KD-decompose}
    K_D = \sum_{f\in D} K_f - \sum_{e \notin D_{\partial}} K_e + \sum_{s \notin  D_{\partial}} K_s, 
\end{align}
where $f$ stands for a connected 3-supersite, $e$ stands for a connected 2-supersite and $s$ is a single supersite illustrated in Fig.~\ref{fig:sum-Deltas-bdy}. $D_{\partial}$ stands for the layer along the entanglement boundary with 1-supersite thickness. We also define $D_{\inte} \equiv D \backslash D_{\partial}$. We remark that each supersite is \emph{larger} than the minimal length scale where \Azero and \Aone are applicable, for reasons that will be clear later.

\begin{figure}[htb]
    \centering
    \includegraphics[width=0.5\textwidth]{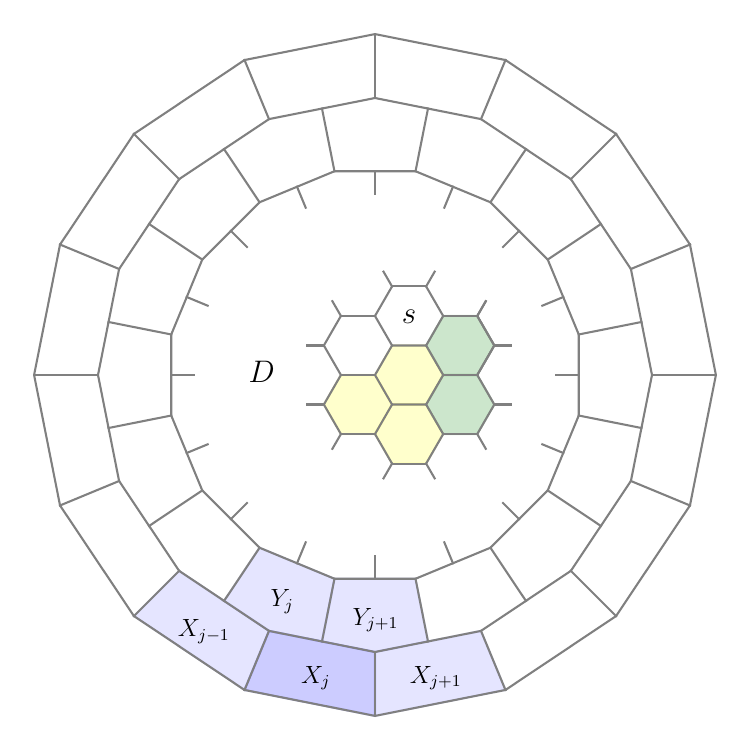}
	\caption{Decomposition of $K_D$ and the support of $K_{D_{\partial}}$ along the entanglement boundary. Each plaquette represents a coarse-grained site (a supersite). The yellow and green regions are examples of a 3-supersite and a 2-supersite, which are denoted as $f$ and $e$ respectively. Notice inside the bulk, the degrees of the supersites are not all equal to 6. $D_{\partial}$ is outer layer $X_1\cdots X_N$, and $D_{\inte} = D \setminus D_{\partial}$.}
	\label{fig:sum-Deltas-bdy}
\end{figure}

The decomposition of $K_D$ in Eq.~\eqref{eq:KD-decompose} can also be written as 
\begin{equation}\label{eq:KD-hrec}
    K_D = \sum_{s \in D_{\inte}} h_s^{\rec} + \sum_{s \in D_{\partial}} h_s^{\partial,\rec} 
\end{equation}
where 
\begin{equation}
    \label{eq:hrec}
     h_s^{\rec} = \frac{1}{2} \overline{\hat{\Delta}_s} - \frac{6-d_s}{6} \overline{\hat{\gamma}_{s}},\quad h_s^{\partial, \rec} = \frac{1}{2} \overline{\hat{\Delta}_s^{\partial}}. 
\end{equation}
Sometimes, we wish to discuss the contribution in the bulk $D_{\inte}$ and boundary $D_{\partial}$ separately. So we define 
\begin{align}\label{eq:KD-separate}
     K_{D_{\inte}} \equiv \sum_{s \in D_{\inte}} h_s^{\rec} ,\quad 
     K_{D_{\partial}} &\equiv \sum_{s \in D_{\partial}} h_s^{\partial,\rec}.
\end{align}
Let us pause to explain the notation. One important object is the operator version of the \Aone combination: 
\begin{align}
    \hat{\Delta}(B,C,D) \equiv K_{BC}+K_{CD} - K_B - K_D. 
\end{align}
We will call $C$ the center region and $BD$ the buffer region. In Eq.~\eqref{eq:hrec}, the operator $\overline{\hat{\Delta}_s}$ denotes the averaging over all possible $\hat{\Delta}$ whose center is $s$ and the buffer is an annulus with thickness of one supersite: 
\begin{equation}
    \overline{\hat{\Delta}_s} = \frac{1}{N_s} \sum_{(B,D)} \hat{\Delta}(B,s,D), 
\end{equation}
where we consider all possible partitions of buffer into two disks $B,D$ and $N_s$ is the number of the partitions. See App.~\ref{app:Hrec} for a more explicit expression. Near the entanglement boundary $\partial D$ (sometimes just denoted as $\partial$ if there is no confusion), the $\overline{\hat{\Delta}_{X_j}^{\partial}}$ centered at $X_j$ is defined as 
\begin{align}
    \overline{\hat{\Delta}_{X_j}^{\partial}} \equiv \frac{1}{3} \[ \hat{\Delta}(X_{j-1}, X_j, Y_{j} Y_{j+1} X_{j+1}) + \hat{\Delta}(X_{j-1} Y_{j}, X_j, Y_{j+1} X_{j+1}) + \hat{\Delta}(X_{j-1}Y_{j} Y_{j+1}, X_j X_{j+1}) \]. 
\end{align}
This operator is the average of all the possible operator versions of boundary \Aone centered at $j$, as discussed in \cite{Shi:2020rne}. It is also involved in the ``conformal ruler'' defined for gapless edge in \cite{kim2024conformalgeometryentanglement}. Here we explicitly demand that the supersites in the outer two layers form a triangle lattice, so that the boundary term $K_{D_{\partial}}$ is a sum over these $\overline{\hat{\Delta}_{X_j}^{\partial}}$. This is always allowed as we are considering a region $D$ with a large enough linear size $|\partial D| \gg \xi$. In $h_{s}^{\rec}$, there is also an averaging of the operator version of the Kitaev-Preskill TEE. For a connected 3-supersite $f = ABC$, we define 
\begin{align}
     \hat{\gamma}(f) = K_{AB} + K_{BC} + K_{CA} - K_A - K_B - K_C - K_{ABC}, 
\end{align}
whose expectation value on the state computes the TEE $\langle \hat{\gamma}(f) \rangle = \gamma$ in Kitaev-Preskill partition \cite{Kitaev:2005dm}. For a supersite $s$, we define the average as 
\begin{align}
    \overline{\hat{\gamma}_s} \equiv \frac{1}{d_s} \sum_{f, s\subset f} \hat{\gamma}(f),
\end{align}
where $f$ runs over all the connected 3-supersites that contain $s$. Here we use $d_s$ to denote the degree of the supersite $s$. Notice the coefficient $\frac{6-d_s}{6}$ can be interpreted as the curvature at the site $s$. 

In Eq.~\eqref{eq:KD-hrec}, since the size of the supersite is larger than $\xi$, one can use the bulk 
\Aone condition to deform the buffer region in $\hat{\Delta}$ and $\hat{\Delta}^{\partial}$ as in Fig.~\ref{fig:deform}. The invariance under the deformation follows the same derivation as in \cite{Vir}, where it is proved that the action of a ``good modular flow generator'' is invariant under deformation. Therefore, the support of $K_{D_{\inte}}$ is away from the entanglement boundary $\partial D$ by a distance $O(\xi)$, while the support of $K_{D_{\partial}}$ is near the entanglement boundary with a thickness of $O(\xi)$. 

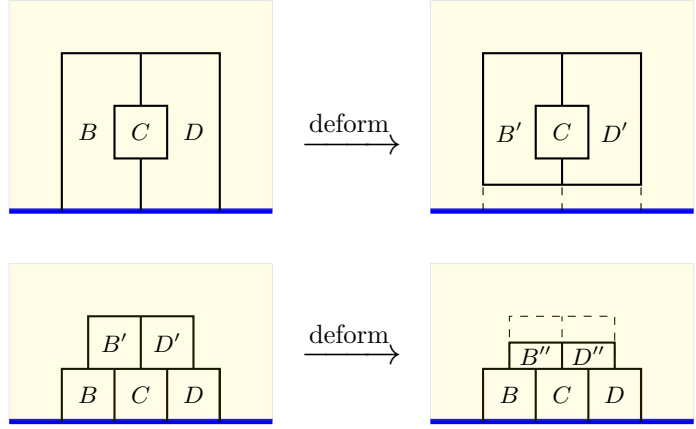
\begin{figure}[htb]
    \centering
  \begin{tikzpicture}[scale = 0.7]
    \begin{scope}
    \draw[color = blue, line width = 2pt] (0,0) -- (5, 0); 
    \draw[fill=yellow, opacity=0.1] (0,0) -- (5,0) -- (5,4) -- (0,4) -- cycle;
    \draw[thick] (2,1) -- (3,1) -- (3,2) -- (2,2) -- cycle;  
    \draw[thick] (1,0) -- (1,3) -- (4,3) -- (4,0);
    \draw[thick] (2.5,0) -- (2.5,1);  
    \draw[thick] (2.5,2) -- (2.5,3);
    \node[font = \Large] at (6.5,1.5) {$ \xrightarrow{\text{deform}} $};

    \node[font = \small] at (1.5, 1.5) {$B$}; 
    \node[font = \small] at (2.5, 1.5) {$C$}; 
    \node[font = \small] at (3.5, 1.5) {$D$}; 
    \end{scope}

    \begin{scope}[xshift = 8cm]
    \draw[color = blue, line width = 2pt] (0,0) -- (5, 0); 
    \draw[fill=yellow, opacity=0.1] (0,0) -- (5,0) -- (5,4) -- (0,4) -- cycle;
    \draw[thick] (2,1) -- (3,1) -- (3,2) -- (2,2) -- cycle;  
    \draw[thick] (1,0.5) -- (1,3) -- (4,3) -- (4,0.5) -- cycle;
    \draw[thick] (2.5,0.5) -- (2.5,1);  
    \draw[thick] (2.5,2) -- (2.5,3);
    \draw[dashed] (2.5,0) -- (2.5, 0.5); 
    \draw[dashed] (1,0) -- (1, 0.5); 
    \draw[dashed] (4,0) -- (4, 0.5); 
    \node[font = \small] at (1.5, 1.5) {$B'$}; 
    \node[font = \small] at (2.5, 1.5) {$C$}; 
    \node[font = \small] at (3.5, 1.5) {$D'$};
    \end{scope}

    \begin{scope}[yshift = -4cm]
    \draw[color = blue, line width = 2pt] (0,0) -- (5, 0); 
    \draw[thick] (1,0) -- (1,1) -- (4,1) -- (4,0);
    \draw[thick] (1.5,1) -- (1.5,2) -- (3.5,2) -- (3.5,1);
    \draw[thick] (2,0) -- (2,1);
    \draw[thick] (3,0) -- (3,1); 
    \draw[thick] (4,0) -- (4,1); 
    \draw[thick] (2.5,1) -- (2.5,2); 
    \draw[fill=yellow, opacity=0.1] (0,0) -- (5,0) -- (5,3) -- (0,3) -- cycle;
    \node[font=\Large] at (6.5,1.5) {$ \xrightarrow{\text{deform}} $};

    \node[font = \small] at (1.5, 0.5) {$B$}; 
    \node[font = \small] at (2.5, 0.5) {$C$}; 
    \node[font = \small] at (3.5, 0.5) {$D$}; 

    \node[font = \small] at (2, 1.5) {$B'$}; 
    \node[font = \small] at (3, 1.5) {$D'$}; 
    \end{scope}

    \begin{scope}[xshift = 8cm,yshift = -4cm]
    \draw[color = blue, line width = 2pt] (0,0) -- (5, 0); 
    \draw[thick] (1,0) -- (1,1) -- (4,1) -- (4,0);
    \draw[thick] (2,0) -- (2,1);
    \draw[thick] (3,0) -- (3,1); 
    \draw[thick] (4,0) -- (4,1); 
    \draw[dashed] (1.5,1) -- (1.5,2) -- (3.5,2) -- (3.5,1);
    \draw[dashed] (2.5,1) -- (2.5,2); 
    \draw[thick] (1.5,1) -- (1.5,1.5) -- (3.5,1.5) -- (3.5,1);
    \draw[thick] (2.5,1) -- (2.5,1.5); 
    \draw[fill=yellow, opacity=0.1] (0,0) -- (5,0) -- (5,3) -- (0,3) -- cycle;

    \node[font = \small] at (1.5, 0.5) {$B$}; 
    \node[font = \small] at (2.5, 0.5) {$C$}; 
    \node[font = \small] at (3.5, 0.5) {$D$}; 

    \node[font = \small] at (2, 1.25) {$B''$}; 
    \node[font = \small] at (3, 1.25) {$D''$}; 
    \end{scope}
  \end{tikzpicture}
  \caption{Deformations of a $\hat{\Delta}$ in $K_{D_{\text{int}}}$ (top) and $\hat{\Delta}^{\partial}$ in $K_{D_{\partial}}$ (bottom), near an entanglement boundary shown by the blue line. With the bulk \Aone condition, one can show that $\hat{\Delta}(B,C,D) = \hat{\Delta(B',C,D')}$ (top) and $\hat{\Delta}(BB',C,DD') = \hat{\Delta}(BB'',C,DD'')$ (bottom).}
\label{fig:deform}
\end{figure} 

Let us explain the physical picture behind this result of $K_D$ decomposition Eq.~\eqref{eq:KD-hrec} and its separation into Eq.~\eqref{eq:KD-separate}. For a gapped state, there are two types of entanglement. One is from short-distance fluctuation of degrees of freedom and the other is from the long-range fluctuation due to anyon loops. Each of these contribution is reflected in Eq.~\eqref{eq:KD-hrec}.  
\begin{itemize}
    \item The short-range fluctuation has range of correlation of order $\xi$. This short-range entanglement contributes to $K_D$ additively around the entanglement boundary as $(K_D)_{\text{short-range}} = \int_{\partial D} dx \hat{O}(x)$, see Fig.~\ref{fig:ent-contribution}. In Eq.~\eqref{eq:KD-hrec}, this is exactly captured by $K_{D_{\partial}}$. 
    \item The long-range fluctuation is due to topological order.
    It is the source of $-\gamma$ term in the entanglement entropy. This long-range entanglement is reflected as sectorization in $\rho_D$. Explicitly, in the Hilbert space $\CH_D$, there exists a set of orthogonal subspaces $\CH_D^{\mathfrak{a}}$ that are distinguished from each other by the presence of anyon $\mathfrak{a}$. Let $\rho_D^{\mathfrak{a}}$ denote the reduced density matrix on $D$ with an anyon $\mathfrak{a}$, then $\rho_D^{\mathfrak{a}}$ has non-zero support only in the sector $\CH_D^{\mathfrak{a}}$. Such a sectorization can be shown by utilizing fidelity, defined for any two quantum states $\sigma, \lambda$ as $F(\sigma, \lambda) = \Tr\(\rho^{1/2}\sigma \rho^{1/2} \)^{1/2}$. First in $D$ consider an annulus $A$ through which anyon string passes. \cite{Shi_2020} shows that the fidelity $F(\rho_A^{\mathfrak{a}}, \rho_A^{\mathfrak{b}})=0$ where $\rho_A^{\bullet} = \Tr_{D\backslash A} \rho_D^{\bullet}$. Since fidelity is a non-negative quantity and it is non-increasing under partial trace, we can conclude $0 = F(\rho_A^{\mathfrak{a}}, \rho_A^{\mathfrak{b}}) \geq F(\rho_D^{\mathfrak{a}}, \rho_D^{\mathfrak{b}}) \geq 0$ and hence $F(\rho_D^{\mathfrak{a}}, \rho_D^{\mathfrak{b}}) = 0$. This implies the operator product $\rho_D^{\mathfrak{a}} \cdot \rho_D^{\mathfrak{b}} = 0$. Because $\rho_D^{\mathfrak{a}}$ has a kernel in $\oplus_{\mathfrak{b} \neq \mathfrak{a}} \CH_D^{\mathfrak{a}}$, when computing $K_D^{\mathfrak{a}} = - \log (\rho_D^{\mathfrak{a}})$, there exists a term $\Lambda \sum_{\mathfrak{b} \neq \mathfrak{a}} P_D^{\mathfrak{b}} $ with $\Lambda = - \log (\epsilon) \to \infty$ for regulating the zero eigenvalues with $\epsilon \to 0$ in the kernel of $K_D^{\mathfrak{a}}$. This infinite projector is the role of $K_{D_{\inte}}$. In Eq.~\eqref{eq:KD-hrec}, if we regard $h_s^{\rec}$ as a (reconstructed) Hamiltonian density, then inserting an anyon in site $s$ will produce an infinite energy. We will show this in App.~\ref{app:Hrec}. In \S\ref{subsec:Hrec} we discuss more about regarding $h_s^{\rec}$ as a Hamiltonian density to reconstruct a Hamiltonian. 
\end{itemize} 

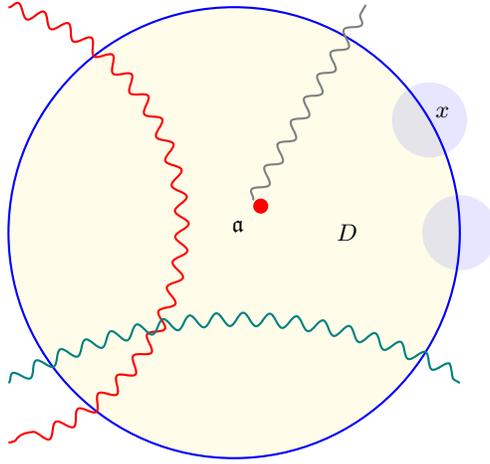
\begin{figure}
  \begin{tikzpicture}
    \filldraw[fill = yellow!10, draw = blue, thick] (0,0) circle (3); 
    \fill[fill = blue, opacity = 0.1] (0:3) circle (0.5);
    \fill[fill = blue, opacity = 0.1] (30:3) circle (0.5);
    \draw[thick, color = red, decorate, decoration=snake] (-3,3) arc (75:-75:3); 
    \draw[thick, color = teal, decorate, decoration=snake] (-3,-2) arc (120:60:6); 
    \draw[thick, color = gray, decorate, decoration=snake] (60:0.5) -- (60:3.5); 
    \fill[fill = red] (45:0.5) circle (0.1); 
    \node at (1.5, 0) {$D$}; 
    \node at (30:3.2) {$x$}; 
    \node at (60:0.1) {$\mathfrak{a}$}; 
  \end{tikzpicture}
  \caption{Two types of entanglement captured by $K_D$. Around the entanglement boundary $\partial D$, there are short-range entanglement contribution. A blue cluster stands for a local fluctuation (e.g. at location $x$) with a finite correlation range. The wiggle lines are anyon strings. The anyon inside $D$ dictates the sector in which the low-lying spectrum of $K_D$ lives.}
  \label{fig:ent-contribution}
\end{figure}

\subsection{Operator bulk/edge correspondence}
What exactly is the nature of $K_{D_{\partial}}$? Here we make a hypothesis that $K_{D_{\partial}}$ is algebraically isomorphic to a one dimensional edge Hamiltonian. In the context where $\ket{\Psi}$ is a representative of a chiral gapped phase, the edge of the system is generically described by a chiral CFT. For simplicity, let's say it is a purely chiral CFT, namely there is only one chiral component. 

We first consider the case where the entanglement boundary is smooth. Explicitly, what that means is that, everywhere along the entanglement boundary, the radius of the curvature is much larger than the correlation length. The consequence of this is that, for every segment in $\partial D$, one can zoom in to an extent that the segment is flat while still in the length scale larger than $\xi$. 

\begin{hypothesis}[operator bulk/edge correspondence for smooth entanglement boundary]
    \label{hypo:op-bulk-edge}
    Let $\CH_{D_{\partial}}$ be the Hilbert space on which the operator $K_{D_{\partial}}$ acts, and let $\CH_{\CFT}$ be the Hilbert space of the chiral CFT that describes the edge of the 2d system; we hypothesize that there exists an isometry $\mathbb{V}: \CH_{\text{CFT}} \to \CH_{D_{\partial}}$ (i.e. $\mathbb{V}^{\dagger}\mathbb{V} = \mathbbm{1}$), such that 
    \begin{align}\label{eq:hypo-bulk-edge-1}
         K_{D_{\partial}} & \eqover{\mathbb{V}}  \beta \int_{\partial D} dx T(x)  \\ 
        \ket{\Psi} &= \mathbb{V} \ket{\beta}_{\text{CFT}} 
    \end{align}
    where $\ket{\beta}_{\text{CFT}}$ is a thermal double state of the CFT with the inverse temperature $\beta$ that can be identified with the correlation length $\xi$ of the 2d system. $O_1 \eqover{\mathbb{V}} O_2$ is a shorthand notation for $O_1 \mathbb{V} = \mathbb{V} O_2$. 
\end{hypothesis}

Before we dive into the technical aspects of this statement, let us first explain the underlying intuition behind it.

We start with the Li-Haldane conjecture and its relations with bulk topological order. Li-Haldane conjecture says that, for a disk $D$ inside the bulk of a chiral gapped state, the low-lying spectrum of $K_D$ matches with the low-lying spectrum of the edge CFT (up to an overall rescaling and a shift). In the future explanation, we will compress this sentence and simply say the spectrum of $K_D$ ``matches'' the spectrum of the edge CFT. The edge CFT is believed to be a rational CFT, meaning that the complete Hilbert space of the CFT admits a finite number of irreducible representations of the chiral algebra. That is $\CH_{\chi \CFT} = \oplus \CV_{\mathfrak{a}}$, where $\CV_{\mathfrak{a}}$ is the carrier space of an irrep with a label $\mathfrak{a}$, which is believed to be in one-to-one correspondence to the label of anyons of the bulk topological order. Explicitly, suppose there is an anyon $a$ inside $D$, the spectrum of the entanglement Hamiltonian $\mathrm{spec}\( K_D^{\mathfrak{a}} \) \approx \frac{\xi}{|\partial D|} \mathrm{spec} \( L_0 \)_{\CV_{\mathfrak{a}}}$\footnote{Here we omit some additive constant shift.}, where $L_0$ is the Virasoro zero mode that encodes the edge CFT spectrum in the sector $\CV_{\mathfrak{a}}$. Notice this is consistent with our earlier conclusion with Fig.~\ref{fig:ent-contribution}, where the low-lying spectrum of $K_D^{\mathfrak{a}}$ is inside the sector $\CH_D^{\mathfrak{a}}$ mentioned above.    

{\bf Argument of Hypothesis~\ref{hypo:op-bulk-edge} from the local decomposition of $K_D$.} 
The hypothesis~\ref{hypo:op-bulk-edge} is a generalization of this Li-Haldane conjecture. It is an attempt to pin down the role of the operator content of $K_D$ in producing the CFT spectrum of $K_D$. In Eq.~\eqref{eq:KD-decompose}, we've shown that, for a gapped state close to the IR fixed point, it can be decomposed into a sum of local operators. If the state satisfies the Li-Haldane conjecture, then the spectrum of such a sum of local operators (i.e. $K_D$) ``matches'' with the spectrum of the parent local Hamiltonian $H$ of this state. This is because for a gapped local Hamiltonian with gapless edge, the low-lying spectrum is dominated by the edge excitation. As a result, the sum of local operator from decomposing $K_D$ can be regarded as a local Hamiltonian for the same gapped phase on the disk $D$. 

Moreover, in $K_D = K_{D_\inte} + K_{D_\partial}$, as we explained earlier, the support of $K_{D_{\partial}}$ is localized around $\partial D$ with a thickness of $O(\xi)$. Therefore, $K_{D_{\partial}}$ can be regarded as the term that governs the boundary excitations, which produce the CFT spectrum. This is roughly the reason why we hypothesize that $K_{D_{\partial}}$ is algebraically isomorphic to the boundary Hamiltonian $\beta\int_{\partial D} dx T(x)$. The coefficient $\beta$ can be identified by matching the spectrum. As for $K_{D_\inte}$, its role, besides to produce gap in the bulk, is to enforce the CFT spectrum to be in the sector that matches the anyon content inside $D$. As we explained at the end of Section~\ref{subsec:local-ent-hmt}, if there is an anyon $a$ inside $D$, states in the sector of other anyons will have infinite energies. 

The intuition described above can be distilled into the following mathematical argument. One can show that, based on \Azero and \Aone condition, $K_{D_\partial}$ fully captures the action of $K_D$ on $\ket{\Psi}$. Explicitly, for any function of an operator $f(\hat{O})$ that is well-defined in terms of power expansion, we have
\begin{equation}\label{eq:KD-action}
    f(K_D)\ket{\Psi} = f(K_{D_\partial} - \gamma \mathbbm{1})\ket{\Psi}. 
\end{equation}
We leave the details of the derivation to App.~\ref{app:property}. In particular, one can take $f(\bullet)$ to be any operator polynomial $\mathrm{poly}(\bullet)$ and obtain 
\begin{align}
    \Tr \[ \rho_D \mathrm{poly}(K_D) \] = \Tr \[ \rho_D \mathrm{poly}(K_{D_\partial} - \gamma \mathbbm{1}) \]. 
\end{align}
This suggests that the eigenvalues of $K_D$ and $K_{D_\partial} - \gamma \mathbbm{1}$ whose corresponding eigenstates are within the support of $\rho_D$ are the same. Mathematically speaking, 
\begin{align}\label{eq:spec-match}
    \mathrm{spec}_{\rho_D}(K_D) = \mathrm{spec}_{\rho_D}(K_{D_\partial} - \gamma \mathbbm{1}),
\end{align}
where $\mathrm{spec}_{\sigma}(O)$ is defined as a set of eigenvalues $\lambda$ of $O$ such that the corresponding eigenvectors $\ket{\lambda}$ satisfy $\tr(\sigma \ket{\lambda}\bra{\lambda}) \neq 0$. We note that in Eq.~\eqref{eq:spec-match} the equal sign should only be approximately satisfied due to finite size error. This is because, the derivation of Eq.~\eqref{eq:KD-action} makes use of exact \Azero and \Aone, which we know is subject to finite size error that decays as the subsystem size increases\footnote{At least we believe so for physical systems. One can regard this statement of decaying error as an assumption.}. Notice the left hand side of Eq.~\eqref{eq:spec-match} ``matches'' with the low-lying chiral edge CFT spectrum, following from the Li-Haldane conjecture, therefore, one can conclude that 
\begin{align}\label{eq:KDpartial-L0}
    \( K_{D_\partial} - \gamma \mathbbm{1} \)_{\CW} \propto \( L_0  + \text{const} \)_{\CV_1} ,
\end{align}
where $\CW$ is a Hilbert space spanned by eigenvectors $\ket{\lambda}$ of $K_{D_\partial} - \gamma \mathbbm{1}$ such that $\Tr(\sigma \ket{\lambda}\bra{\lambda}) \neq 0$, and $\CV_1$ is the identity irrep of the chiral algebra of the edge CFT. If $\rho_D$ contains an anyon $\mathfrak{a}$, then $\CV_1$ should be replaced by the corresponding irrep sector $\CV_{\mathfrak{a}}$. Here we use $(O)_{\CH}$ to denote the operator restricted on the Hilbert space $\CH$, by projecting out the eigenstates that are not in $\CH$. With the local decomposition of $K_{D_\partial} = \sum_{s \in D_\partial} h_s^{\partial, \rec}$, Eq.~\eqref{eq:KDpartial-L0} indicates that one can regard $K_{D_\partial}$ as a realization of a local Hamiltonian of a 1+1D chiral CFT in $\CW$. The $h_s^{\partial, \rec}$ is the Hamiltonian density, which, in the regime $|\partial D| \gg \xi$, can be identified with the stress-energy tensor $T(x)$ of the chiral CFT. That is, 
\begin{align}
     K_{D_\partial} = \sum_{s \in D_\partial} h_s^{\partial, \rec} \eqover{|\partial D| \gg \xi} \beta \int_{\partial D} dx \mathfrak{h}(x) \xrightarrow{\text{restrict to }\CW} \beta \int_{\partial D} dx T(x), 
\end{align}
where the conversion to an integral follows from the usual procesure of obtaining an integral from a Riemann sum. The prefactor $\beta$ can be decided by the Li-Haldane conjecture. Because $\mathrm{spec}(K_D) = \frac{\xi}{|\partial D|} \mathrm{spec}(L_0) + \text{const}$, and because the Virasoro zero mode $L_0$ for a CFT living on $\partial D$ satisfies
\begin{align}
     L_0 = \frac{|\partial D|}{4 \pi^2}\int_{\partial D} dx T(x)~,
\end{align}
one can identify $\beta = \xi/(4\pi^2)$. 

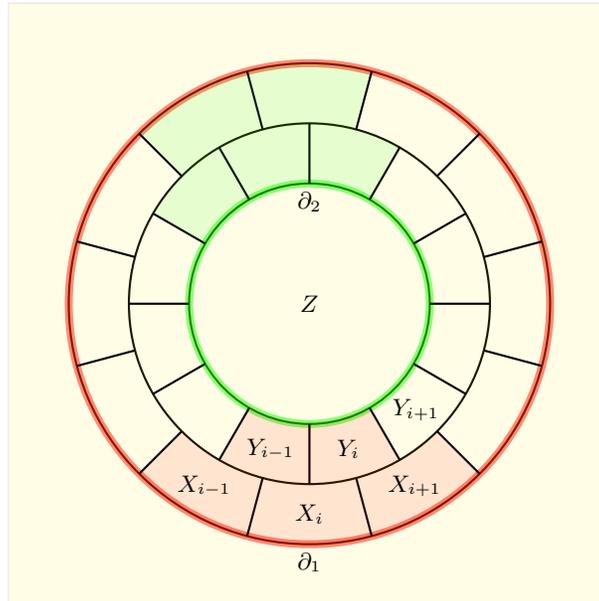
\begin{figure}[htb]
    \centering
    \begin{tikzpicture}[scale = 0.8]
	\def\rone{2cm}
	\def\rtwo{3cm}
	\def\rthree{4cm}
    \draw[thick] (0,0) circle (\rone);
	\draw[color = green, line width = 3, opacity=0.5] (0,0) circle (\rone);
	\draw[thick] (0,0) circle (\rtwo);
    \draw[thick] (0,0) circle (\rthree);
	\draw[color = red, line width = 3, opacity=0.5] (0,0) circle (\rthree);
	\draw[fill=yellow, opacity=0.1] (-5,-5) -- (5,-5) -- (5,5) -- (-5,5) -- cycle;
	\foreach \x in {1,...,12}
	    \draw[thick] (\x*30:\rone)-- (\x*30:\rtwo);

	\foreach \x in {1,...,12}
		\draw[thick] (\x*30+15:\rtwo)-- (\x*30+15:\rthree);

    \draw[fill = red, opacity=0.1] (270-45:3) -- (270-45:4) arc (270-45:270+45:4) -- (270+45:3) arc (270+45:270+30:3) -- (270+30:2) arc (270+30:270-30:2) -- (270-30:3) arc (270-30:270-30-15:3); 

    \draw[fill = green, opacity=0.1] (90-30:2) arc (90-30:90+60:2) -- (90+60:3) arc (90+60:90+45:3) -- (90+45:4) arc (90+45:90-15:4) -- (90-15:3) arc (90-15:90-30:3) -- cycle;

	\node at (270-15:2.5) {$Y_{i-1}$};
	\node at (270+15:2.5) {$Y_{i}$};
	\node at (270+15+30:2.5) {$Y_{i+1}$};

	\node at (270-30:3.5) {$X_{i-1}$};
	\node at (270+0:3.5) {$X_{i}$};
	\node at (270+30:3.5) {$X_{i+1}$};

    \node at (0,0) {$Z$};

    \node at (270:4.3) {$\partial_1$}; 
    \node at (90:1.7) {$\partial_2$}; 
    \end{tikzpicture}
	\caption{Decomposition of an annulus $XY$ in the bulk. $X = X_1X_2\cdots X_N,Y = Y_1Y_2\cdots Y_N$. The red and green regions are examples of $\hat{\Delta}^{\partial_1}$ and $\hat{\Delta}^{\partial_2}$ for the outer and inner boundary.}
	\label{fig:annulus-decomp}
\end{figure}

{\bf Argument of Hypotehsis~\ref{hypo:op-bulk-edge} from dimensional reduction.}
We can also try to understand $K_{D_{\partial}}$ by dimensional reduction. The 1d system is obtained from a 2d cylinder (annulus). Consider an annulus $XY$ in the bulk shown in Fig.~\ref{fig:annulus-decomp} and a state $\sigma_{XY}$ defined as
\begin{align}\label{eq:sigma-XY}
    \sigma_{XY} = \sum_{\mathfrak{a}} \frac{d_{\mathfrak{a}}^2}{\CD^2} \rho^{\mathfrak{a}}_{XY}, \quad \rho^{\mathfrak{a}}_{XY} = 
    \begin{tikzpicture}[baseline={([yshift=-.5ex]current bounding box.center)}, scale=0.6]
    \fill[fill = yellow!10, draw = blue, thick] (0,0) circle (1.6); 
    \fill[fill = white, draw = blue, thick] (0,0) circle (0.8); 
    \draw[dashed] (0,0) circle (1.2);
    \draw[decorate, decoration=snake, thick, gray] (0,0) -- (2.4,0);
    \fill[red] (0,0)  node[above] {${\mathfrak{a}}$} circle (0.1);
    \fill[red] (2.4,0) node[above] {$\bar{\mathfrak{a}}$} circle (0.1); 
    \node[font = \footnotesize] at (180:1) {$Y$}; 
    \node[font = \footnotesize] at (180:1.4) {$X$}; 
  \end{tikzpicture}~~.
\end{align}
Here $\rho^{\mathfrak{a}}_{XY}$ is the reduced density matrix from the state $\ket{\Psi_{\mathfrak{a}}}$ with an anyon string operator passing thought the annulus $XY$. $d_{\mathfrak{a}}$ is the quantum dimension of the anyon, and $\CD = \sqrt{\sum_{\mathfrak{a}} d_{\mathfrak{a}}^2}$ is the total quantum dimension. The state $\sigma_{XY}$ is the maximum entropy state in the information convex set of the annulus $XY$ \cite{Shi_2020}. For any states which are convex combinations of $\{\rho^{\mathfrak{a}}_{XY}\}$, their reduced density matrix on any disk inside $XY$ are the same. 
As pointed out in \cite{Kato:2018bzi}, the entanglement Hamiltonian $K_{XY}^{\sigma}$ of $\sigma_{XY}$ is a sum over local operators. Moreover, one can show that 
\begin{align}\label{eq:K-sigma}
    K_{XY}^{\sigma} =  \sum_i \(\frac{1}{2} \overline{\hat{\Delta}^{\partial_1}_{X_i}} + \sum_i \frac{1}{2} \overline{\hat{\Delta}^{\partial_2}_{Y_i}} \)
\end{align}
where $X = X_1 X_2 \cdots, Y = Y_1 Y_2 \cdots$ is shown in Fig.~\ref{fig:annulus-decomp}. The derivation is similar as in App.~\ref{app:Hrec}. Notice the first sum in the R.H.S. of Eq.~\eqref{eq:K-sigma} is exactly $K_{D_{\partial}}$ with $D = XYZ$ and the second sum, denoted as $\overline{K_{D_{\partial}}}$, is just $K_{D_{\partial}}$ flipped upside down and hence has the opposite chirality of $K_{D_{\partial}}$. Based on the known Li-Haldane conjecture, each entanglement Hamiltonian $K^{\mathfrak{a}}_{XY}$ from $\rho_{XY}^{\mathfrak{a}}$ shall encode the spectrum of the non-chiral CFT in the sector $\CV_{\mathfrak{a}} \otimes \overline{\CV}_{\mathfrak{a}}$. Moreover, based on \cite{Shi_2020}, different $\rho^{\mathfrak{a}}_{XY},\rho^{\mathfrak{b}}_{XY}$ are in orthogonal subspaces and hence the sum in Eq.~\eqref{eq:sigma-XY} can be thought of as a direct sum. As a result, $K^{\sigma}_{XY} = \bigoplus_{\mathfrak{a}} K_{XY}^{\mathfrak{a}}$ shall have the spectrum of the non-chiral CFT with a diagonal form, i.e. in the Hilbert space $\CH_{\text{CFT}} = \bigoplus_{\mathfrak{a}} \CV_{\mathfrak{a}} \otimes \overline{\CV}_{\mathfrak{a}}$. If we regard $X_i Y_i$ as a single site, then $\sigma_{XY}$ can be thought of as a 1+1D CFT thermal state and the entanglement Hamiltonian $K_{XY}^{\sigma}$ can be regarded as the CFT Hamiltonian multiplied by an inverse temperature. That is $K_{XY}^{\sigma} = \beta H_{CFT}^{1d}$. The decomposition in Eq.~\eqref{eq:K-sigma} indicates that, one can regard $h_{i} = \frac{1}{2} \overline{\hat{\Delta}^{\partial_1}_{X_i}}, \bar{h}_{i} = \frac{1}{2} \overline{\hat{\Delta}^{\partial_2}_{Y_i}}$ as the Hamiltonian density for chiral and anti-chiral modes. 
If $|\partial D| \gg \xi$, the sum can also be written as an integral and one can obtain 
\begin{align}
    \sum_i h_i \xrightarrow{|\partial D|\gg \xi} \beta \int_{\partial D} dx T(x), \quad \sum_i \bar{h}_i \xrightarrow{|\partial D|\gg \xi} \beta \int_{\partial D} dx \bar{T}(x)
\end{align} 
where $T(x),\bar{T}(x)$ is identified with the holomorphic and anti-holomorphic stress-energy tensor for the chiral and anti-chiral modes of the CFT\footnote{We comment that in the lattice Hilbert space $\CH_{XY}$, $h_i$ and $\bar{h}_i$ do not commute. However, in the subspace that is the support of $\sigma_{XY}$, they indeed commute. As we mentioned earlier, the support of $\sigma_{XY}$ is diagonal and corresponds to the low-lying subspace of $K_{XY}^{\sigma}$. This is consistent with the fact that $T(x)$ and $\bar{T}(x)$ commute in the non-chiral CFT Hilbert space.}.  

There are also some subtle details of this hypothesis: 

First of all, what is the role of $\mathbb{V}$? It in fact has two functionalities: one is to remove the kernal in $K_{D_{\partial}}$ and the other is to select the CFT sectors. Below we explain these two points in details. (1) One might be tempted to study operator $K_{D_{\partial}}$ on its own. For example, one might expect the spectrum of $K_{D_{\partial}}$ itself matches the full spectrum of the edge CFT in $\CH_{\chi \CFT} = \oplus_{\mathfrak{a}} \CV_{\mathfrak{a}}$. This is incorrect. As we explained earlier, in the Hilbert space $\CH_{D_{\partial}}$, one can always construct the anti-chiral component. Hence, in such a Hilbert space, the operator $K_{D_{\partial}}$ is better to be understood as $\int dx  (\beta T(x) + 0 \bar{T}(x))$. That is, there is a huge kernal in $K_{D_{\partial}}$, which will produce a huge degeneracy. Therefore, one cannot regard $K_{D_\partial}$ as a realization of a 1+1D chiral CFT Hamiltonian \emph{in the Hilbert space $\CH_{\chi \CFT}$}. This is also consistent with the no-go theorem \cite{Hellerman:2021fla} which says a realization of a 1+1D chiral CFT Hamiltonian in a tensor product Hilbert space is impossible. This is in fact also the reason why having an isometry $\mathbb{V}$ in Eq.~\eqref{eq:hypo-bulk-edge-1} is necessary. The isometry plays the role of removing such kernal and degeneracy. (2) Moreover, followed from the cylinderical picture, $K_{D_\partial}$ itself has non-trivial support in all the anyon sectors. Within the scope of this paper, we only study $K_{D_\partial}$ acting on a state $\ket{\Psi}$ without anyon in $D$, hence, the second functionality of the isometry is to keep the support of $K_{D_\partial}$ only in the vacuum sector. 

Secondly, one might wonder can we regard $h_s^{\partial,\rec}$ for $s \in D_{\partial}$ as $T(x)$ upto a prefactor $\beta$? The answer is roughly yes, but not exactly. Indeed, suppose $|\partial D| \gg \xi$, then one can regard $h_s^{\partial,\rec}/\beta = \mathfrak{h}(x)$ as a local operator with $x = s \xi$ and write the sum into an integral. However, $\mathfrak{h}(x)$ constructed this way has a support in a region centerred at $x$ and with a range $\xi$. Intuitively, this is similar to a ``smearred'' stress-energy tensor up to a prefector, i.e. $\beta \int dx f(x) T(x)$, where $f(x)$ has a compact support at $x$ with a range $\xi$. Roughly, the reason is that if one were to study the correlation function such as $\langle \mathfrak{h}(x) \mathfrak{h}(y)\rangle$, it does not have the singular behavior as in $\langle T(x) T(y)\rangle$ when $x \to y$. We leave a more detailed explanation of this point in the future work. 

Lastly, in a system with finite local dimension, the hypothesis~\ref{hypo:op-bulk-edge} is up to finize size error. The Hilbert space $\CH_{\CFT}$ for the isometry $\mathbb{V}$ is only the part spanned by the low-lying states of the edge CFT. The finite size error is expected to be reduced as the size of the entanglement boundary $|\partial D|$ increases.  

\subsection{Reconstructed Hamiltonian}
\label{subsec:Hrec}
As we mentioned earlier, it is plausible that one can regard the decomposition of $K_D$ in Eq.~\eqref{eq:KD-hrec} as a local Hamiltonian on the disk $D$. The reason is based on Li-Haldane conjecture, or more generally, a spectrum bulk/edge correspondence, namely, the low-lying spectrum of $K_D$ matches the low-lying spectrum of the edge degrees of freedom up to an overal rescaling and a shift. Assuming the state is close to the zero correlation RG fixed point, so that \Aone condition is applicable, the spectrum of $K_{D_{\inte}} + K_{D_{\partial}}$ will approximately equal that of $K_D$ and hence have the same low-lying spectrum of the original Hamiltonian for $\ket{\Psi}$ up to an overall rescaling and shift. 

Let us now discuss $h_s^{\rec}$ as a Hamiltonian density\footnote{For states satisfying exact \Aone, a commuting projector parent Hamiltonian was reconstructed from the state in \cite{Kim:2024amo}.}. We can assume the system is on a closed manifold, so that we do not need to consider the boundary term. From the gapped groundstate $\ket{\Psi}$ on such a manifold with a triangulation, one can construct 
\begin{align}
    \label{eq:Hrec-Delta-Euler}
    H_{\rec}^{\Delta, \euler} &= \sum_s h_s^{\rec}. 
\end{align}
Moreover, there are other possible choices: 
\begin{align}
    H_{\rec}^{\euler} &= \sum_f K_f - \sum_e K_e + \sum_s K_s \label{eq:Hrec-Euler} \\ 
    H_{\rec}^{\Delta} &= \sum_s \overline{\hat{\Delta}_s} \label{eq:Hrec-Delta}. 
\end{align}
In a gapped groundstate that approximately satisfies \Aone, we have $H_{\rec}^{\Delta, \euler} \approx H_{\rec}^{\euler}$\footnote{For a state that satisfies the exact Markov decomposition condition Eq.~\eqref{eq:Markov-decompose}, this is an equality. 
In a lattice model, where the reduced density matrices generally have kernels, the exact decomposition is violated by terms associated with these kernels; the action of the two forms of $H_\text{rec}$ on the complement of the kernel, which is the low-lying spectrum, will still agree, as we have verified numerically.}.
When we evaluate their expectation value in such a state, they both equal $- \chi \gamma$, where $\chi$ is the Euler character of the manifold. $H_{\rec}^{\Delta}$ is obtained from $H_{\rec}^{\Delta,\euler}$ by removing the averaging over the operator TEE term. If the system is on a torus, where one can coarse-graine the sites such that the supersites form a regular triangular lattice, then $H_{\rec}^{\Delta} = H_{\rec}^{\Delta, \euler}$. This simply follows from the definition and does not rely on any property of the state.

The merits of $H_{\rec}^{\Delta, \euler}$ and $H_{\rec}^{\euler}$, for the purpose of the study of bulk/edge correspondence, is that their local density terms are both equal to those obtained from the Markov decompositions of an entanglement Hamiltonian of a disk, if the state satisfies \Aone. This connection is true for any triangulation of the disk. Hence if we want to study $H_{\rec}^{\Delta, \euler}$ and $H_{\rec}^{\euler}$ on some open manifold, one can infer the spectrum with bulk/edge correspondence. 

The merits of $H_{\rec}^{\Delta}$ is that it is a positive operator for any input state because of the operator weak monotonicity theorem, $\hat{\Delta} \geq 0$ \cite{Lin:2022jtx}. The groundstate of $H_{\rec}^{\Delta}$ with zero energy is guaranteed to satisfy \Aone. Hence, one can conclude that, if a groundstate of $H_{\rec}^{\Delta}$ has zero energy, then it is a gapped state. This operator being positive also enables us to run gradient descent with an objective function $g(\ket{\psi}) = \langle \psi | (H_{\rec}^{\Delta})^{\psi} | \psi \rangle$, which will reduce the error of \Aone and bring the state closer the zero-correlation-length fixed point. 

We conjecture that, when the input state $\ket{\psi}$ is close to a zero-correlation-length RG fixed point, then all $(H_{\rec})^{\psi}$ are local Hamiltonians within the same gapped phase and their groudstates have a smaller correlation length. Explicitly, here we use the violation of \Aone as a characterization of the correlation length. With this conjecture, one can design a process, namely ``find groundstate -- reconstruct Hamiltonian -- find groundstate -- reconstruct Hamiltonian -- find groundstate -- ...'', to progressively drive the state closer to the fixed point, where the UV contributions to entanglement quantities are reduced. We will give an argument for this statement in Section~\ref{subsec:error-reduction}, and numerically verify and utilize this thereafter. 

\section{Entanglement of corners}

\label{sec:hypothesis}

The previous discussion takes place on length scales much larger than the correlation length $\xi$, such that \Azero and \Aone are applicable. In this section, we are going to discuss the entanglement of corner regions, where the typical linear size is smaller than the correlation length. We note that in order to discuss any possible universal properties of the corner entanglement, the correlation length $\xi$ has to be larger than the lattice spacing, so that there are enough degrees of freedom around a corner to emerge any universal properties. 

\subsection{Hypothesis about entanglement of corners}

We define a \emph{corner region} as a disk-like region whose typical linear size is much less than $\xi$ but still much larger than the lattice spacing. When one studies entanglement of a disk $A$, whose entanglement boundary $\partial A$ contains sharp corners, then one needs to take the corner entanglement into consideration. By a \emph{sharp corner}, we mean a location on the entanglement boundary $\partial A$ whose radius of curvature is much less than the correlation length. As we mentioned earlier, the contribution in $K_A$ from the degrees of freedom along the entanglement boundary $\partial A$ is within an annulus with thickness $\xi$. Therefore, if there is a sharp corner in $\partial A$, then it is going to wrap around a corner region as shown in Fig.~\ref{fig:corner-region}. 

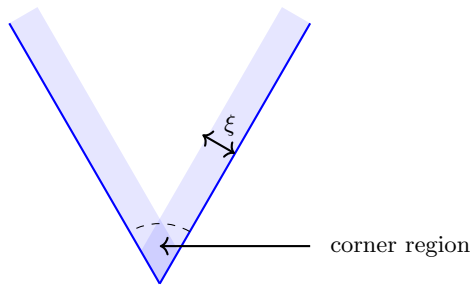
\begin{figure}[htb]
    \centering
      \begin{tikzpicture}
    \draw[thick, color = blue] (0,0) -- (60:4);
    \draw[thick, color = blue] (0,0) -- (120:4); 
    \fill[color = blue, opacity = 0.1] (0,0) -- (120:4) -- ($(120:4) + (30:{0.5 / 2 * sqrt(3)})$) -- ($(0,0)+(60:0.5)$); 
    \fill[color = blue, opacity = 0.1] (0,0) -- (60:4) -- ($(60:4) + (150:{0.5 / 2 * sqrt(3)})$) -- ($(0,0)+(120:0.5)$); 
    \draw[->, thick] (2,0.5) -- (0,0.5); 
    \draw[dashed] ($(0,0)+(60:0.8)$) arc (60:120:0.8);
    \node at (3.2, 0.5) {corner region}; 
    \draw[<->, thick] (60:2) -- ($(60:2) + (150:0.5)$);  
    \node at ($(60:2.3) + (150:0.25)$) {$\xi$}; 
  \end{tikzpicture}
	\caption{A corner region near a sharp corner along the entanglement boundary (blue line).}
	\label{fig:corner-region}
\end{figure}

The operator content in $K_A$ supported on such a region will be of a different form from those near a smooth segment along the entanglement boundary. Explicitly, one can first apply the derivation in Section~\ref{subsec:local-ent-hmt} and obtain a decomposition $K_A = K_{A_{\text{int}}} + K_{A_{\partial}}$ where $K_{A_\inte}$ and $K_{A_\partial}$ take the same form as sums of $h_{s}^{\rec}$ and $h_{s}^{\partial,\rec}$ respectively. However, the operator $h_{s}^{\partial,\rec}$ for the site $s$ along a smooth segment and the one at the sharp corner $c$ will be different [Fig.~\ref{fig:Deltas-smooth-corners}].

For smooth segments, which we define as segments where the radius of curvature is much larger than $\xi$ everywhere, one can zoom in such that the segment is flat. Therefore, the local operator, e.g. $h_{s_1}^{\partial,\rec},h_{s_2}^{\partial,\rec}$ in Fig.~\ref{fig:Deltas-smooth-corners}, are just related by a translation along $\partial A$. If the system is uniform, then these two operators are the same but just with a different support of location, while near the sharp corner $c$, the $h_s^{\partial, \rec}$ takes a different form. Therefore, we need to treat it differently. Below we give a regulation of the corner region that highlights the universal properties of the corner entanglement. 

\begin{figure}[htb]
    \centering
    \begin{tikzpicture}
    \filldraw[thick, draw = blue, fill = yellow!10] (0,0) -- (180-30:3) arc [start angle=180-30-90, end angle=180+30+90, radius={3*tan(30)}] -- cycle; 
    \node at (-3,0) {$A$};  

    \draw[dashed] ({-3/cos(30)},{-3*tan(30)}) circle (0.5);
    \fill[color = blue] ({-3/cos(30)},{-3*tan(30)}) circle (0.08);
    \node at ({-3/cos(30)},{-3*tan(30)-0.3}) {$s_1$};

    \draw[dashed] ({-3/cos(30)-3*tan(30)},0) circle (0.5);
    \fill[color = blue] ({-3/cos(30)-3*tan(30)},0) circle (0.08);
    \node at ({-3/cos(30)-3*tan(30)-0.3},0) {$s_2$};

    \draw[dashed] (0,0) circle (0.5);
    \fill[color = red] (0,0) circle (0.08);

    \begin{scope}[xshift = -3cm, yshift = -3.2cm, scale = 0.3]
    \draw[color = blue, thick] (0,0) -- (5, 0); 
    \draw[thick] (1,0) -- (1,1) -- (4,1) -- (4,0);
    \draw[thick] (1.5,1) -- (1.5,2) -- (3.5,2) -- (3.5,1);
    \draw[thick] (2,0) -- (2,1);
    \draw[thick] (3,0) -- (3,1); 
    \draw[thick] (4,0) -- (4,1); 
    \draw[thick] (2.5,1) -- (2.5,2); 
    \draw[fill=yellow, opacity=0.1] (0,0) -- (5,0) -- (5,3) -- (0,3) -- cycle;
    \fill[color = blue] (2.5,0) circle (0.2);
    \node at (2.5, -1) {$s_1$};
    \end{scope}

    \begin{scope}[xshift = -6cm, yshift = -1cm, scale = 0.3,rotate = -90]
    \draw[color = blue, thick] (0,0) -- (5, 0); 
    \draw[thick] (1,0) -- (1,1) -- (4,1) -- (4,0);
    \draw[thick] (1.5,1) -- (1.5,2) -- (3.5,2) -- (3.5,1);
    \draw[thick] (2,0) -- (2,1);
    \draw[thick] (3,0) -- (3,1); 
    \draw[thick] (4,0) -- (4,1); 
    \draw[thick] (2.5,1) -- (2.5,2); 
    \draw[fill=yellow, opacity=0.1] (0,0) -- (5,0) -- (5,3) -- (0,3) -- cycle;
    \fill[color = blue] (2.5,0) circle (0.2);
    \node at (2.5, -1) {$s_2$};
    \end{scope}

    \begin{scope}[xshift= 0.5cm, yshift = -1.2cm, scale = 0.3, rotate = 90]
    \draw[thick] (90-30:2) arc (90-30:90+30:2); 
    \draw[thick] (90-30:3) arc (90-30:90+30:3); 
    \draw[thick, color = blue] (90-30:4) -- (0,0) -- (90+30:4); 
    \draw[thick] (90-15:2) -- (90-15:3); 
    \draw[thick] (90-30:2) -- (90-30:3);
    \draw[thick] (90:2) -- (90:3);
    \draw[thick] (90+30:2) -- (90+30:3);
    \draw[thick] (90+15:2) -- (90+15:3); 
    \fill[color = yellow, opacity = 0.1] (0,0) -- (90-30:4) arc (90-30:90+30:4) -- cycle; 
    \fill[color = red] (0,0) circle (0.2);
    \node at (-1, 0) {$c$}; 
    \end{scope}
  \end{tikzpicture}
	\caption{$h_{s}^{\partial,\rec}$ along the entanglement boundary. In the smooth part, such as at localtions $s_1,s_2$, because the radius of curvature is much larger than $\xi$, one can zoom in such that the segmenet is flat. If the state has translation symmetry, then $h_{s_1}^{\partial,\rec},h_{s_2}^{\partial,\rec}$ are just related by a translation along the entanglement boundary. However, $h_{c}^{\partial,\rec}$ at the sharp corner is a different operator.}
	\label{fig:Deltas-smooth-corners}
\end{figure}
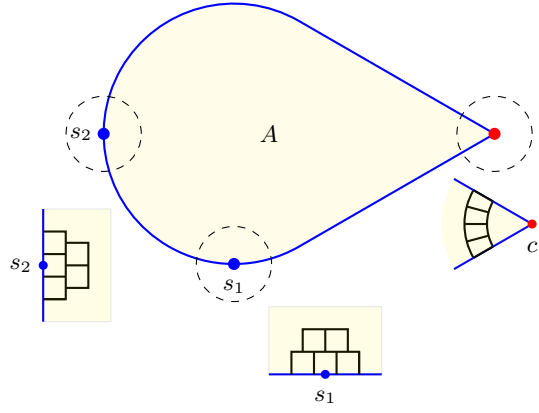

\begin{hypothesis}
    \label{hypo:corner}
    Consider a disk $A$ with a sharp corner in $\partial A$. We can decompose $\partial A = (\partial A)_{\smooth} \cup (\partial A)_{\corner}$ into the smooth part $(\partial A)_{\smooth}$ and corner part $(\partial A)_{\corner}$. We hypothesize that there exists an isometry $\mathbb{V}: \CH^{CFT}_{(\partial A)_{\smooth}}\otimes \CH^{\chi CFT}_{\text{circle}} \to \CH_{A_{\partial}}$ (i.e. $\mathbb{V}^{\dagger}\mathbb{V} = \mathbbm{1}$) such that
    \begin{align}
         K_{A_{\partial}} \eqover{\mathbb{V}}  \xi \int_{x \in (\partial A)_{\smooth}} dx T(x) + K^{CFT}_{a}  ,~~~
        \ket{\Psi} = \mathbb{V} \ket{\beta}_{(\partial A)'}
    \end{align}
    where $K^{\chi CFT}_{a}$ is the entanglement Hamiltonian from the edge CFT groundstate of an interval $a$ with opening angle equal to the angle of the sharp corner. Here $\ket{\beta}_{(\partial A)'} \simeq \ket{\beta}_{(\partial A)_{\smooth}} \otimes \ket{\Omega}_{\text{circle}}$ where $\ket{\beta}_{(\partial A)_{\smooth}}$ is a thermal double state of the reduced density matrix of a CFT thermal state on the smooth segement, and $\ket{\Omega}_{\text{circle}}$ is the edge CFT groundstate on a circle. See Fig.~\ref{fig:hypo-corner}. 
\end{hypothesis}
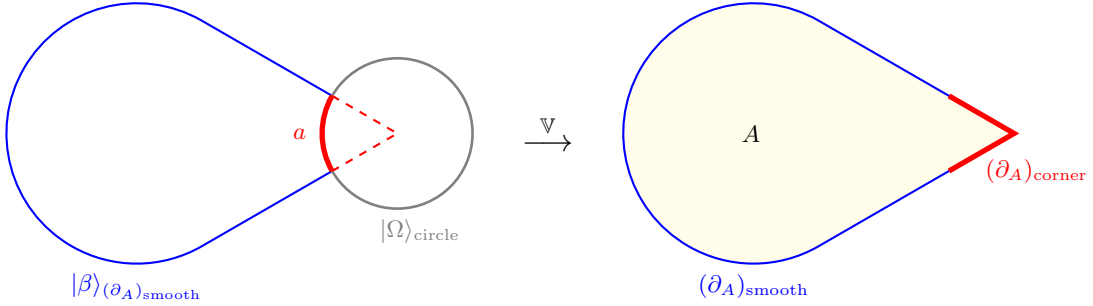
\begin{figure}[htb]
    \centering
    \begin{tikzpicture}
    \draw[thick, draw = blue] (0,0) -- (180-30:3) arc [start angle=180-30-90, end angle=180+30+90, radius={3*tan(30)}] -- cycle; 
    \filldraw[draw = gray, fill = white, line width = 1 pt] (0,0) circle (1);
    \draw[color = red, thick, dashed] (180-30:1) -- (0,0) -- (180+30:1); 
    \draw[color = red, thick, line width = 2 pt] ($(0,0) + (180-30:1)$) arc (180-30:180+30:1); 
    
    \node at ({-3/cos(30)},{-3*tan(30)-0.3}) {\color{blue} $|\beta\rangle_{(\partial_A)_{\text{smooth}}}$};
    \node at (0.3,-1.3) {\color{gray} $|\Omega\rangle_{\text{circle}}$};
    \node at (-1.3,0) {\color{red} $a$};

    \node[font = \large] at (2,0) {$\overset{\mathbb{V}}{\longrightarrow}$};

    \begin{scope}[xshift = 8.2cm]
      \filldraw[thick, draw = blue, fill = yellow!10] (0,0) -- (180-30:3) arc [start angle=180-30-90, end angle=180+30+90, radius={3*tan(30)}] -- cycle; 
      \node at (-3.5,0) {$A$};  
      \node at ({-3/cos(30)},{-3*tan(30)-0.3}) {\color{blue} $(\partial_A)_{\text{smooth}}$};
      \draw[color = red, line width = 2 pt] (180-30:1) -- (0,0) -- (180+30:1); 
      \node at (0.3,-0.5) {\color{red} $(\partial_A)_{\text{corner}}$};
    \end{scope}
  \end{tikzpicture}
	\caption{Regulating the corner region with a hole. The boundary CFT groundstate lives on the boundary of the hole.}
	\label{fig:hypo-corner}
\end{figure}

The essence of this Hypothesis~\ref{hypo:corner} is that one can regard a disk $D$ in the bulk whose size is much less than $\xi$ as an empty hole with a gapless edge. That is, once we zoom in to small length scales, one can view a chiral gapped groundstate as a ``gas of holes''. This picture is the basis of the familiar coupled-wire construction of chiral states \cite{coupled-wires, neupert2014wire, teo2014luttinger, meng2015coupled, Qi:2012ngv}, and more directly, of the construction of chiral states \cite{Brehm:2021wev, Sopenko:2023utk,Cheng:2023kxh} in terms of a lattice of chiral CFTs on the circle. The idea is that the universal properties of the wavefunction are unchanged by removing a small-enough disk;
its boundary hosts a CFT on the small circle bounding the hole, whose spectrum is gapped with a level-spacing of the inverse radius of the hole.  For a small-enough hole, the CFT is well-approximated by its groundstate.  

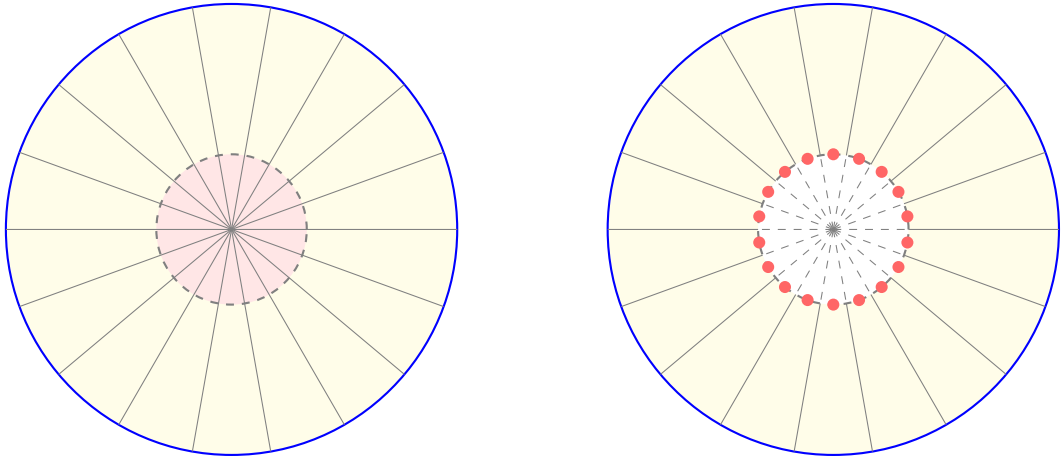
\begin{figure}[htb]
  \begin{tikzpicture}
    \filldraw[color = yellow!10, draw = blue, thick] (0,0) circle (3);
    \filldraw[color = red!10, draw = gray, thick, dashed] (0,0) circle (1); 
    \foreach \x in {1,...,18}
		  \draw[gray] (0,0) -- (20*\x : 3); 
    \begin{scope}[xshift = 8cm]
      \filldraw[color = yellow!10, draw = blue, thick] (0,0) circle (3);
      \filldraw[color = white, draw = gray, thick, dashed] (0,0) circle (1); 
      \foreach \x in {1,...,18}{
		    \draw[gray, dashed] (0,0) -- (20*\x :1);
        \draw[gray] (20*\x :1) -- (20*\x : 3); 
        \fill[red!60] (10+20*\x : 1) circle (0.08); 
      }
    \end{scope}
  \end{tikzpicture}
  \caption{Making a hole by ``blocking''. In the left figure, we divide a small disk $D$ shown by the red region into several pie-shape regions. Then we block each ``pie slice'' into a single site shown by the red dot in the right figure.}
  \label{fig:corner-blocking}
\end{figure}

This picture of ``gas of holes'' can be understood in terms of a blocking procedure and Li-Haldane conjecture. Consider a disk $D$ of radius $R_D$ shown in Fig.~\ref{fig:corner-blocking} (left). We assume $R_D$ is much larger than the lattice spacing but smaller than the correlation length. We divide the disk $D$ into a set of pie-slices, then block each pie slice into a single site living on $\partial D$. Then we apply Li-Haldane conjecture, namely, $\rho_D = e^{ - { \xi \over R_D } L_0 }$, where $L_0$ is an operator that encodes the spectrum of the chiral CFT with integer spacing. When the radius of the disk $D$ is small enough ($R_D/\xi \ll 1$), 
the reduced density matrix is effectively the projector onto the groundstate of the boundary CFT. The fact that the entanglement between $D$ and its complement decreases rapidly as $e^{- 1/R_D}$ when we shrink the region is why we can freely remove small holes. 

To use this picture to understand the corner entanglement structure, we simply use our freedom to make holes to replace the tip of the corner with a hole of radius $\epsilon \sim \xi$ as shown in Fig.~\ref{fig:hypo-corner}. 
Now this part of the entanglement boundary of $D$ is replaced by an actual gapless boundary, hosting the edge CFT in its groundstate. This contributes the familiar $K^{CFT}_a$ entanglement Hamiltonian of an interval in a CFT groundstate.  

A second argument for this `gas of holes' picture is based on the fact that the low-energy theory of a chiral gapped state is a (very special) 2+1d conformal field theory.  One of the elements of the conformal group is the inversion transformation.  
By an inversion, we can map an arbitrarily small hole to a physical gapless boundary taken to be arbitrarily large and smooth. We have a good understanding of the entanglement structure of such a gapless edge \cite{Vir,kim2024conformalgeometryentanglement} that can now be applied to the neighborhood of the hole.

Both of these arguments imply that one can just regard the corner region as part of a gapless physical boundary as illustrated in Fig.~\ref{fig:hypo-corner}. Therefore, one can apply all of the technology about the entanglement of the gapless edge in \cite{Vir,kim2024conformalgeometryentanglement} to the neighborhood of a small hole, which motivates the logical framework described in the next subsection. 

A technical question one might ask is, why we can use tensor product in $\ket{\beta}_{(\partial A)'} \simeq  \ket{\beta}_{(\partial A)_{\smooth}} \otimes \ket{\Omega}_{\text{circle}}$ and what do we mean by the $\simeq$? Let us use $(\partial A)_{+}$ to denote the entanglement boundary with a thickness $\xi$, as shown by the blue region in Fig.~\ref{fig:corner-region}. Because of the \Azero condition~\ref{assumption:A0} and its consequence Eq.~\eqref{eq:A0}, there are no correlations between the corner region $c$ and $(\partial A)_{+} \setminus (bc)$ and as a result one can separate out the corner region $c$ with a buffer $b$ of size $\xi$ as shown in Fig.~\ref{fig:buffer-junction} from the rest of the entanglement boundary $(\partial A)_{+}\backslash (bc)$. Since the scope of this paper is to study the entanglement of the smooth segment and the corner region $c$ \emph{individually}, for now it is enough to make the hypothesis using $\ket{\beta}_{(\partial A)_{\smooth}} \otimes \ket{\Omega}_{\text{circle}}$ and be agnostic about the connecting region, namely the buffer $b$, between the corner and the smooth entanglement segment. That is, later when we examine the corner entanglement, in $K_{A_\partial}$, which is a sum over local operators along $\partial A$, we will design a scheme that separates out the term localized in the corner region. The contribution from $(\partial A)_{+} \cap b$, even though its explicit form is unknown, will be canceled. To study the entanglement of connection region $b$ requires an interpolation between the corner region (zero temperature CFT) and the smooth entanglement (high temperature CFT) and we will leave this for future investigation.

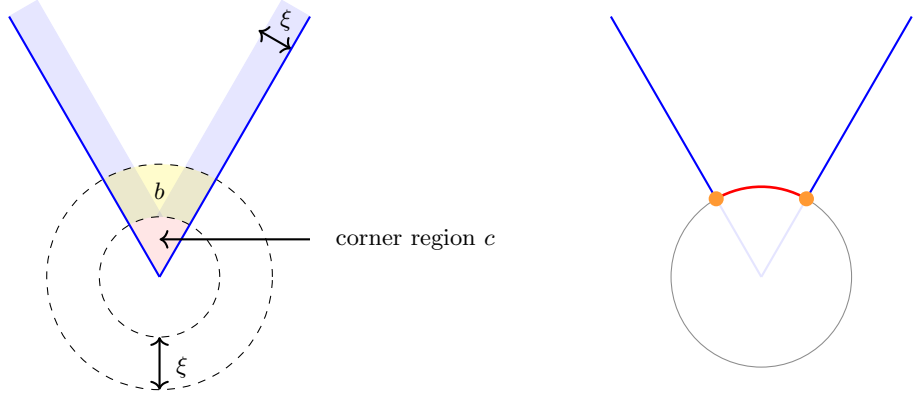
\begin{figure}[htb]
    \centering
    \begin{tikzpicture}
    \fill[color = blue, opacity = 0.1, draw = blue] (0,0) -- (120:4) -- ($(120:4) + (30:{0.5 / 2 * sqrt(3)})$) -- ($(0,0)+(60:0.5)$); 
    \fill[color = blue, opacity = 0.1, draw = blue] (0,0) -- (60:4) -- ($(60:4) + (150:{0.5 / 2 * sqrt(3)})$) -- ($(0,0)+(120:0.5)$); 
    \draw[<->, thick] (60:3.5) -- ($(60:3.5) + (150:0.5)$); 
    \node at ($(60:3.8) + (150:0.25)$) {$\xi$}; 

    \fill[color = red!10] (0,0) -- (60:0.8) arc (60:120:0.8) -- cycle;

    \fill[color = yellow, opacity = 0.2] (60:0.8) arc (60:120:0.8) -- (120:1.5) arc (120:60:1.5) -- cycle;

    \draw[dashed] (0,0) circle (0.8);
    \draw[dashed] (0,0) circle (1.5);
    \draw[thick, color = blue] (0,0) -- (60:4);
    \draw[thick, color = blue] (0,0) -- (120:4); 

    \draw[<->, thick] (-90:0.8) -- (-90:1.5); 
    \node at ($(-90:1.2) + (0:0.3)$) {$\xi$};
    
    \node at (0,1.15) {$b$};

    \draw[->, thick] (2,0.5) -- (0,0.5); 
    \node at (3.4, 0.5) {corner region $c$}; 

    \begin{scope}[xshift = 8cm]
      \draw[thick, color = blue] (0,0) -- (60:4);
      \draw[thick, color = blue] (0,0) -- (120:4);
      \filldraw[color = white, draw = gray, opacity = 0.9] (0,0) circle (1.2);  
      \draw[color = red, line width = 1 pt] (60:1.2) arc (60:120:1.2); 
      \fill[color = orange!80] (60:1.2) circle (0.1); 
      \fill[color = orange!80] (120:1.2) circle (0.1); 
    \end{scope}
  \end{tikzpicture}
	\caption{(Left) The corner is buffered away from the rest of the entanglement boundary. (Right) The junctions, denoted by the two dots, between the entanglement boundary and the hole of CFT.}
	\label{fig:buffer-junction}
\end{figure}

\subsection{Further remarks}
In this subsection, we are going to remark on several possible questions about the Hypothesis~\ref{hypo:corner}. 

Firstly, how to understand the ``hole'' around the corner? Is it a real hole? In a generic chiral gapped state on a lattice, there isn't really a hole around the corner unless one constructs it that way as in \cite{Brehm:2021wev, Sopenko:2023utk,Cheng:2023kxh}. However, the hole is meant to be a regulation scheme of the corner geometry in which the universal entanglement properties of the corner is manifest in a infrared (IR) and continuum limit. Let us unpack this sentence: to begin with, the IR limit (RG fixed point) of a chiral gapped phase has to be described by a system where any local regions contains infinite amount of degrees of freedom, and hence the IR limit is described by a quantum field theory with correlation length $\xi \to 0$. Under such a limit when the system is scale invariant, for a disk of radius $R$, there is only two possible limit $R/\xi \to \infty$ or $R/\xi \to 0$. A corner region, which is defined as a region with linear size less than $xi$ as we mentioned earlier, is about the second kind. For such a disk, it is described by a pure state $\rho = e^{-\xi / R L_0} \to \ket{0}\bra{0}$, the groundstate of $L_0$ which encodes a chiral CFT spectrum. A corner region is part of such a disk. Furthermore, we wish to capture the universal feature in the entanglement of the corner region as a function of the opening angle. There could be non-universal corner contribution existing as well, but we do not want to include them in our hypothesized picture. Later we shall mention a scheme to remove non-universal contributions so that the hypothesis can be tested in a model not at the IR fixed point. When under such IR limit, there isn't really a difference among the following three scenarios: (1) the hole is there when one constructed the system as in \cite{Brehm:2021wev, Sopenko:2023utk, Cheng:2023kxh}; (2) on the original system, one pokes a hole by tracing out a disk of radius $R/\xi \to 0$; (3) there is no hole, but we imagine there is a hole to obtain entanglement relations about corners. 

Secondly, we remark that one can also regard a point on the smooth part of the entanglement boundary as a corner with angle $\pi$, even thought its contribution is often included in the area law part so that one can set $S_{\corner}(\theta = \pi) = 0$. The corner regulation picture in Hypothesis~\ref{hypo:corner} is necessary when one wants to consistently discuss the entanglement near a point that is divided into more then two pieces (i.e. a junction of entanglement segments). For example, in a T-shape division where one of the entanglement boundary segment is flat, one still needs to regulate it so that one can discuss the other entanglement of the $\pi/2$ corners in the meantime. 

Lastly, we comment on the connecting junctions between the corner and the smooth segment of the entanglement boundary in Fig.~\ref{fig:buffer-junction}. Indeed, there could be entanglement contributions from the connecting ``junctions''. This can be regard as some extra local operators supported on the junction points in addition to the $\xi \int_{x \in (\partial A)_{\smooth}}dx T(x) + K_a^{CFT}$ in Hypothesis~\ref{hypo:corner}. As a result, one can cancel them by taking a certain linear combinations of entanglement Hamiltonians when one examine the universal properties of entanglement about corners themselves. We shall leave the details for such connecting operators to the future investigation. 

\section{Universal properties regarding corner entanglement}
\label{sec:universal-corner-contributions}

Having introduced the Hypothesis~\ref{hypo:corner}, we now use it to derive various universal properties that one can verify on a chiral gapped wavefunction. 

\subsection{Corner entanglement entropies}  

Following from Hypothesis~\ref{hypo:corner}, one can compute the entanglement entropy $S_A = \bra{\Psi}K_A \ket{\Psi}$ and conclude that around the corner region there is a contribution $f(\theta)$ of the form 
\begin{align}\label{eq:f-corner}
     f(\theta) = \frac{c_{\tot}}{6} \log \sin\( \frac{\theta}{2}\) + f_{\text{non-uni}},
\end{align}
where $\theta$ is the opening angle. The derivation is explained below in details. 

With the regulation in Fig.~\ref{fig:hypo-corner}, we replace the sharp corner with a hole and envision the edge CFT groundstate lives on the boundary of the hole. The entanglement from the sharp corner segement $(\partial A)_{\corner}$ is now described by the entanglement of an interval $a$ in the edge CFT groundstate groundstate on the circle. More precisely, consider a disk $D$ of radius less than $\xi$ but much larger than the lattice spacing, centered at the tip of the corner. We use the entanglement between $a$ and its complement $\bar{a}$ [Fig.~\ref{fig:corner-to-inteval} (right)] in the edge CFT groundstate on the circle to describe the entanglement between $A_{\corner}$ and $D \backslash A_{\corner}$ [Fig.~\ref{fig:corner-to-inteval} (left)]. This is in align with the picture of blocking [Fig.~\ref{fig:corner-blocking}], that here we block $A_{\corner}$ into $a$ and $D\backslash A_{\corner}$ into $\bar{a}$. Under such a procedure, the entanglement between $A_{\corner}$ and $D \backslash A_{\corner}$ is the same as that between $a,\bar{a}$. 

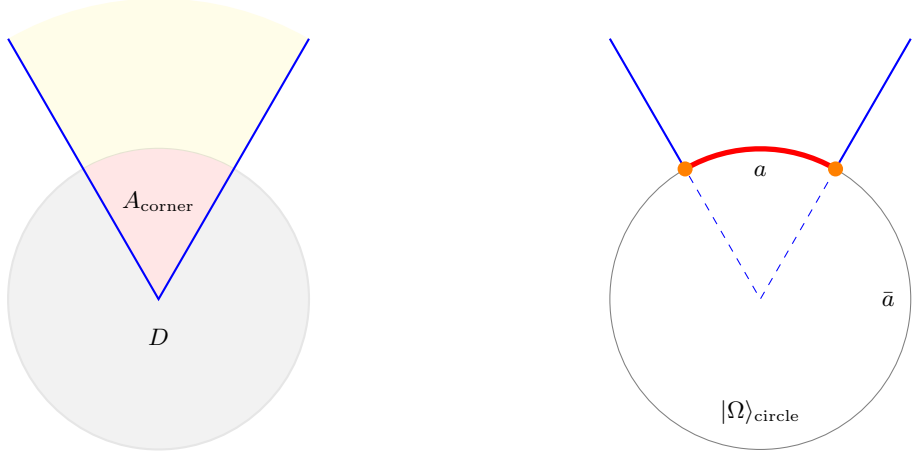
\begin{figure}[htb]
\centering
  \begin{tikzpicture}
    \fill[color = yellow!10] (0,0) -- (90-30:4) arc (90-30:90+30:4) -- cycle; 
    \filldraw[color = gray, opacity = 0.1, draw = black, thick] (0,0) circle (2); 
    \fill[color = red!10] (0,0) -- (90-30:2) arc (90-30:90+30:2) -- cycle;
    \draw[blue, thick] (90+30:4) -- (0,0) -- (90-30:4);  
    \node at (90:1.3) {$A_{\text{corner}}$};
    \node at (90:-0.5) {$D$}; 
    \begin{scope}[xshift = 8cm]
      \draw[blue, thick] (90+30:4) -- (90+30:2); 
      \draw[blue, thick] (90-30:4) -- (90-30:2); 
      \draw[blue, dashed] (90+30:4) -- (0,0) -- (90-30:4);
      \draw[gray] (0,0) circle (2); 
      \draw[red, line width = 2 pt] (90-30:2) arc (90-30:90+30:2); 
      \fill[orange] (90-30:2) circle (0.1);
      \fill[orange] (90+30:2) circle (0.1);
      \node at (90:1.7) {$a$};
      \node at (0:1.7) {$\bar{a}$};
      \node at (90:-1.5) {$|\Omega\rangle_{\text{circle}}$}; 
    \end{scope}
  \end{tikzpicture}
  \caption{We use the entanglement beween $a$ and $\bar{a}$ on the edge CFT groundstate on a circle (right) to describe the entanglement between $A_{\corner}$ and $D\backslash A_{\corner}$ (left).}
  \label{fig:corner-to-inteval}
\end{figure}

Recall that for a CFT groundstate on a circle of length $L$, the entanglement entropy of an interval of length $\ell$ is of the form \cite{Calabrese:2004eu}
\begin{align}\label{eq:S-CFT}
     S(\ell) = \frac{c_{\tot}}{6} \log\( \frac{L}{\pi a'} \sin\( \frac{\pi \ell}{L} \) \) + c'_1,
\end{align}
where $a'$ is the cutoff scale, $c'_1$ is a non-universal constant and $c_{tot} = c + \bar{c}$ is the total central charge. With our regulation scheme, one can obtain Eq.~\eqref{eq:f-corner} by applying Eq.~\eqref{eq:S-CFT} with $\theta = \pi \ell /L$. Here we separate out the angle dependence and combine the non-universal piece into the additive constant $f_{\text{non-uni}}$. In a generic representative wavefunction of a chiral gapped phase, there could be ultraviolet (UV) contributions that are governed by physics on the scale of a few lattice spacings; this is the source of $f_{\text{non-uni}}$. Therefore, in the regulation picture [Fig.~\ref{fig:corner-to-inteval} (right)], it is plausible to assume that $f_{\text{non-uni}}$ depends \emph{only} on the juncation points. Such a UV contribution does not have to be uniform everywhere, and therefore $f_{\text{non-uni}}$ could potentially vary with the location of the junctions.  

How to test Eq.~\eqref{eq:f-corner} on a generic chiral gapped wavefunction? We need to single out the corner contribution $f(\theta)$ and remove the non-universal piece in it. For this purpose, we design two linear combinations of entanglement entropies on a ``corner conformal ruler''\footnote{This name is borrowed from \cite{kim2024conformalgeometryentanglement}.} $\mathbb{D} = (A,A',B,C,C')$ in Fig.~\ref{fig:corner-conf-ruler} and derive that
\begin{align}\label{eq:Delta-c-eta}
    \Delta(AA',B,CC') = S_{AA'B} + S_{CC'B} - S_{AA'} - S_{CC'} =  - \frac{c_{\tot}}{6} \ln (\eta_g) \\ \label{eq:I-c-eta}
    I(A:C|B) = S_{AB} + S_{BC} - S_{ABC} - S_{B} =  - \frac{c_{\tot}}{6} \ln (1-\eta_g),
\end{align}
where 
\begin{align}\label{eq:eta-g}
     \eta_g = \frac{l_a l_c}{l_{ab} l_{bc}}
\end{align}
is the geometric cross-ratio, defined with the chord distance $l_{\bullet} =R \sin(\theta_{\bullet}/2)$ as shown in Fig.~\ref{fig:corner-conf-ruler}. One can verify that the area law contribution in the entanglement entropies are canceled in these linear combinations. Moreover, the UV contributions from the scale of few lattice spacing on the junctions in the regulation picture are also canceled. Hence, such relations can capture the universal entanglement entropy contributions from the corner in a robust manner.  

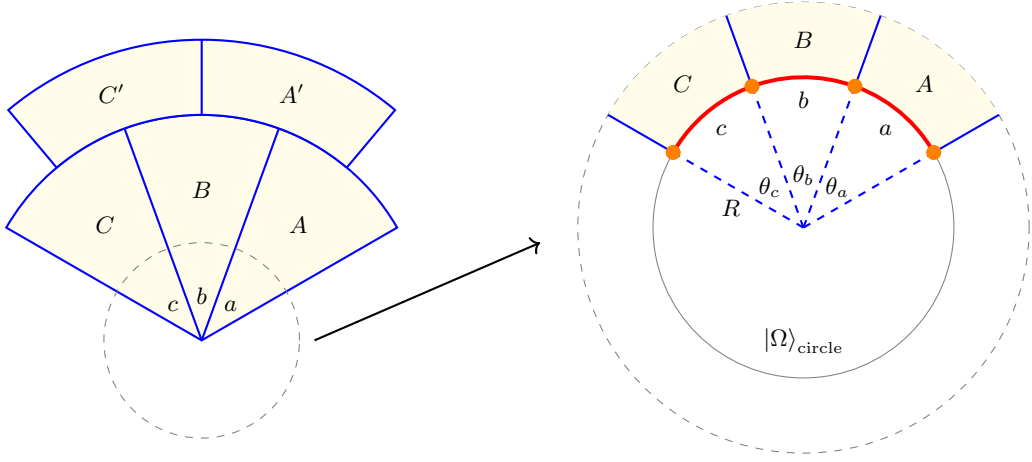
\begin{figure}[htb]
    \centering
    \begin{tikzpicture}
    \filldraw[thick, fill = yellow!10, draw = blue] (0,0) -- (90-60:3) arc (90-60:90+60:3) -- cycle; 
    \filldraw[thick, fill = yellow!10, draw = blue] (90-40:3) -- (90-40:4) arc (90-40:90+40:4) -- (90+40:3) arc (90+40:90-40:3); 
    \draw[thick, color = blue] (0,0) -- (90-20:3); 
    \draw[thick, color = blue] (0,0) -- (90+20:3); 
    \draw[thick, color = blue] (90:3) -- (90:4); 
    \node at (90-40:0.6) {$a$}; 
    \node at (90:0.6) {$b$}; 
    \node at (90+40:0.6) {$c$}; 
    \node at (90-40:2) {$A$}; 
    \node at (90:2) {$B$}; 
    \node at (90+40:2) {$C$}; 
    \node at (90-20:3.5) {$A'$};  
    \node at (90+20:3.5) {$C'$};  

    \draw[gray, dashed] (0,0) circle (1.3);
    \draw[->, thick] (1.5,0) -- (4.5,1.3);

    \begin{scope}[xshift = 8cm, yshift = 1.5cm]
    \draw[gray, dashed] (0,0) circle (3);
    \fill[color = yellow!10] (0,0) -- (90-60:3) arc (90-60:90+60:3) -- cycle;
    \filldraw[color = white, draw = gray] (0,0) circle (2); 

    \draw[blue,thick] (0,0) edge[dashed] (90-60:2) (90-60:2) -- (90-60:3);
    \draw[blue,thick] (0,0) edge[dashed] (90-20:2) (90-20:2) -- (90-20:3);
    \draw[blue,thick] (0,0) edge[dashed] (90+20:2) (90+20:2) -- (90+20:3);
    \draw[blue,thick] (0,0) edge[dashed] (90+60:2) (90+60:2) -- (90+60:3);
    \draw[color = red,line width = 1.5 pt] (90-60:2) arc (90-60:90+60:2);

    \fill[color = orange] (90-60:2) circle (0.1);
    \fill[color = orange] (90-20:2) circle (0.1);
    \fill[color = orange] (90+20:2) circle (0.1);
    \fill[color = orange] (90+60:2) circle (0.1);

    \node at (0, -1.5) {$\ket{\Omega}_{\text{circle}}$};
    \node at (90-40:2.5) {$A$}; 
    \node at (90:2.5) {$B$}; 
    \node at (90+40:2.5) {$C$}; 
    \node at (90-40:1.7) {$a$}; 
    \node at (90:1.7) {$b$}; 
    \node at (90+40:1.7) {$c$};  
    \node at (90-40:0.7) {$\theta_{a}$}; 
    \node at (90:0.7) {$\theta_{b}$}; 
    \node at (90+40:0.7) {$\theta_{c}$}; 
    \node at (90+75:1) {$R$};
  \end{scope}
  \end{tikzpicture}
	\caption{Corner conformal ruler (left) and a zoom-in to the corner regions regulated by hole with a CFT groundstate $\ket{\Omega}_{\text{circle}}$ on the boundary (right).}
	\label{fig:corner-conf-ruler}
\end{figure}

\subsection{Corner vector fixed-point equation}
\phantomsection\label{subsec:stationary-VFE-implied}
On the corner conformal ruler Fig.~\ref{fig:corner-conf-ruler}, we can further obtain a vector fixed point equation (VFPE), namely 
\begin{align}\label{eq:VFE-implied}
     K_{\mathbb{D}}(\eta_g)  \ket{\Psi} = \frac{c_{\tot}}{6} h(\eta_g) \ket{\Psi},
\end{align}
where 
\begin{align}
    K_{\mathbb{D}}(\eta) &= \eta \hat{\Delta}(AA',B,CC') + (1-\eta) \hat{I}(A:C|B) \\ 
     \hat{\Delta}(AA',B,CC') &= K_{AA'B} + K_{CC'B} - K_{AA'} - K_{CC'}  \\ 
     \hat{I}(A:C|B) &= K_{AB} + K_{BC} - K_{ABC} - K_{B}. 
\end{align}
This can be derived by combining the Hypothesis~\ref{hypo:corner} with the results in \cite{Vir,Lin:2023pvl}. For a purely chiral\footnote{By ``purely chiral'', we mean there is only one chiral component. This is only for simplicity. To generalize the argument with anti-chiral component, one can just replace the holomorphic stress-energy tensor $T(x)$ with the addition $T(x) + \bar{T}(x)$.} CFT groundstate on a circle with length $L$, the entanglement Hamiltonian of an interval $[x_a,x_b]$ is of the form 
\begin{align}
     K^{\chi CFT}_{[x_a,x_b]} &= \int dx \beta_{[x_a,x_b]} T(x) + \kappa_{[x_a,x_b]}\mathbbm{1}, \\ 
    \beta_{[x_a,x_b]} &= 2\Theta(x-x_a) \Theta(x_b-x) \frac{\sin((x-x_a)/2)\sin((x_b-x)/2)}{\sin((x_b-a)/2)} \label{eq:coolness}
\end{align}
where $\Theta(x)$ is the Heaviside step function and $\kappa_{[x_a,x_b]}$ is a c-number such that the expectation value on the groundstate $\langle K^{\chi CFT}_{[x_a,x_b]} \rangle = S_{[x_a,x_b]}^{\chi CFT}$\footnote{Because of the Casimir energy, $\langle T(x)\rangle \neq 0$, $\kappa_{[x_a,x_b]}$ is not exactly equal to the entanglement entropy.}. One can then verify with Hypothesis~\ref{hypo:corner}, that 
\begin{align}
     K_{\mathbb{D}}(\eta_g) &\eqover{\mathbb{V}} \eta_g \( K^{\chi CFT}_{ab}  + K^{\chi CFT}_{bc} - K^{\chi CFT}_a - K^{\chi CFT}_c\) + (1-\eta_g)(K^{\chi CFT}_{ab} + K^{\chi CFT}_{bc} - K^{\chi CFT}_b - K^{\chi CFT}_{abc})  \\ 
    &= \int dx \left[ \eta_g \( \beta_{ab} (x) + \beta_{bc}(x) - \beta_a(x) - \beta_c(x)\) + (1-\eta_g)(\beta_{ab}(x) + \beta_{bc}(x) - \beta_b(x) - \beta_{abc}(x)) \right] T(x) + \alpha \mathbbm{1} \label{eq:int-beta-T}\\ 
    & = \alpha \mathbbm{1}. 
\end{align}
where $O_1 \eqover{\mathbb{V}} O_2$ is a short hand notation for $O_1 \mathbb{V} = \mathbb{V}O_2$ and $\alpha$ is a c-number. With $\eta_g$ computed using Eq.~\eqref{eq:eta-g} by applying the Hypothesis~\ref{hypo:corner}, one can show that the weight function in front of $T(x)$ in Eq.~\eqref{eq:int-beta-T} vanishes as discovered in \cite{Lin:2023pvl}. By taking the expectation value, one can obtain the quantity $\alpha = c_{\tot}/6 \cdot h(\eta_g)$ where $h(x) = - x \log(x) - (1-x) \log(1-x)$. 

\subsection{Modular commutators with corners}
One can also use Hypothesis~\ref{hypo:corner} to derive the formula proposed in \cite{Kim2021,Kim:2021tse} relating the modular commutator with the chiral central charge. Moreover, one can derive a formula for modular commutators on an ``incomplete disk'' [Fig.~\ref{fig:J-incomplete} (left)] that are studied numerically in \cite{Fan:2022ipl}. One can also explain the results about modular commutators with both bulk and edge contributions in \cite{Vir,kim2024conformalgeometryentanglement}. 

\subsubsection{Preparation: modular commutator in chiral CFT}
As a preparation, we first discuss the computation of the modular commutator in a chiral CFT groundstate $\ket{\Omega}$ on a circle of length $L$. This question was studied in \cite{Zou:2022nuj}. Upon a close examination, we obtained a slightly generalized version of the formula derived in \cite{Zou:2022nuj}, with a regulation for possible discontinuities as we will mention below. 

Consider two generic intervals $[a,b]$ and $[c,d]$ on the circle. We found that the modular commutator $J_{[a,b],[c,d]}$ is of the form 
\begin{align}\label{eq:J-chiral-CFT}
     J_{[a,b],[c,d]} \equiv \ii \langle \Omega | [K_{[a,b]}^{\chi CFT}, K_{[c,d]}^{\chi CFT}] | \Omega \rangle = \frac{\pi c_{-}}{6}(1- 2 \tilde{\eta}(a,c,b,d)) F_{[a,b],[c,d]}, 
\end{align}
where $\tilde{\eta}$ as a function of $a,c,b,d$ is defined as 
\begin{align}
     \tilde{\eta}(a,c,b,d) = \frac{\sin\( \frac{\pi (c-a)}{L} \) \sin\( \frac{\pi (d-b)}{L} \)}{\sin\( \frac{\pi (b-a)}{L} \) \sin\( \frac{\pi (d-c)}{L} \)}, 
\end{align}
and $F_{[a,b],[c,d]}$ is defined as 
\begin{align}
     F_{[a,b],[c,d]} = \Theta_{[c,d]}(b) - \Theta_{[c,d]}(a)
\end{align}
with $\Theta_{[a,b]}(x)$ being the $2\pi$-periodic extension of the product of Heaviside step functions $\Theta(x-a)\Theta(b-x)$, explicitly, 
\begin{align}\label{eq:Theta-x}
    \Theta_{[a,b]}(x) \equiv \left\{ \begin{aligned}
        &1,& &\text{ if } x \in (a,b) \\ 
        &1/2,&  &\text{ if } x = a \text{ or }x = b \\
        &0,& & \text{ otherwise} 
    \end{aligned} \right., \quad x \in S^1. 
\end{align}
In App.~\ref{app:TT}, we give two derivations of Eq.~\eqref{eq:J-chiral-CFT}, one from the operator product expansion (OPE) of the stress-energy tensor, and one using the Virasoro algebra. From the derivation one can see the origin of this function $F_{[a,b],[c,d]}$. 

The main difference between our result Eq.~\eqref{eq:J-chiral-CFT} and the result in \cite{Zou:2022nuj} is this factor $F_{[a,b],[c,d]}$.  Including this factor, the result can be applied for more general cases. The authors of \cite{Zou:2022nuj} consider the case $a<c<b<d$, in which case $F_{[a,b],[c,d]} = -1$, so we obtain the same result as the one in \cite{Zou:2022nuj}. In our result Eq.~\eqref{eq:J-chiral-CFT}, we do not demand any relations between the two intervals $[a,b]$ and $[c,d]$. In particular, there is a case where end points of the two intervals coincide, such as $a = c < b < d$, and our result provide a regulation for such cases. 

Let us explain more why the suitable regulation is $\Theta_{[a,b]}(x = a) = \Theta_{[a,b]}(x = b) = \frac{1}{2}$ in $F_{[a,b],[c,d]}$. This in fact comes from both the OPE derivation and Virasoro algebra derivation of Eq.~\eqref{eq:J-chiral-CFT} in App.~\ref{app:TT}. The OPE computation is rather technical. The Virasoro algebra computation gives a clear indication that this is the right choice. Since the Virasoro generators are the Fourier modes of the stress-energy tensor, as one can imagine, the function $\Theta_{[a,b]}(x)$ in the computation arises in the form of its Fourier series. This will result in Eq.~\eqref{eq:Theta-x} because the Fourier series of a function $f(x)$ with a jump discontinuity $f(x \to x_{*}^+)\neq f(x \to x_{*}^-)$ converges to the middle point $(f(x \to x_{*}^+) + f(x \to x_{*}^-))/2$ at the discontinuous point $x_*$. Therefore, we have the following results for such special cases: 
\begin{equation}
    J_{[a,b],[c,d]} = \left\{ \begin{aligned} 
        &-\frac{\pi c_{-}}{12},&&\text{ if } a = c < b < d \\ 
        &\frac{\pi c_{-}}{12},&& \text{ if } a<c<b = d. 
    \end{aligned} \right.
\end{equation}

\subsubsection{Modular commutator from corners}
Now let us compute the modular commutator 
\begin{align}
     J(A,B,C) \equiv \ii \langle \Psi | [K_{AB},K_{BC}] |\Psi\rangle,
\end{align}
where $ABC$ is shown in Fig.~\ref{fig:J-incomplete}.

\begin{figure}[htb]
  \begin{tikzpicture}[scale = 0.8]
    \filldraw[color = yellow!10, draw = black, thick] (0,0) -- (90:4) arc (90:90+360-30:4) -- cycle; 
    \draw[thick] (0,0) -- (90+120:4); 
    \draw[thick] (0,0) -- (90+240:4); 
    \node[font = \LARGE] at (90+60:2.5) {$A$}; 
    \node[font = \LARGE] at (90+60+120:2.5) {$B$}; 
    \node[font = \LARGE] at (90+60+240-15:2.5) {$C$};
    
    \node[font = \large] at (90+60:0.5) {$\theta_{a}$}; 
    \node[font = \large] at (90+60+120:0.5) {$\theta_{b}$}; 
    \node[font = \large] at (90+60+240-15:0.5) {$\theta_{c}$};

    \begin{scope}[xshift = 10cm]
    \filldraw[color = yellow!10] (0,0) -- (90:4) arc (90:90+360-30:4) -- cycle;  
    \node[font = \LARGE] at (90+60:2.5) {$A$}; 
    \node[font = \LARGE] at (90+60+120:2.5) {$B$}; 
    \node[font = \LARGE] at (90+60+240-15:2.5) {$C$};
    \draw[color = blue, line width = 2 pt] (0,0) -- (90:4) arc  (90:90+240:4) -- cycle;
    \draw[color = red, line width = 4 pt, opacity = 0.5] (0,0) -- (90+120:4) arc  (90+120:90+330:4) -- cycle; 
    \filldraw[color = white, draw = gray] (0,0) circle (1); 

    \draw[blue, line width = 2 pt] (90:1) arc (90:90+240:1); 
    \draw[red, line width = 4 pt, opacity = 0.5] (90+120:1) arc (90+120:90+330:1); 

    \filldraw[color = white, draw = gray] (90+120:4) circle (1);
    \draw[blue, line width = 2 pt] (90+120:4) ++ (-{acos(1/8)+30}:1) arc (-{acos(1/8)+30}:{acos(1/8)+30}:1);
    \draw[red, line width = 4 pt, opacity = 0.5] (90+120:4) ++ (30:1) arc (30:{-acos(1/8)+30}:1);
    
    \filldraw[color = white, draw = gray] (90+240:4) circle (1);
    \draw[blue, line width = 2 pt] (90+240:4) ++ (150:1) arc (150:{acos(1/8)+150}:1);

    \draw[red, line width = 4 pt, opacity = 0.5] (90+240:4) ++ ({150-acos(1/8)}:1) arc ({150-acos(1/8)}:{acos(1/8)+150}:1);

    \draw[dashed] ({90+120-2*asin(1/8)}:4) arc ({90+120-2*asin(1/8)}:{90+120+2*asin(1/8)}:4);
    \draw[dashed] (90+120:4-1) -- (90+120:4);

    \draw[dashed] ({90+240-2*asin(1/8)}:4) arc ({90+240-2*asin(1/8)}:{90+240+2*asin(1/8)}:4);
    \draw[dashed] (90+240:4-1) -- (90+240:4);

    \draw[dashed] (0,0) -- (90:1); 
    \draw[dashed] (0,0) -- (90+120:1); 
    \draw[dashed] (0,0) -- (90+240:1); 
    \draw[dashed] (0,0) -- (90+330:1); 

    \node[font = \large] at (90+60:1.4) {$a_1$};
    \node[font = \large] at (-0.4, 2.5) {$a_2$}; 
    \node[font = \large] at (90+60:3.6) {$a_3$};
    \node[font = \large] at ($(90+120:4) + (90-20:1.4)$) {$a_4$};
    \node[font = \large] at ($(90+120:2) + (0,0.4)$) {$a_5$}; 
    
    \node[font = \large] at (90+60+120:1.4) {$b_1$}; 
    \node[font = \large] at ($(90+120:4) + (-15:1.4)$) {$b_2$};
    \node[font = \large] at (-90:3.6) {$b_3$};
    \node[font = \large] at ($(90+240:4) + (180+15:1.4)$) {$b_3$};
    \node[font = \large] at ($(90+240:2) + (0,-0.4)$) {$b_4$};
    
    \node[font = \large] at (90+60+240-15:1.4) {$c_1$};
    \node[font = \large] at ($(90+240:4) + (90+15:1.4)$) {$c_2$};
    \node[font = \large] at (90+60+240-15:3.6) {$c_3$};
    \node[font = \large] at ($(90+330:3) + (0.1,-0.6)$) {$c_4$};
  \end{scope}
  \end{tikzpicture}
  \caption{Modular commutator on incomplete disk (left) and its corner regulations (right).}
  \label{fig:J-incomplete}
\end{figure}
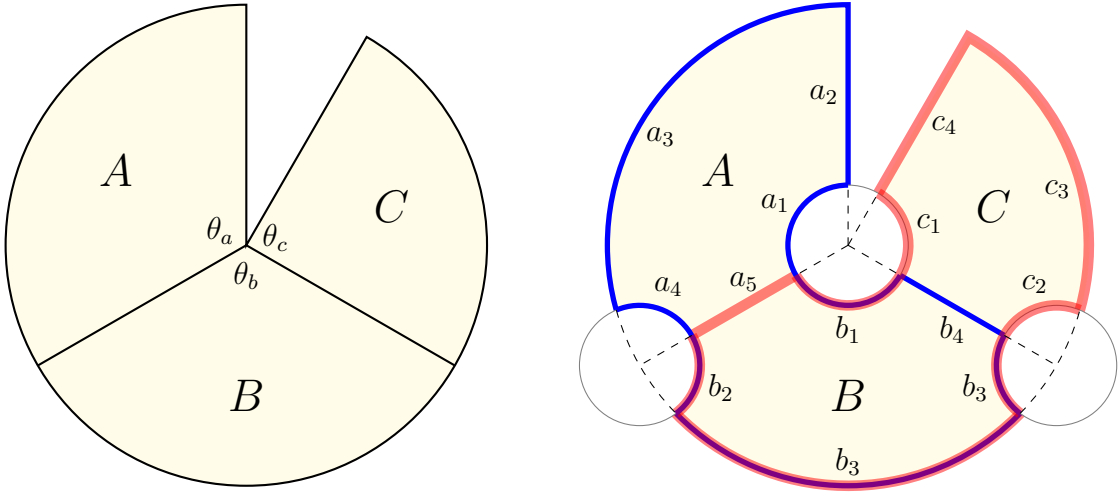

Using Hypothesis~\ref{hypo:corner}, we can convert the computation to the CFT computation in Fig.~\ref{fig:J-incomplete}. We first replace the $K_{AB}$ and $K_{BC}$ with 
\begin{align}
     K_{AB} &\eqover{\mathbb{V}} \beta \int_{x \in a_2 \cup b_3 \cup b_4} dx T(x) + K_{a_1b_1}^{\chi\CFT} + K_{a_1b_2}^{\chi\CFT} + K_{b_3}^{\chi\CFT} \\ 
     K_{BC} &\eqover{\mathbb{V}} \beta \int_{x \in a_5 \cup b_3 \cup c_3 \cup c_4 } dx T(x) + K_{b_1c_1}^{\chi\CFT} + K_{b_2}^{\chi\CFT} + K_{b_3c_2}^{\chi\CFT}. 
\end{align}
Then, the commutator is 
\begin{align}
     \ii \langle \Psi| [K_{AB}, K_{BC}] | \Psi\rangle = J_{a_1b_1,b_1c_1} + J_{a_4b_2,b_2} + J_{b_3,b_3c_2},
\end{align}
where $J_{x,y} = \ii \bra{\Omega} [K_x^{\chi\CFT}, K_y^{\chi\CFT}]\ket{\Omega}$ for two intervals $x$ and $y$. 
Using the result Eq.~\eqref{eq:J-chiral-CFT}, the values are 
\begin{align}\label{eq:J1}
    J_{a_1b_1,b_1c_1} = \frac{\pi c_{-}}{6}(2 \eta - 1) \\ 
    J_{a_4b_2,b_2} = J_{b_3,b_3c_2} = \frac{\pi c_{-}}{12},\label{eq:J23}
\end{align}
and therefore 
\begin{align}
    \label{eq:J-eta}
    J(A,B,C) = \frac{\pi c_{-}}{3} \eta(a,b,c), 
\end{align}
with $\eta(a,b,c)$ being the cross-ratio computed from the opening angles $\theta_a,\theta_b,\theta_c$ shown in Fig.~\ref{fig:J-incomplete} (left):
\begin{align}
    \eta(a,b,c) = \frac{\sin(\theta_a/2) \sin(\theta_c/2)}{\sin((\theta_a + \theta_b)/2) \sin((\theta_b + \theta_c)/2)}. 
\end{align}
This result Eq.~\eqref{eq:J-eta} gives an analytic expression for the modular commutators on an incomplete disk that are studied in \cite{Fan:2022ipl}. In this paper, we also did a similar numerics to verify Eq.~\eqref{eq:J-eta} in Section~\ref{sec:num}, which shows excellent agreement when the angles $\theta_a,\theta_b,\theta_c$ are large enough. 

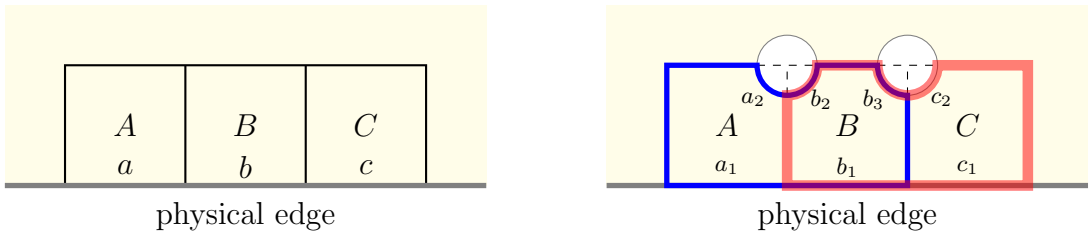
\begin{figure}[htb]
    \centering
    \begin{tikzpicture}[scale = 0.8]
    \fill[color = yellow!10] (0,0) -- (8,0) -- (8,3) -- (0,3) -- cycle; 
    \draw[thick] (1,0) -- (1,2) -- (1+6,2) -- (1+6,0); 
    \draw[thick] (1+2,0) -- (1+2,2); 
    \draw[thick] (1+4,0) -- (1+4,2); 
    \draw[line width = 2 pt, color = gray] (0,0) -- (8,0);
    \node[font = \large] at (2,1) {$A$};
    \node[font = \large] at (2+2,1) {$B$};
    \node[font = \large] at (2+4,1) {$C$};
    \node[font = \large] at (2,0.3) {$a$};
    \node[font = \large] at (2+2,0.3) {$b$};
    \node[font = \large] at (2+4,0.3) {$c$};
    \node[font = \large] at (4,-0.5) {physical edge};

    \begin{scope}[xshift = 10cm]
      \fill[color = yellow!10] (0,0) -- (8,0) -- (8,3) -- (0,3) -- cycle; 
      \draw[line width = 2 pt, color = gray] (0,0) -- (8,0);
      \node[font = \large] at (2,1) {$A$};
      \node[font = \large] at (2+2,1) {$B$};
      \node[font = \large] at (2+4,1) {$C$};
      \node[font = \large] at (4,-0.5) {physical edge};

      \filldraw[color = white, draw = gray] (1+2,2) circle (0.5); 
      \draw[dashed] (1+2-0.5,2) -- (1+2+0.5,2);
      \draw[dashed] (1+2,2-0.5) -- (1+2,2);

      \filldraw[color = white, draw = gray] (1+4,2) circle (0.5); 
      \draw[dashed] (1+4-0.5,2) -- (1+4+0.5,2);
      \draw[dashed] (1+4,2-0.5) -- (1+4,2);

      \draw[blue, line width = 2 pt] (1,0) -- (1,2) -- (1+2-0.5,2) arc (180:360:0.5) -- (1+4-0.5,2) arc (180:270:0.5) -- (1+4,0) -- cycle;
      
      \draw[red, line width = 4 pt, opacity = 0.5] (3,0) -- (3,2-0.5) arc (270:360:0.5) -- (3+2-0.5,2) arc (180:360:0.5) -- (3+4,2) -- (3+4,0) -- cycle;

      \node at (2,0.3) {$a_1$}; 
      \node at ($(3,2) + ({180+45}:0.8)$) {$a_2$}; 

      \node at (2+2,0.3) {$b_1$}; 
      \node at ($(3,2) + ({270+45}:0.8)$) {$b_2$}; 
      \node at ($(5,2) + ({180+45}:0.8)$) {$b_3$};
      
      \node at (2+2+2,0.3) {$c_1$}; 
      \node at ($(5,2) + ({270+45}:0.8)$) {$c_2$}; 
    \end{scope}
  \end{tikzpicture}
	\caption{Modular commutator near the physical edge}
	\label{fig:J-edge}
\end{figure}

Similarly, one can also compute $J(A,B,C)$ near the physical gapless edge with $A,B,C$ shown in Fig.~\ref{fig:J-edge}. Applying Hypothesis~\ref{hypo:corner}, the modular commutator is a sum of three modular commutators in chiral CFT groundstate: 
\begin{align}
    J(A,B,C) = J_{a_1b_1,b_1c_1} + J_{a_1b_2,b_2} + J_{b_3,b_3c_3},
\end{align}
where $J_{a_1b_1,b_1c_1}$ is from the physical gapless edge, and $J_{a_1b_2,b_2}, J_{b_3,b_3c_3}$ are from the sharp corners. With the similar computation, we obtain $J(A,B,C) = \frac{\pi c_{-}}{3} \eta(a,b,c)$, with $\eta(a,b,c)$ being the cross-ratios for the contiguous integrals $(a,b,c)$ in Fig.~\ref{fig:J-edge} (left). This result is consistent with the results that were derived based on several locally checkable assumptions and numerically verified in \cite{kim2024conformalgeometryentanglement}.

One can hope to apply the Hypothesis~\ref{hypo:corner} to understand the result in \cite{Park:2024hix}.

\section{Logical framework for corner entanglement}

\label{sec:logical-framework}
The results obtained above can be summarized into a logical framework. In the logical framework, we start with a set of locally-checkable conditions as axioms, and one can show that the universal properties discussed earlier are logical consequences of these axioms. These axioms, as we will explain, can be treated as a set of conditions for a low-energy RG fixed point. This understanding indicates that the properties following from the axioms will emerge at the low-energy fixed point and hence we obtain a natural explanation for why these universal properties emerge. 

Besides providing us an explanation of the emergence of the universal properties, there are also other practical uses for such a logical framework. For example, in practice, because it packs up the universal properties into a set of locally checkable conditions, one can simply verify these conditions instead of verifying these properties one by one. In other words, the verification of the axioms can quantify to what extent these universal properties are satisfied on a given wavefunction\footnote{Usually, on a generic wavefunction in the phase, the universal properties are only satisfied up to some errors due to UV degrees of freedom.}. Theoretically, this logical framework can give us a clear logical dependence among these universal properties. Utilizing this framework, one can turn physical intuition and physical pictures into concrete statements and sharpen our understanding of these universal properties. 

The axioms of the logical framework are the following. First we will assume \Azero and \Aone in the bulk [Assumption~\ref{assumption:A0} and Assumption~\ref{assumption:A1}]. These two axioms are posited on the length scale larger than the correlation length and as we explained in Section~\ref{subsec:local-ent-hmt}, these two conditions indicate that on such a length scale, one can treat the state as the zero-correlation-length RG fixed point representative of a gapped phase. Then we will make axioms about corners. We will first define a quantity $\fc$ that under the right condition can be interpreted as the minimal total central charge discussed in \cite{Siva:2021cgo}. We will posit the condition $\fc \neq 0$ and a stationarity condition. This condition naturally follows from the fact that the state does not admit gapped boundary. 
We will also posit a genericity condition to rule out some trivial states. 

Based on these axioms, we can derive a universal measure of the angles of sharp corner modulo global conformal transformations. In terms of this measure, we can obtain the results mentioned in Section~\ref{sec:universal-corner-contributions}, namely (1) the corner entanglement entropy formula, (2) the corner vector fixed point equation and (3) modular commutator for incompete disk. 

Before we dive into the details, we comment that the logical formalism described below is quite similar to the one in \cite{kim2024conformalgeometryentanglement} for the gapless physical edge. In \cite{kim2024conformalgeometryentanglement}, the main focal object is three contiguous intervals $a,b,c$ along the physical edge. The proofs of the statements from the axioms described in \cite{kim2024conformalgeometryentanglement} in fact is applicable here. One can simply transplant the proofs by imagining the corner as a small hole with a  gapless edge\footnote{Put differently, for the purpose of the logical relations in the framework, it does not matter whether $a,b,c$ are three contiguous intervals along the gapless edge or three contiguous sharp corners.}. Because of this, we shall simply summarize the axioms and the logical conclusions, and focus more on the physical understandings. For the detailed proofs we refer to \cite{kim2024conformalgeometryentanglement}.  

We remark that there is a tight relation between $\fc$ and edge ungappability. We will postpone the discussion of this relation in the next section. 

\subsection{Corner axioms, definition of $\fc$ and its stationarity}
The axioms \Azero and \Aone were already discussed in Section~\ref{sec:bulk-edge}. In this section, we will focus on the axioms regarding the corner entanglement.

\begin{figure}[thb]
    \centering
    \begin{tikzpicture}
    \filldraw[thick, fill = yellow!10, draw = blue] (0,0) -- (90-60:3) arc (90-60:90+60:3) -- cycle; 
      \filldraw[thick, fill = yellow!10, draw = blue] (90-40:3) -- (90-40:4) arc (90-40:90+40:4) -- (90+40:3) arc (90+40:90-40:3); 
      \filldraw[thick, fill = yellow!20, draw = gray] (0,0) -- (90-60:0.8) arc (90-60:90+60:0.8) -- cycle;
      \draw[thick, color = blue] (0,0) -- (90-20:3); 
      \draw[thick, color = blue] (0,0) -- (90+20:3); 
      \draw[thick, color = blue] (90:3) -- (90:4); 
      \draw[thick, color = blue] (0,0) -- (90-60:3);
      \draw[thick, color = blue] (0,0) -- (90+60:3);
      \node at (90-40:2) {$A$}; 
      \node at (90:2) {$B$}; 
      \node at (90+40:2) {$C$}; 
      \node at (90-20:3.5) {$A'$};  
      \node at (90+20:3.5) {$C'$};  
      \node at (90-40:0.5) {$a$}; 
      \node at (90:0.5) {$b$}; 
      \node at (90+40:0.5) {$c$}; 

    \begin{scope}[xshift = 8cm]
      \filldraw[thick, fill = yellow!10, draw = blue] (0,0) -- (90-60:3) arc (90-60:90+60:3) -- cycle; 
      \filldraw[thick, fill = yellow!10, draw = blue] (90-40:3) -- (90-40:4) arc (90-40:90+40:4) -- (90+40:3) arc (90+40:90-40:3); 
      \filldraw[thick, fill = yellow!20, draw = gray] (0,0) -- (90-60:0.8) arc (90-60:90+60:0.8) -- cycle;
      \draw[thick, color = blue] (0,0) -- (90-20:3); 
      \draw[thick, color = blue] (0,0) -- (90+20:3); 
      \draw[thick, color = blue] (90:3) -- (90:4); 
      \draw[thick, color = blue] (0,0) -- (90-60:3);
      \draw[thick, color = blue] (0,0) -- (90+60:3);
      \node at (90-40:2) {$A$}; 
      \node at (90:2) {$B$}; 
      \node at (90+40:2) {$C$}; 
      \node at (90-25:3.5) {$A'$};  
      \node at (90+25:3.5) {$C'$};  
      \node at (90-40:0.5) {$a$}; 
      \node at (90:0.5) {$b$}; 
      \node at (90+40:0.5) {$c$}; 

      \filldraw[color = white] (90-15:4.1) arc (90-15:90+15:4.1) -- (90+15:3.5) arc (90+15:90-15:3.5) -- cycle;
      \draw[blue, thick] (90-15:4) -- (90-15:3.5) arc (90-15:90+15:3.5) -- (90+15:4); 
      \draw[blue, dashed] (90-15:4) arc (90-15:90+15:4);  
      \draw[blue, dashed] (90:3.5) -- (90:4);
      \fill[yellow!10] (90-20+10:3.2) arc (90-20+10:90-20-10:3.2) -- (90-20-10:2.5) arc (90-20-10:90-20+10:2.5) -- cycle;
      \draw[blue, thick] (90-20-10:3) -- (90-20-10:2.5) arc (90-20-10:90-20+10:2.5) -- (90-20+10:3); 
      \draw[blue, dashed] (90-20-10:3) arc (90-20-10:90-20+10:3);
    \end{scope}
  \end{tikzpicture}
	\caption{Corner conformal ruler (left) and its deformations (right). Each region is enclosed by the solid blue lines.}
	\label{fig:corner-conf-ruler-deform}
\end{figure}
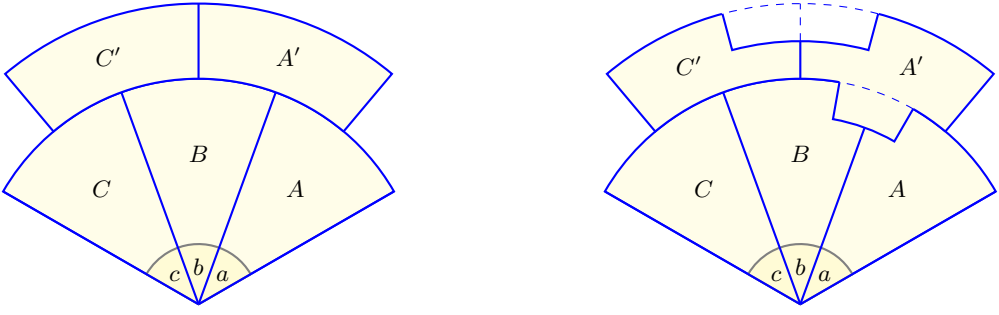

Consider a combination of regions $\dune = (A,A',B,C,C')$ shown in Fig.~\ref{fig:corner-conf-ruler-deform}, which we refer to as a ``(corner) conformal ruler'' following \cite{kim2024conformalgeometryentanglement}.
We define $\fc, \eta$ as the solutions to the equations 
\begin{equation}\label{eq:def-c-eta}
\begin{aligned}
    &e^{-6\Delta(\dune)/\fc(\dune)} + e^{-6 I(\dune)/\fc(\dune)} = 1 \\ 
    &\eta(\dune) \equiv e^{-6\Delta(\dune)/\fc(\dune)}
\end{aligned} ~~ \Rightarrow ~~ (\fc(\dune),\eta(\dune)),
\end{equation}
where 
\begin{equation}
    \begin{split}
        \Delta(\dune) &= \Delta(AA',B,CC')  \\ 
        &= (S_{AA'B}+S_{CC'B}-S_{AA'}-S_{CC'})_{\ket{\Psi}} \\ 
        I(\dune) &= I(A:C|B) \\ 
         &= (S_{AB}+S_{BC}-S_{ABC}-S_{B})_{\ket{\Psi}}.
    \end{split}
\end{equation}
Use $a,b,c$ to denote the three corner regions $A,B,C$ respectively shown in Fig.~\ref{fig:corner-conf-ruler-deform}. With the bulk {\bf A1} condition, one can show that $\fc(\dune),\eta(\dune)$ depend only on $(a,b,c)$. This is because bulk {\bf A1} condition allows one to deform the region combinations $(A,A',B,C,C')$ without changing the value of $\Delta(\dune),I(\dune)$, as long as the corner regions $(a,b,c)$ and the topology of the partition is unchanged. The detailed argument for this fact is the same as the one given in \cite{kim2024conformalgeometryentanglement} with appropriate substitutions. 

Before we introduce the axiom for $\fc$, let us first explain the physical picture of $\fc$. 
\begin{itemize}
    \item $\fc$ is meant to detect the corner contribution to the entanglement entropy, i.e. the $f(\theta)$ term in Eq.~\eqref{eq:SA-corner-intro}, in a gapped wavefunction. This point can be explained from two viewpoints. 
    Logically speaking, based on the \Aone condition, one can freely deform the support of $\Delta$ and $I$ as long as the corner region $(a,b,c)$ remains unchanged, as shown in Fig.~\ref{fig:corner-conf-ruler-deform}\footnote{The topology of the partition in a corner conformal ruler should be maintained under the deformation.}. This indicates that only the entanglement near the corners matters in $\Delta, I$. 
    From a different viewpoint, if one assumes the area law [Eq.~\eqref{eq:SA-corner-intro}] in the first place, then $\Delta$ and $I$ are non-zero only if there exists a corner-dependent term $f(\theta)$. Based on the definition of $\fc$, if $\Delta = I = 0$, then $\fc = 0$ \cite{kim2024conformalgeometryentanglement}. Therefore, having $\fc \neq 0$ indicates the existence of a corner-dependent term in the entanglement entropy. 
    \item $\fc$ also has a close relation to edge theory via bulk/edge correspondence. If we assume the Hypothesis~\ref{hypo:corner}, this $\fc$ matches the (minimal) total central charge of the edge CFT. The word ``minimal'' is referring to the fact that the edge CFT, among all the possible RG fixed-point edge systems, has the minimal total central charge. In other words, with the bulk/edge correspondence picture, we expect that $\fc$ is related to the most stable boundary condition one could assign to the edge of the system. If the system admits a gapped boundary, which can be regarded as an ``empty'' CFT, one can conclude $\fc = 0$, which is consistent with the fact that there are no sharp corner contributions at the IR fixed-point. If the system does not admit a gapped boundary, this means that there exists an edge CFT whose total central charge cannot be reduced by any local perturbations near the edge. We expect that $\fc$ computes the total central charge when the system is at the IR fixed point. In the next section, we will discuss the relation between $\fc$ and the edge theory without assuming Hypothesis~\ref{hypo:corner}. 
\end{itemize}

Based on the two physical pictures of $\fc$ mentioned above, it is plausible that, if the wavefunction is the IR fixed point representative of a chiral gapped phase, $\fc$ as a function of states in the phase will reach the minimal value. That is, $\fc(\ket{\Psi}) = \mathrm{min}\{\fc(\ket{\Phi})\}$, where $\ket{\Phi}$ are representative wavefunctions in a chiral gapped phase. With this motivation, we assume the following stationarity condition: 

Consider any norm-preserving perturbation of $\ket{\Psi}$ of the form $\ket{\Psi}\to \ket{\Psi}+\epsilon\ket{\Psi'}$ with $\epsilon$ being an infinitesimal real number and $\langle \Psi|\Psi'\rangle = 0$. Let $\delta\fc$ denotes the resulting variation of $\fc(\dune)_{\ket{\Psi}}$ in linear order of $\epsilon$, we assume:
\begin{assumption}[Stationarity]
    For any $\dune$ in the bulk, $\fc(\dune)_{\ket{\Psi}}$ is stationary in the sense that 
    \begin{equation}
        \delta \fc = 0,
    \end{equation}
    for any norm-preserving perturbation of $\ket\Psi$. 
\end{assumption}

This assumption can be equivalently stated using a vector fixed-point equation, because of the following theorem: 
\begin{theorem}[Vector fixed-point equation] 
$\fc(\dune)_{\ket{\Psi}}$ is stationary if and only if 
    \begin{equation}\label{eq:VFE-logic}
        \Big[\eta(\dune)\hat{\Delta}(\dune) + (1-\eta(\dune))\hat{I}(\dune)\Big]\ket{\Psi} \propto \ket{\Psi}, 
    \end{equation}
    where 
    \begin{equation}
        \begin{split}
            \hat{\Delta}(\dune) &= \hat{\Delta}(AA',B,CC')  \\ 
            &= K_{AA'B}+K_{CC'B}-K_{AA'}-K_{CC'} \\ 
            \hat{I}(\dune) &= \hat{I}(A:C|B) \\ 
             &= K_{AB}+K_{BC}-K_{ABC}-K_{B}.
        \end{split}
    \end{equation}
\end{theorem}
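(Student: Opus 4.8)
\emph{Overview.} The plan is to reduce the stationarity of $\fc(\dune)$ to a linear ``eigenvector'' condition on $\ket{\Psi}$ by combining the first law of entanglement entropy with implicit differentiation of the defining relations \eqref{eq:def-c-eta}. This is the argument of \cite{Lin:2023pvl, kim2024conformalgeometryentanglement} transplanted to the corner ruler $\dune$, as anticipated in the text; I spell out the steps. \emph{Step 1 (first law).} Take a norm-preserving perturbation $\ket{\Psi}\to\ket{\Psi}+\epsilon\ket{\Psi'}$ with $\langle\Psi|\Psi'\rangle=0$. For any region $R$ write $S_R=\Tr(\rho_R K_R)$ with $K_R=-\log\rho_R$. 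To first order, $\delta S_R=\Tr(\delta\rho_R\,K_R)+\Tr(\rho_R\,\delta K_R)$, and the second term vanishes since $\Tr(\rho_R\,\delta K_R)=-\Tr(\rho_R\rho_R^{-1}\delta\rho_R)=-\Tr(\delta\rho_R)=0$. With $\delta\rho_R=\epsilon\,\Tr_{\overline R}\big(\ket{\Psi'}\bra{\Psi}+\ket{\Psi}\bra{\Psi'}\big)$ this gives $\delta S_R=2\epsilon\,\Re\langle\Psi|K_R|\Psi'\rangle$. Summing over the four terms defining $\Delta(\dune)$ and $I(\dune)$,
\begin{align}
\delta\Delta(\dune)=2\epsilon\,\Re\langle\Psi|\hat{\Delta}(\dune)|\Psi'\rangle,\qquad \delta I(\dune)=2\epsilon\,\Re\langle\Psi|\hat{I}(\dune)|\Psi'\rangle,
\end{align}
with $\hat{\Delta}(\dune),\hat{I}(\dune)$ the Hermitian operators in the theorem statement.

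\emph{Step 2 (implicit differentiation).} Equation \eqref{eq:def-c-eta} is equivalent to $\Delta=-\tfrac{\fc}{6}\log\eta$ and $I=-\tfrac{\fc}{6}\log(1-\eta)$ (suppressing the argument $\dune$). Under the genericity and $\fc\neq0$ hypotheses one has $0<\eta<1$, and $(\fc,\eta)$ is the unique solution of \eqref{eq:def-c-eta} because $\fc\mapsto e^{-6\Delta/\fc}+e^{-6I/\fc}$ is strictly increasing on $\fc>0$ when $\Delta,I>0$ (which holds by weak monotonicity together with genericity). Differentiating the two relations and forming the combination $\eta\,\delta\Delta+(1-\eta)\,\delta I$ eliminates $\delta\eta$ and leaves
\begin{align}
\eta\,\delta\Delta(\dune)+(1-\eta)\,\delta I(\dune)=\frac{h(\eta)}{6}\,\delta\fc(\dune),\qquad h(\eta)=-\eta\log\eta-(1-\eta)\log(1-\eta).
\end{align}
Since $h(\eta)>0$, $\delta\fc(\dune)=0$ for \emph{all} norm-preserving perturbations if and only if $\Re\langle\Psi|\hat{K}_{\dune}|\Psi'\rangle=0$ for every $\ket{\Psi'}\perp\ket{\Psi}$, where $\hat{K}_{\dune}=\eta(\dune)\hat{\Delta}(\dune)+(1-\eta(\dune))\hat{I}(\dune)$.

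\emph{Step 3 (linear algebra).} Replacing $\ket{\Psi'}$ by $i\ket{\Psi'}$ (still orthogonal to $\ket{\Psi}$) upgrades the vanishing of the real part to $\langle\Psi|\hat{K}_{\dune}|\Psi'\rangle=0$ for all $\ket{\Psi'}\perp\ket{\Psi}$; hence $\hat{K}_{\dune}\ket{\Psi}$ has no component orthogonal to $\ket{\Psi}$, i.e. $\hat{K}_{\dune}\ket{\Psi}\propto\ket{\Psi}$, which is \eqref{eq:VFE-logic}. Conversely, if $\hat{K}_{\dune}\ket{\Psi}=\lambda\ket{\Psi}$ then $\langle\Psi|\hat{K}_{\dune}|\Psi'\rangle=\bar{\lambda}\langle\Psi|\Psi'\rangle=0$ on the orthogonal complement, so $\delta\fc(\dune)=0$ by the identity above.

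\emph{Main obstacle.} The only genuinely delicate point is Step 1: the operators $\hat{\Delta}(\dune)$ and $\hat{I}(\dune)$ are themselves state-dependent (built from $K_R=-\log\rho_R$), so naively $\delta\fc$ could pick up $\delta K_R$ contributions; the first-law cancellation $\Tr(\rho_R\,\delta K_R)=0$ is exactly what removes them, and is also why $\eta$, $\hat{\Delta}(\dune)$, $\hat{I}(\dune)$ appearing in \eqref{eq:VFE-logic} are evaluated at the unperturbed state. I would be careful to invoke the genericity and $\fc\neq0$ assumptions precisely where $h(\eta)>0$ and the uniqueness of $(\fc,\eta)$ are used; the remaining manipulations are the bookkeeping already carried out in \cite{Lin:2023pvl, kim2024conformalgeometryentanglement}.
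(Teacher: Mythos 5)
Your proof is correct and follows exactly the route the paper intends: it states this theorem without proof and defers to the transplanted argument of \cite{Lin:2023pvl, kim2024conformalgeometryentanglement}, which is precisely the first-law/implicit-differentiation/linear-algebra chain you spell out, including the key cancellation $\Tr(\rho_R\,\delta K_R)=0$ and the elimination of $\delta\eta$ in the combination $\eta\,\delta\Delta+(1-\eta)\,\delta I=\tfrac{h(\eta)}{6}\,\delta\fc$. No gaps beyond the standard caveats (kernels of $\rho_R$, strict positivity of $\Delta$ and $I$ to ensure $0<\eta<1$) that the paper itself also elides.
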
 
This vector fixed-point equation condition can also be understood with the physical picture from Hypothesis~\ref{hypo:corner} as explained in Section~\ref{subsec:stationary-VFE-implied}. 
But we emphasize that within this section, the Hypothesis~\ref{hypo:corner} is only considered as a guidance or the physical motivations behind this logical framework. For any conclusions that we are going to make, they logically do not depend on the validity of the Hypothesis~\ref{hypo:corner}. Moreover, one can even attempt to include the Hypothesis~\ref{hypo:corner} as a logical consequence of the axioms in this framework. We shall discuss this as a future direction in the discussion section. 

The next axiom about the corner entanglement is the genericity condition: 
\begin{assumption}[genericity] \label{assumption:genericity}
    For any partition $(A,B,C,D,X,Y,Z)$ as shown in Fig.~\ref{fig:genericity}, which is a combination of two corner conformal rulers, the following three vectors 
    \begin{align}
        \hat{\Delta}(AX,B,CY)\ket{\Psi}, \hat{\Delta}(BY,C,DZ)\ket{\Psi}, \ket{\Psi}
    \end{align}
    are linearly independent. 
\end{assumption}

\begin{figure}[htb]
    \centering
    \begin{tikzpicture}
    \filldraw[color = yellow!10,draw = blue, thick] (0,0) -- (90-80:3) arc (90-80:90+80:3) -- cycle; 
    \filldraw[color = yellow!10, draw = blue, thick] (90-60:3) -- (90-60:4) arc (90-60:90+60:4) -- (90+60:3) arc (90+60:90-60:3); 
    \draw[blue, thick] (0,0) -- (90-40:3); 
    \draw[blue, thick] (0,0) -- (90:3);
    \draw[blue, thick] (0,0) -- (90+40:3);  
    \draw[blue, thick] (90-20:3) -- (90-20:4); 
    \draw[blue, thick] (90+20:3) -- (90+20:4); 
    \node at (90-60:2) {$A$}; 
    \node at (90-20:2) {$B$}; 
    \node at (90+20:2) {$C$}; 
    \node at (90+60:2) {$D$}; 
    \node at (90-60:0.7) {$a$}; 
    \node at (90-20:0.7) {$b$}; 
    \node at (90+20:0.7) {$c$}; 
    \node at (90+60:0.7) {$d$}; 
    \node at (90-40:3.5) {$X$}; 
    \node at (90:3.5) {$Y$}; 
    \node at (90+40:3.5) {$Z$}; 
  \end{tikzpicture}
	\caption{Regions for genericity condition}
	\label{fig:genericity}
\end{figure}
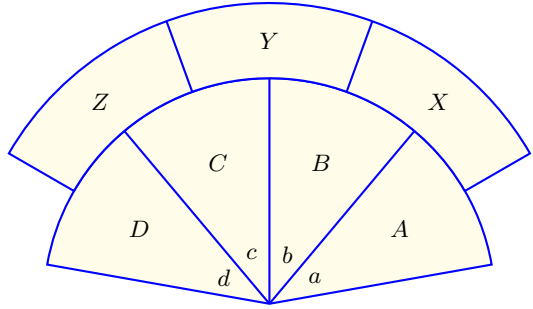

This condition is saying that $\hat{\Delta}$ for a conformal ruler has a non-trivial action on the corner regions. Take $\hat{\Delta}(AX,B,CY)$ for example, if $\hat{\Delta}(AX,B,CY)\ket{\Psi}$ is a linear combination of $\hat{\Delta}(BY,C,DZ)\ket{\Psi}$ and $\ket{\Psi}$, then this indicates that $\hat{\Delta}(AX,B,CY)$ has no action on the corner $a$. In fact, if the system has a non-zero chiral central charge, then this genericity condition is automatically satisfied, as shown in \cite{kim2024conformalgeometryentanglement}. 

\subsection{Logical consequences}

Having introduced the setup and assumptions, we now discuss their logical consequences. 
First of all, we can draw a set of conclusions that are similar to those in \cite{kim2024conformalgeometryentanglement}. We enumerate them as follows: 
\begin{itemize}
    \item $\fc$ defined in Eq.~\eqref{eq:def-c-eta} is a non-zero constant independent of the choices of sharp corner combinations $(a,b,c)$.
    \item $\{\eta(a,b,c)\}$ defined in Eq.~\eqref{eq:def-c-eta} forms a valid set of cross-ratios.
    \item If $c_{-}\neq 0$, then one can show that $J(A,B,C) = \frac{\pi c_{-}}{3} \eta(a,b,c)$, with $\eta(a,b,c)$ defined in Eq.~\eqref{eq:def-c-eta}.  
\end{itemize}
Similarly as in \cite{kim2024conformalgeometryentanglement}, this further enables us to construct a map $\varphi$ that gives a measure to the sharp corner $a$. Explicitly, given a set of sharp corners $\{a\}$, we can assign an angle $a \to \varphi_a$, such that the geometric cross-ratios from these angles match with the cross-ratios computed from the state: 
\begin{equation}\label{eq:eta-geo}
    \begin{split}
    \eta(a,b,c) &= \eta_g(\varphi_a,\varphi_b,\varphi_c)  \\ 
    &= \frac{\sin(\varphi_a/2)\sin(\varphi_c/2)}{\sin((\varphi_a+\varphi_b)/2)\sin((\varphi_b+\varphi_c)/2)}. 
\end{split}
\end{equation}
With these results, one can further derive the formulae for the sharp corner contribution to entanglement entropies and entanglement Hamiltonians. 

{\bf Discuss the contribution to $S_{A,\text{corner}}$.}
We first discuss the formula for entanglement entropies. Due to the definition of $(\fc,\eta)$ in Eq.~\eqref{eq:def-c-eta} as well as the result in Eq.~\eqref{eq:eta-geo}, for a $\dune=(A,A',B,C,C')$ that contains three sharp corners $(a,b,c)$ we obtain 
\begin{equation}\label{eq:Delta-corner}
    \begin{split}
    \Delta(\dune) &= -\frac{\fc}{6}\ln(\eta(a,b,c)) = - \frac{\fc}{6}\ln (\eta_g(\varphi_a,\varphi_b,\varphi_c)) \\ 
    & =f(\varphi_{ab})+f(\varphi_{bc})-f(\varphi_a)-f(\varphi_c),
\end{split}
\end{equation}
where $f(\varphi_{\bullet})$ denotes the contribution to the entanglement entropies from a sharp corner $\bullet$. The last line of Eq.~\eqref{eq:Delta-corner} is because only the sharp corner contributions remains in the linear combinations of entropies in $\Delta(\dune)$. Let's consider a particular $\dune$ such that $\varphi_a = \varphi_c = \theta$ is finite and $\varphi_b = d\theta$ is infinitesimal, then we obtain 
\begin{equation}
    2(f(\theta+d\theta) - f(\theta)) = - \frac{\fc}{6} \ln (\eta_g(\theta,d\theta,\theta))
\end{equation}
From this equation, we can obtain a differential equation for $f(\theta)$: 
\begin{equation}
    \frac{df(\theta)}{d\theta} = \frac{d}{d\theta}\ln(\sin(\theta)/2),
\end{equation}
which gives 
\begin{equation}
    f(\theta) = \frac{\fc}{6}\ln\left(\sin(\theta/2)\right) + C_0,
\end{equation}
where $C_0$ is the integration constant. Since we expect that when $\theta = \pi$, there is no sharp corner contribution to the entanglement entropy, i.e. $f(\pi)=0$, we infer that $C_0 = 0$. This is because when $\pi = 0$, one usually regards this as a smooth segment rather than a sharp corner. One could assign $f(\pi) \neq 0$, then as a result, such a contribution will exist for every smooth segment on $\partial A$, which can be thought of part of as the area law contribution in $S(A)$. From this point of view, one can think of setting $C_0 = 0$ as a result of $f(\theta = \pi)$ being absorbed in the area law term. Finally we obtain the formula of the sharp corner contribution to the entanglement entropy 
\begin{equation}\label{eq:corner-final}
    f(\theta) = \frac{\fc}{6}\ln\left(\sin(\theta/2)\right). 
\end{equation}

{\bf Discuss the contribution to $K_{A,\text{corner}}$.}
At last, we mention that from these axioms, one can decompose the entanglement Hamiltonian $K_A$ with a sharp corner $a$ to a linear combination of those with smaller sharp corners, when they are acting on the state. This utilizes the vector fixed point equation Eq.~\eqref{eq:VFE-logic}, as it relates the action of the entanglement Hamiltonians on regions with different sharp corners. Suppose there are sufficiently many degrees of freedom for a corner $a$ that allow one to make such a decomposition lots of times, then the resulting linear combinations from decomposing $K_A\ket{\Psi}$ can be written as an integral 
\begin{align}
     K_A \ket{\Psi} = \int d \varphi_x \beta_{\varphi_a} (\varphi_x) \CO(x) \ket{\Psi}, 
\end{align}
where $\CO(x)$ is a certain linear combination of $K_{\bullet}$ whose supports have sharp corners around $x$ as shown in Fig.~\ref{fig:decompose-corner}. Here $\beta_{\varphi_a}(\varphi_x)$ is the same coolness function as in Eq.~\eqref{eq:coolness}, with $\varphi_x$ being the angle on a circle whose values are decided by the map $\varphi$ mentioned above. The explicit derivation will be provided in \cite{strength-of-vfpe}.

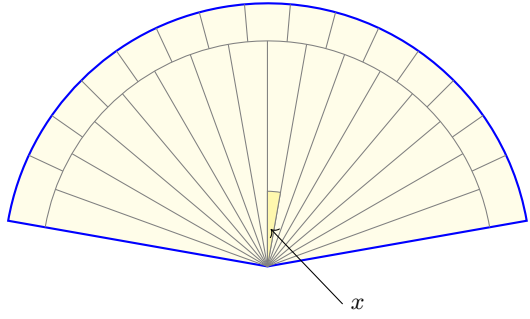
\begin{figure}[htb]
  \begin{tikzpicture}
    \filldraw[color = yellow!10, draw = blue, thick] (0,0) -- (90-80:3.5) arc (90-80:90+80:3.5) -- cycle;
    \draw[gray] (90-80:3) arc (90-80:90+80:3);
    \foreach \x in {1,...,15}
		  \draw[gray] (0,0) -- (90-80+10*\x : 3);
    \foreach \x in {1,...,14}
		  \draw[gray] (90-80+5+10*\x : 3) -- (90-80+5+10*\x : 3.5);
    
    \filldraw[color = yellow!40, draw = gray] (0,0) -- (90-10:1) arc (90-10:90:1) -- cycle; 

    \draw[<-] (90-5:0.5) -- (1,-0.5); 
    \node at (1.2,-0.5) {$x$}; 
  \end{tikzpicture}
  \caption{Decomposition of $K_A \ket{\Psi}$ into an integral of local operators supported on small corner regions. The yellow region is $A$ and its corner region is $a$. We decompose the corner region $a$ into many small corner regions such as $x$.}
  \label{fig:decompose-corner}
\end{figure}

\section{$\fc$ as a diagnostic for ungappable boundary}
\label{sec:diagnostic-for-ungappable}

In this section, we will focus on the relations between $\fc$ and edge ungappability. We will define a value $(\fc)_{\text{min}}$ on a generic representative wavefunctions of a gapped phase, and show that $(\fc)_{\text{min}}=0$ if and only if the phase admits a gapped boundary. Such a $(\fc)_{\text{min}} \neq 0$ reflects a notion of robustness of the corner contribution, because the existence of corner dependent terms in the entanglement entropy of regions with sharp corners is a necessary condition for $(\fc)_{\text{min}} \neq 0$. In the following we are going to show that the robustness of the corner contribution is closely related to the robustness of the gapless edge.

We remark that in the following theorem statements and their proofs, we will focus more on delivering the underlying physical picture. There are several mathematical technical caveats, which we will comment on in the end. 

\subsection{Edge ungappability}

We first define $(\fc)_{\text{min}}$. Consider a generic representative state $\ket{\Psi}$ of a 2+1D gapped phase that satisfies \Azero and \Aone with error vanishing as the subsystem sizes are increased; on a conformal ruler $\mathbb{D}$ whose overall linear size is much larger than the correlation length, we define 
\begin{equation}\label{eq:def-ctot-min}
    (\fc)_{\text{min}} \equiv \mathrm{min} \{ \fc(\mathbb{D})_{\ket{\Phi}}| \ket{\Phi} = U_{XY}\ket{\Psi},\forall\, U_{XY} \},
\end{equation}
where $U_{XY}$ is an arbitrary unitary on a disk $XY$ shown in Fig.~\ref{fig:disentangler} (left). We require the linear sizes of $X$ and $Y$ to be fixed and larger than the correlation length. Here we use the common definition of gapped phases of matter, that is: two states in the thermodynamic limit are in the same gapped phase if they are related to each other by a constant-depth local unitary circuit. Therefore, the set of states $\{ U_{XY}\ket{\Psi}\}_{U_{XY}}$ are all representatives of the same gapped phase. 

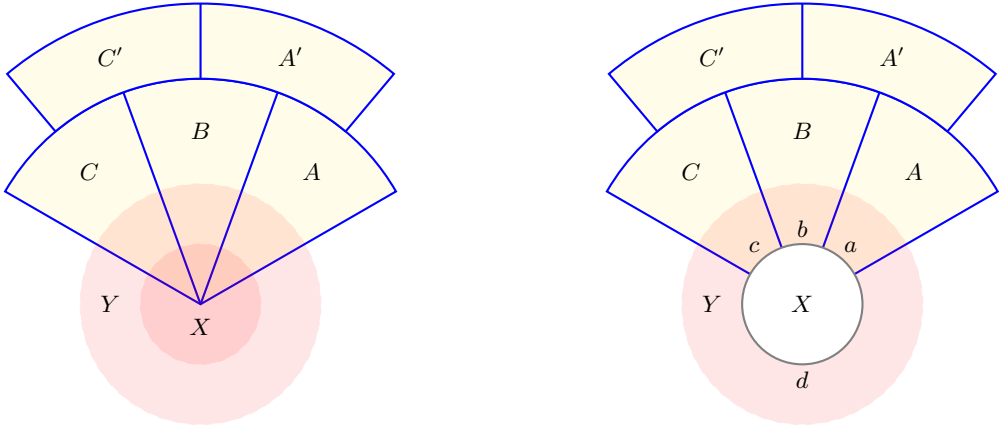
\begin{figure}
  \begin{tikzpicture}
    \filldraw[thick, fill = yellow!10, draw = blue] (0,0) -- (90-60:3) arc (90-60:90+60:3) -- cycle; 
    \filldraw[thick, fill = yellow!10, draw = blue] (90-40:3) -- (90-40:4) arc (90-40:90+40:4) -- (90+40:3) arc (90+40:90-40:3); 
    \draw[thick, color = blue] (0,0) -- (90-20:3); 
    \draw[thick, color = blue] (0,0) -- (90+20:3); 
    \draw[thick, color = blue] (90:3) -- (90:4); 
    \node at (90-40:2.3) {$A$}; 
    \node at (90:2.3) {$B$}; 
    \node at (90+40:2.3) {$C$}; 
    \node at (90-20:3.5) {$A'$};  
    \node at (90+20:3.5) {$C'$};  

    \filldraw[color = red,opacity = 0.1, dashed] (0,0) circle (1.6);
    \filldraw[color = red,opacity = 0.1, dashed] (0,0) circle (0.8);
    
    \node at (0, -0.3) {$X$}; 
    \node at (-1.2, 0) {$Y$}; 

    \begin{scope}[xshift = 8cm]
      \filldraw[thick, fill = yellow!10, draw = blue] (0,0) -- (90-60:3) arc (90-60:90+60:3) -- cycle; 
      \filldraw[thick, fill = yellow!10, draw = blue] (90-40:3) -- (90-40:4) arc (90-40:90+40:4) -- (90+40:3) arc (90+40:90-40:3); 
      \draw[thick, color = blue] (0,0) -- (90-20:3); 
      \draw[thick, color = blue] (0,0) -- (90+20:3); 
      \draw[thick, color = blue] (90:3) -- (90:4); 
      \node at (90-40:2.3) {$A$}; 
      \node at (90:2.3) {$B$}; 
      \node at (90+40:2.3) {$C$}; 
      \node at (90-20:3.5) {$A'$};  
      \node at (90+20:3.5) {$C'$};  

      \filldraw[color = red,opacity = 0.1, dashed] (0,0) circle (1.6);
      \filldraw[color = white, draw = gray, thick] (0,0) circle (0.8);

      \node at (0, 0) {$X$}; 
      \node at (-1.2, 0) {$Y$}; 

      \node at (90-40:1) {$a$}; 
      \node at (90:1) {$b$}; 
      \node at (90+40:1) {$c$};
      \node at (-90:1) {$d$};
    \end{scope}
  \end{tikzpicture}
  \caption{(Left) Region $XY$ around the corner regions of a conformal ruler. $XY$ is a region where \Azero approximately holds. (Right) Conformal ruler for an actual boundary of intervals $(a,b,c)$. $d$ is the complement of $abc$ on the boundary. The boundary can be obtained by blocking as in Fig.~\ref{fig:corner-blocking} or applying the disentangler $W_{XY}$.}
  \label{fig:disentangler}
\end{figure}

This $(\fc)_{\text{min}}$ can be used to diagonose whether the edge is gappable:
\begin{theorem}
    \label{thm:ctot-min-gapped}
    For a gapped phase, $(\fc)_{\text{min}} = 0$ if and only if the gapped phase admits a gapped boundary condition. 
\end{theorem}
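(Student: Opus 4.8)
The idea is to peel off the definition of $\fc$, reducing the claim to a statement about conditional mutual informations near the junction, and then to connect that statement to the sufficient condition for a gapped boundary recalled in Appendix~\ref{app:siva}. From the defining relations in Eq.~\eqref{eq:def-c-eta}, together with $I(\dune)\ge 0$ (strong subadditivity) and $\Delta(\dune)\ge 0$ (weak monotonicity), one checks that $\fc(\dune)_{\ket{\Phi}}\ge 0$ for every $\ket{\Phi}$, and that $\fc(\dune)_{\ket{\Phi}}=0$ (in the limiting $\epsilon\to 0^+$ sense used throughout the paper) precisely when $\Delta(\dune)_{\ket{\Phi}}=I(\dune)_{\ket{\Phi}}=0$; the degenerate case in which only one of the two vanishes is ruled out by the genericity Assumption~\ref{assumption:genericity}. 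Hence $(\fc)_{\text{min}}\ge 0$, and $(\fc)_{\text{min}}=0$ holds iff some unitary $U_{XY}$ on the disk $XY$ achieves $\Delta(\dune)_{U_{XY}\ket{\Psi}}=I(\dune)_{U_{XY}\ket{\Psi}}=0$. (That this is a property of the phase, not of the representative, follows from the bulk \Aone deformation invariance of $\Delta,I$: they depend only on the corner regions sitting inside $XY$, so a finite-depth circuit relating two representatives can be split across $\partial XY$ and its inner part absorbed into $U_{XY}$.) It therefore suffices to prove: some $U_{XY}$ kills both $\Delta(\dune)$ and $I(\dune)$ if and only if the phase admits a gapped boundary.

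\textbf{Gapped boundary $\Rightarrow$ $(\fc)_{\text{min}}=0$.} Assume the phase admits a gapped boundary. By the disentangler construction of \cite{Siva:2021cgo} — equivalently, the ``gas of holes'' picture of \cite{Sopenko:2023utk} with trivial edge theories — there is a finite-depth local unitary $W_{XY}$ on $XY$ that replaces the central disk $X$ by a physical gapped boundary along $\partial X$: the reduced states of $W_{XY}\ket{\Psi}$ on $Y$ and on the regions of $\dune$ meeting $\partial X$ coincide with those of a groundstate with a gapped edge. A gapped boundary carries no gapless modes, so the entanglement entropy of every interval of $\partial X$ obeys a strict area law with no conformal contribution; since $\Delta(\dune)$ and $I(\dune)$ are constructed to cancel all boundary-law and endpoint-constant terms, after using \Aone to retract $A',C'$ and the smooth arcs onto a neighbourhood of $\partial X$ we obtain $\Delta(\dune)_{W_{XY}\ket{\Psi}}=I(\dune)_{W_{XY}\ket{\Psi}}=0$, whence $(\fc)_{\text{min}}=0$.

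\textbf{$(\fc)_{\text{min}}=0$ $\Rightarrow$ gapped boundary.} Let $U_{XY}$ be a minimizer and $\ket{\Phi}=U_{XY}\ket{\Psi}$, so $\Delta(\dune)_{\ket{\Phi}}=I(\dune)_{\ket{\Phi}}=0$. Here $I(A:C|B)_{\ket{\Phi}}=0$ makes $A$--$B$--$C$ a quantum Markov chain and $\Delta(AA',B,CC')_{\ket{\Phi}}=0$ is saturated weak monotonicity; feeding both into the structure theorems for saturation of strong subadditivity and weak monotonicity, together with the bulk \Aone recovery structure valid outside $XY$ (where sizes exceed $\xi$), I would show that near the junction $\ket{\Phi}$ has the reduced-state structure of a state with a gapped boundary. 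Concretely, I would use \Aone deformation freedom to arrange $\dune$ so that $X$, the corners $a,b,c$ and the complementary arc $d$ tile a small disk about the junction as in Fig.~\ref{fig:circles}, and then convert the vanishing of $\Delta(\dune)$ and $I(\dune)$ (and of the same quantities on the rotated rulers forced by the same minimizer) into the four conditions $I(i-1\!:\!i+1\,|\,i)_{\ket{\Phi}}=0$, $i=1,\dots,4$. The sufficient condition of \cite{Zou:2020bly, Siva:2021cgo} recalled in Appendix~\ref{app:siva} then gives that $\ket{\Phi}$, hence the phase, admits a gapped boundary.

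\textbf{Main obstacle and caveats.} The hard part is the second implication: a single conformal ruler constrains only the three corners $(a,b,c)$, whereas the Appendix~\ref{app:siva} criterion needs all four conditional mutual informations of the four-region partition, so one must argue that a single minimizing $U_{XY}$ forces $\fc=0$ on every orientation of the ruler about the junction, and then repackage the datum $\Delta=0$ (which is not itself a conditional mutual information) into the missing CMIs — doing this cleanly from \Azero, \Aone and the structure theorems alone, \emph{without} invoking Hypothesis~\ref{hypo:corner}, is the delicate step. Secondary caveats to flag at the end: everything is understood in the $l/\xi\to\infty$ regime where \Azero and \Aone are exact; on a finite-dimensional lattice the equalities hold only up to errors that decay with the region sizes; the ``$\mathrm{min}$'' in Eq.~\eqref{eq:def-ctot-min} should be read as an infimum that is approached; and the disentangler of \cite{Siva:2021cgo} may need support slightly larger than $XY$.
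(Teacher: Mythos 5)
Your reduction of $(\fc)_{\text{min}}=0$ to a condition on $\Delta$ and $I$ contains an error that propagates into the hard direction. From the defining relation $e^{-6\Delta/\fc}+e^{-6I/\fc}=1$ in Eq.~\eqref{eq:def-c-eta}, with $\Delta,I\ge 0$, the solution $\fc$ tends to zero as soon as \emph{one} of the two quantities vanishes while the other stays positive (the surviving exponential must then be killed by $\fc\to 0^+$); requiring \emph{both} to vanish is neither necessary nor what follows from $\fc=0$. Your appeal to the genericity Assumption~\ref{assumption:genericity} to exclude the ``only one vanishes'' case is misplaced: that assumption is an axiom of the Section~\ref{sec:logical-framework} framework for the chiral state $\ket{\Psi}$, it is not among the hypotheses of Theorem~\ref{thm:ctot-min-gapped} (which assumes only \Azero and \Aone), and in any case it does not constrain which of $\Delta,I$ can vanish on the minimizer $U_{XY}\ket{\Psi}$. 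The paper's proof works with the correct disjunction: $\fc=0$ gives $\Delta_{(a,b,c)}=0$ \emph{or} $I_{(a,b,c)}=0$ after blocking the corner regions into boundary intervals, and either alternative already yields a boundary quantum Markov chain for one of the two interval arrangements $(a,b,c)$ or $(b,c,d)$, which is the gapped-domain-wall \Aone condition of \cite{Shi:2020rne} and hence (via anyon condensation on the boundary) a gapped boundary.

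This also dissolves the obstacle you flag in the converse direction. You route the implication through the four-region criterion of Appendix~\ref{app:siva}, which requires all four conditional mutual informations $I(i-1:i+1|i)$ of the Fig.~\ref{fig:circles} partition, and you correctly observe that a single corner ruler does not supply them --- but the paper never needs to: the blocking step converts the corner data directly into the one-dimensional boundary Markov condition, for which one vanishing combination on one arrangement of intervals suffices. Your first direction (gapped boundary $\Rightarrow(\fc)_{\text{min}}=0$) is essentially the paper's argument, with one gloss: a generic gapped-boundary state need not satisfy the strict one-dimensional area law on the nose, so after the disentangler creates the hole one must apply a further unitary on $Y$ to bring the edge state to the Markov point; this is covered by the minimization over $U_{XY}$ but should be said. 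As written, the proposal does not close the hard direction.
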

\begin{proof}
    \begin{itemize}
        \item We first show the $\Rightarrow$ direction. We can apply the blocking procedure as in Fig.~\ref{fig:corner-blocking} so that the corner regions $A \cap X, B \cup X, C \cap X$ in Fig.~\ref{fig:disentangler} (left) become an actual boundary intervals $a,b,c$ in Fig.~\ref{fig:disentangler} (right). Since this is merely a blocking, the entanglement entropies of various regions on the conformal ruler does not change. Therefore, the $\Delta_{(a,b,c)}$ and $I_{(a,b,c)}$ for the boundary conformal ruler defiend in \cite{kim2024conformalgeometryentanglement} will be the same as the original $\Delta(\mathbb{D}),I(\mathbb{D})$ of the corners. Therefore, $\fc = 0$ implies $\Delta_{(a,b,c)} = 0$ or $I_{(a,b,c)} = 0$. This indicates the state on the boundary intervals $(a,b,c)$ or $(b,c,d)$ satisfies the quantum Markov condition \cite{kim2024conformalgeometryentanglement}, or in the entanglement bootstrap framework, satisfies \Aone condition of a gapped domain wall \cite{Shi:2020rne}. This implies that we have created a gapped boundary. 
        \item Now let us show the $\Leftarrow$ direction. We first can construct a disentangler $W_{XY}$ to create an actual hole in $X$ [Fig.~\ref{fig:disentangler} (right)]. The proof of the exsitence of such a disentangler is in App.~\ref{app:disentangler}. Since $W_{XY}\ket{\Psi}$ is a local unitary on $XY$, the state with the hole $W_{XY}\ket{\Psi}$ is still within the same gapped phase and in the set on which we are computing the minimum of $\fc$ in Eq.~\eqref{eq:def-ctot-min}. If the state admits a gapped boundary, then one can further apply some unitary on region $Y$ such that the edge state on interval $(a,b,c)$ or $(b,c,d)$ again statisfies Markov condition or the \Aone condition of gapped domain wall mentioned earlier. This enssentially amounts to tune the boundary state to satisfy the 1d version of the area law. On such a state, $\Delta(\mathbb{D}) = \Delta_{(a,b,c)}$ or $I(\mathbb{D}) = I_{(a,b,c)}$ is zero and hence $\fc = 0$. We thus showed that $(\fc)_{\text{min}} = 0$. 
    \end{itemize}
\end{proof}

There are several caveats in this (physical) proof: First, we implicitly assume \Azero and \Aone in the argument. They are used in the following places: (1) We used \Azero to argue existence of the disentangler $U_{XY}$. This is because \Azero constrains the correlation to be finite. If \Azero is violated a lot, then one cannot disentangle $X$ simply by a local unitary $U_{XY}$ as in Fig.~\ref{fig:disentangler}. (2) We use \Aone to guarantee that $\fc(\ket{\Psi})$ is only about the corner and $\fc(U_{XY}\ket{\Psi})$ is only about the edge $\partial X$. Without \Aone, there could be contribution in $\fc$ from the bulk. Moreover, \Aone guarantees a sense of uniformity of the state $\ket{\Psi}$. Without \Aone, on some gapped state $\ket{\Psi_0}$, one can simply stacks a pure chiral gapped state or even more exotic degrees of freedom on a region $R$ where we are computing $\fc$, i.e. $AA'BCC' \subset R$, then locally one cannot learn anything universal about the original state $\ket{\Psi_0}$. Even though such a stacking will not decrease $(\fc)_{\text{min}}$, its value being zero or not has nothing to do with edge ungappability of the phase. 

The second caveat is that, technically we use an interval on the edge that forms a quantum Markov chain as an indication of the edge being gapped. One might think of a ``counterexample'' that on a 1+1D CFT groundstate upon some boost transformation as in \cite{Casini:2017roe}, the state is a quantum Markov chain for a certain decomposition. However, in this state with our definition $\fc$ is not zero but will still be the total central charge of that CFT. Since we define $\fc,\eta$ from the entanglement of the state, such a boost transformation effectively amounts to changing $\eta$. In fact, here we use the gapped domain wall entanglement bootstrap (EB) axiom as a definition of gapped boundary \cite{Shi:2020rne}. This is consistent with the common definition of a gapped boundary, because if the gapped domain wall EB axioms are satisfied, then one can show that a set of anyons can condense on the boundary \cite{Shi:2020rne}. Technically speaking, here we only show that $(\fc)_{\text{min}} = 0$ implies the boundary \Aone condition is satisfied on a certain interval, while the boundary \Azero condition in \cite{Shi:2020rne} is missing in order to complete the argument above into a mathematical proof. 

\subsection{Non-zero correlation length}
It is proved in \cite{no-go} that in a Hilbert space with finite local dimension, there is no zero correlation representative wavefunction of a chiral gapped phase, where the chirality is reflected by the non-zero chiral central charge or electric Hall conductance. One reflection of the non-zero correlation length is the violation of the strict area law together with the corner contribution. Explicitly, on such a representative state, for a disk $A$ with a sharp corner of angle $\theta$, 
\begin{align}\label{eq:area-law-corner}
    S_A = \alpha |\partial A| - \gamma + f(\theta) + \cdots,
\end{align}
where $f(\theta)$ is the corner dependent term that satisfies $f(\theta) = f(2\pi - \theta)$. The no-go theorems in \cite{no-go} shows that the subleading term $\cdots$ cannot be zero as long as $S_A$ is finite. As we mentioned earlier in this paper, chiral gapped phases also have a ``robust'' corner contribution, reflected by $(\fc)_{\text{min}}\neq 0$ which is closely related to the edge ungappability. It is plausible that, instead of assuming non-zero chiral central charge or electric Hall conductance, one can also reach the same no-go statement from the ``robust'' corner contribution or edge ungappability. Below we will give a hand-waving argument of this statement. 

\begin{figure}[htb]
  \begin{tikzpicture}
    \filldraw[color = yellow!10, draw = blue, thick] (0,0) circle (2); 
    \draw[blue, thick] (0,0) -- (-90:2); 
    \draw[blue, thick] (0,0) -- (30:2);
    \draw[blue, thick] (0,0) -- (150:2);
    \filldraw[color = yellow!10, draw = blue, thick] (15:2) -- (15:3) arc (15:180-15:3) -- (180-15:2) arc (180-15:15:2); 
    \draw[->, red] (-90:0.5) arc (-90:-90+360:0.5);  
    \node at (-30:1) {$A$}; 
    \node at (90:1) {$B$}; 
    \node at (210:1) {$C$}; 
    \node at (90:2.5) {$D$};
    \draw[|<->|] ($(0.15,0)+(0,0)$) -- ($(0.15,0)+(-90:2)$);  
    \node at ($(0.3,0)+(-90:1)$) {$R$}; 
  \end{tikzpicture}
  \caption{A corner conformal ruler. The angle of the corner at the triple intersection point of $A,B,C$ reaches $2\pi$. $R$ is the radius of the disk $ABC$.}
  \label{fig:corner-2pi}
\end{figure}
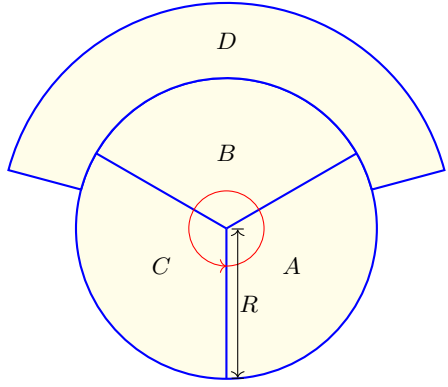

Let us consider a corner conformal ruler under the limit that the corner at the triple intersection point of $A,B,C$ is reaching $2\pi$ [Fig.~\ref{fig:corner-2pi}]. We will estimate $\fc$ computed from this corformal ruler with the radius $R$ of the disk $ABC$ is large. Assuming Eq.~\eqref{eq:area-law-corner}, we can estimate 
\begin{align}\label{eq:scaling-Delta-I}
    \Delta(\mathbb{D}) \sim e^{-R / \xi_{\Aone}},\quad I(\mathbb{D}) \sim 2 \alpha R,
\end{align}
Notice now the $\Delta(\mathbb{D})$ is just the usual bulk \Aone linear combination in Assumption~\ref{assumption:A1}. In $\Delta(\mathbb{D})$, the area law, TEE and the corner terms are all canceled and we assume the error is exponentially decaying $\geq e^{-R / \xi_{\Aone}}$, which defines a lenth scale $\xi_{\Aone}$. In many cases, such as we will numerically show in the next section, the bulk \Aone combination is indeed expoentially decay. There are also cases as reported in \cite{Vir} that the \Aone combination can algebraically decay within the same phase. That will give a larger $\Delta(\mathbb{D})$ and as a result, as shown in \cite{kim2024conformalgeometryentanglement} we will obtain a larger $\fc$ solved from Eq.~\eqref{eq:def-c-eta}. Since here we are interested in the minimal $\fc$ within the phase, we shall therefore mainly focus on the cases of exponential decay. For $I(\mathbb{D})$, the area law term isn't canceled and hence the leading order term is $2 \alpha R$, where $\alpha$ is the coefficient in Eq.~\eqref{eq:area-law-corner}. 

Now we can solve for $\fc(\Delta, I)$ follows from the definition Eq.~\eqref{eq:def-c-eta}
\begin{align}\label{eq:ctot-xi}
    \fc(\Delta(\mathbb{D}), I(\mathbb{D})) \sim 12 \alpha \xi_{\Aone}. 
\end{align}
We leave the explicit derivation in App.~\ref{app:ctot-xi}. 

This relation indicates that, if $\fc$ has a non-zero minimal value within the phase, then $\xi_{\Aone}$ has to be non-zero unless $\alpha$ is infinite. Indeed, in the computation of entanglement entropy in the continuum field theory such as \cite{Wong:2017pdm}, the area law coefficient is related to the a UV cutoff by $\alpha \sim \frac{1}{\epsilon}$, which approaches to infinite as $\epsilon \to 0$. This indicates an infinite dimensional local Hilbert space is necessary in order to have a zero correlation RG fixed point. 

We comment that one can also use the setup in Fig.~\ref{fig:corner-2pi-2}, where we define $\fc$ as the solution of 
\begin{equation}
    e^{-6 I(A:C|B)/\fc} + e^{-6 I(A:D|B)/\fc} = 1.
\end{equation}
This version focus more on the violation of the Markov condition $I(A:D|B) \sim e^{-\ell / \xi_{\text{Markov}}}$. By studying the limit with $\ell \to \infty$, one can obtain the same conclusion as Eq.~\eqref{eq:ctot-xi}, with $\xi_{\Aone}$ replaced with $\xi_{\text{Markov}}$. 

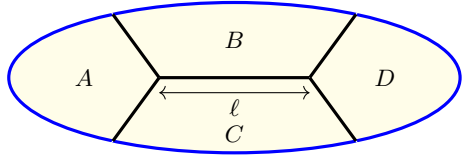
\begin{figure}[htb]
\begin{tikzpicture}
    \filldraw[fill = yellow!10, draw = blue, very thick, name path = ellipse] (0,0) ellipse (3 and 1); 
    \draw [white, name path = ab] (-1,0) -- (150:2); 
    \draw[very thick, name intersections={of=ellipse and ab}] (-1,0) -- (intersection-1);
    \draw [white, name path = ac] (-1,0) -- (-150:2); 
    \draw[very thick, name intersections={of=ellipse and ac}] (-1,0) -- (intersection-1);
    \draw[very thick] (-1,0) -- (1,0); 
    \draw [white, name path = cd] (1,0) -- (30:2); 
    \draw[very thick, name intersections={of=ellipse and cd}] (1,0) -- (intersection-1);
    \draw [white, name path = bc] (1,0) -- (-30:2); 
    \draw[very thick, name intersections={of=ellipse and bc}] (1,0) -- (intersection-1);

    \draw[<->] (-1,-.2) -- (1,-.2);
    \node at (0, -0.4) {$\ell$}; 
    \node at  (-2,0) {$A$};
    \node at  (2,0) {$D$};
    \node at  (0,0.5) {$B$};
    \node at  (0,-0.75) {$C$};
\end{tikzpicture}
\caption{Another setup that is equivalent to Fig.~\ref{fig:corner-2pi}}
\label{fig:corner-2pi-2}
\end{figure}

\section{Numerics}
\label{sec:num}
In this section, we numerically verify the results mentioned in the previous sections, mainly in the example of the $p+\ii p$ superconductor (SC) groundstate. Explicitly, we are going to test 
(1) violation of \Azero and \Aone [Assumption~\ref{assumption:A0} and Assumption~\ref{assumption:A1}]; 
(2) Computation of $\fc$ and $\eta$ from the corner conformal ruler [Eq.~\eqref{eq:def-c-eta}]; 
(3) Vector fixed point equation [Eq.~\eqref{eq:VFE-implied}]; 
(4) Modular commutator with incomplete disks [Eq.~\eqref{eq:J-eta}]; 
(5) Finite size error reduction with $H_{\rec}$. 

\subsection{Setup and test of \Azero and \Aone}
We shall study the groundstate of a $p+\ii p$ SC on a square lattice, with the Hamiltonian of the form 
\begin{align}\label{eq:pip-ham}
     H = \sum_{\rr, \aa} \[ -t a_{\rr}^{\dagger} a_{\rr + \aa} + \Delta a_{\rr}^{\dagger} a_{\rr + \aa}^{\dagger} e^{\ii \aa \cdot \AA} + h.c. \] - \sum_{\rr} (\mu - 4t)a_{\rr}^{\dagger} a_{\rr},
\end{align}
where $\rr = (x,y)$ denotes the location a site on the square lattice, $\aa$ is the lattice vector that takes value of $(1,0)$ and $(0,1)$ for the positive $x$ and $y$ direction respectively. The gauge field is $\AA = (0, \pi/2)$. In all the tests mentioned below, we shall set $t = 1.0$. We will vary the superconducting gap $\Delta$ and the chemical potential $\mu$ within the chiral gapped phase. Explicitly, we will choose $(\Delta,\mu) = (1.0, 1.2), (1.2, 1.4),(1.3,1.2)$, whose correlation lengths will be determined later. Throughout all the tests, we shall set anti-periodic boundary condition in both $x$ and $y$ direction, so that there is no gapless edge and no flux threaded. We shall use $N_x, N_y$ to denote the number of sites (complex fermion sites, to be more precise) in $x$ and $y$ direction, so that the total number of sites is $N_x \times N_y$. We will specify $N_x, N_y$ later. 

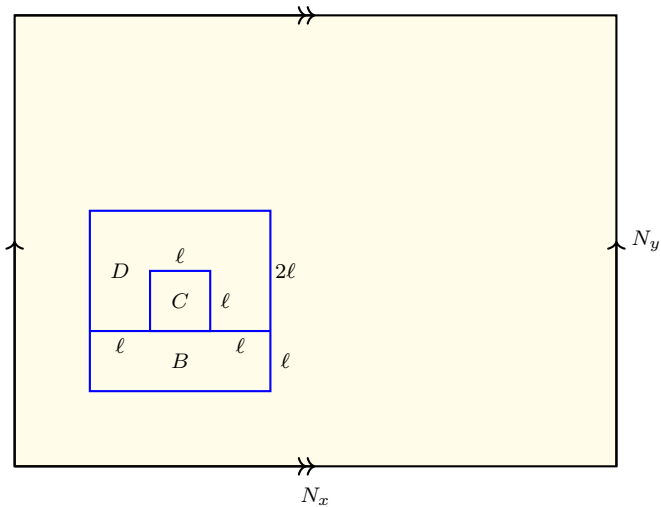
\begin{figure}[htb]
    \centering
    \begin{tikzpicture}
    \def\l{0.8}
    \filldraw[color = yellow!10,draw = black,thick] (0,0) -- (8,0) -- (8,6) -- (0,6) -- cycle; 
    \draw[->>, thick] (0,0) -- (4, 0); 
    \draw[->>, thick] (0,6) -- (4, 6);
    \draw[->, thick] (0,0) -- (0,3); 
    \draw[->, thick] (8,0) -- (8,3);
    \draw[color = blue, thick] (1,1) -- ({1+3*\l},1) -- ({1+3*\l},{1+3*\l}) -- (1,{1+3*\l}) -- cycle; 
    \draw[blue, thick] (1+\l,1+\l) -- (1+\l+\l,1+\l) -- (1+\l+\l,1+\l+\l) -- (1+\l,1+\l+\l) -- cycle; 
    \draw[thick, blue] (1,1+\l) -- ({1+3*\l},1+\l); 
    \node[font = \footnotesize] at ({1+\l/2}, 1+\l-0.2) {$\ell$}; 
    \node[font = \footnotesize] at ({1+\l+\l/2}, 1+\l+\l+0.2) {$\ell$}; 
    \node[font = \footnotesize] at ({1+2*\l+\l/2}, 1+\l-0.2) {$\ell$}; 
    \node[font = \footnotesize] at ({1+2*\l+0.2}, {1+\l+\l/2}) {$\ell$}; 
    \node[font = \footnotesize] at ({1+3*\l+0.2}, {1+\l/2}) {$\ell$}; 
    \node[font = \footnotesize] at ({1+3*\l+0.2}, {1+\l+\l}) {$2\ell$}; 

    \node[font = \footnotesize] at ({1+\l+\l/2}, {1+\l/2}) {$B$};
    \node[font = \footnotesize] at ({1+\l+\l/2}, {1+\l+\l/2}) {$C$}; 
    \node[font = \footnotesize] at ({1+\l/2}, {1+2*\l}) {$D$};
    
    \node[font = \footnotesize] at (4, -0.4) {$N_x$};
    \node[font = \footnotesize] at (8+0.4, 3) {$N_y$};
  \end{tikzpicture}
	\caption{Subsystems used to test \Azero and \Aone.}
	\label{fig:EB-regions-tests}
\end{figure}

We first test the entanglement bootstrap assumption [Assumption~\ref{assumption:A0} and Assumption~\ref{assumption:A1}], and then determine the correlation length. In these tests, we choose $N_x = N_y = 60$, and the subsystems for \Azero and \Aone is shown in Fig.~\ref{fig:EB-regions-tests}, whose typical linear size is $\ell$. We computed the $\Delta(C,BD,\emptyset)$ and $\Delta(B,C,D)$ as a function of $\ell$. The result is shown in Fig.~\ref{fig:err-A0-A1}
\begin{figure}[htp]
    \centering
    \includegraphics[width = 0.45\columnwidth]{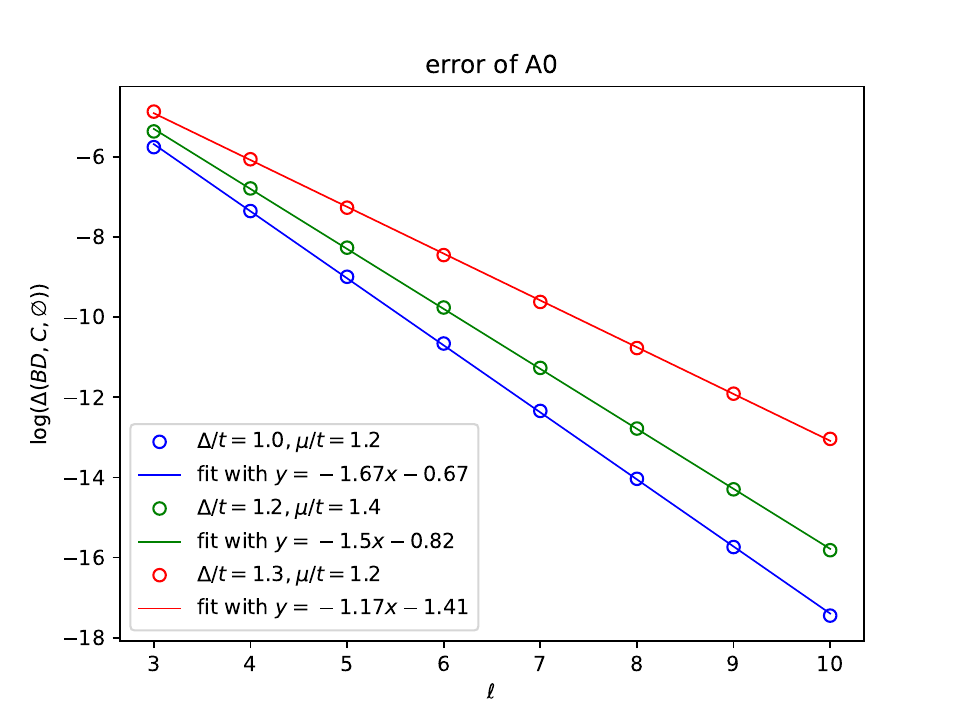}
    \includegraphics[width = 0.45\columnwidth]{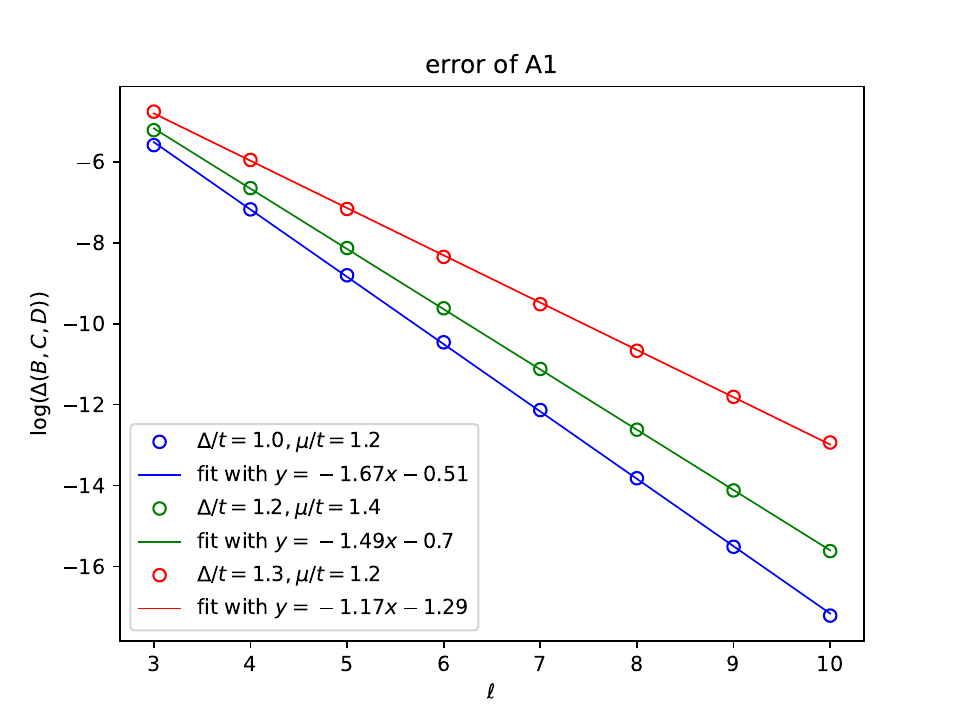}
	\caption{Error of \Azero and \Aone.}
	\label{fig:err-A0-A1}
\end{figure}
By fitting $\log(\Delta(C,BD,\emptyset))$ and $\log(\Delta(B,C,D))$ as shown in the figure, one can see that $\Delta(C,BD,\emptyset)$ and $\Delta(B,C,D)$ exponentially decay to zero as one increase the subsystem sizes. This indicates, even if \Azero and \Aone cannot be satisfied exactly for chiral states with finite local dimensions, the discrepancy decays as one increases the size of the subsystems. Therefore, if we zoom out to a certain scale, one can ignore the error of \Azero and \Aone and still derive conclusions, which is expected to hold at the IR fixed point\footnote{The result in \cite{no-go} suggests that the local Hilbert space dimension, i.e. the dimension of the Hilbert space associated with a local region, has to be infinite at the IR fixed point.}.

We can also infer the correlation length from \Azero, this is because $\Delta(BD,C,\emptyset)$ upper bound any correlation functions between region $C$ and any region $A$ buffered away from $C$ by $BD$. See the numerical section in \cite{Vir} for detail explanations. In the numerical tests, we use the same setup as shown in Fig.~\ref{fig:EB-regions-tests}, except the size of $C$ is fixed. When we  scale up the thickness $\ell$ of the buffer region $BD$. as shown in Fig.~\ref{fig:A0-correlation}, $\Delta(BD,C,\emptyset)$ decays as $\alpha e^{-2\ell/\xi_*}$\footnote{Here we put the factor of $2$ in the exponent because technically \Azero upper bounds the correlation function squared.}, then we can conclude $\xi_* \geq \xi$, where $\xi$ is the correlation length from $\langle \CO_A \CO_C \rangle - \langle \CO_A \rangle \langle \CO_C \rangle \sim \beta e^{-\ell/\xi}$ for any operators $\CO_A,\CO_C$ supported on $A,C$. Therefore, we can use $\xi_*$ as a correlation length. From the slopes of the linear fit in Fig.~\ref{fig:err-A0-A1} (left), we obtain the correlation length is $\xi_* = 1.06, 1.17,  1.45$ for $(\Delta,\mu) = (1.0, 1.2), (1.2, 1.4), (1.3,1.2)$ respectively. 
\begin{figure}[htb]
    \centering
    \includegraphics[width = 0.5\columnwidth]{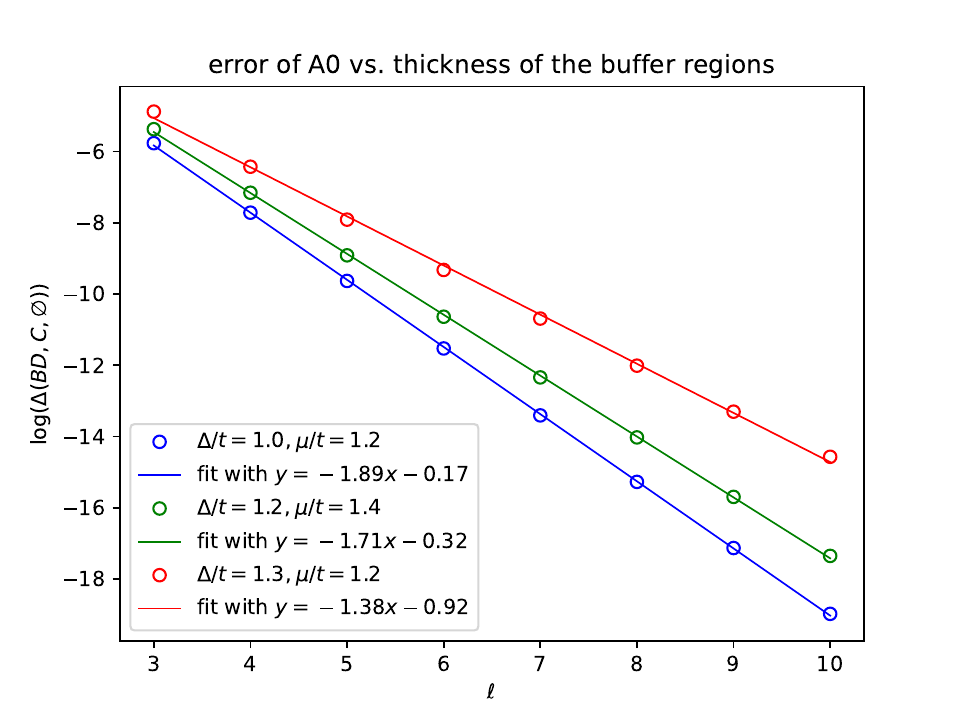}
	\caption{Obtaining an upper bound of the correlation length. In the numeircs, region $C$ is fixed and we scale up the thickness $\ell$ of the buffer region $BD$ [Fig,~\ref{fig:EB-regions-tests}].}
	\label{fig:A0-correlation}
\end{figure}

\subsection{Tests about corner entanglements}
Now we are going to test the hypothesis~\ref{hypo:corner} and its predictions. Explicitly, we shall test (1) computation of $\fc$ and $\eta$ from $\Delta, I$; (2) vector fixed-point equation around a corner; (3) the modular commutator with incomplete disk $J(A,B,C) = \pi c_{-}/3 \cdot \eta$. 

\begin{figure}[htb]
    \centering
    \includegraphics[width = 0.45\columnwidth]{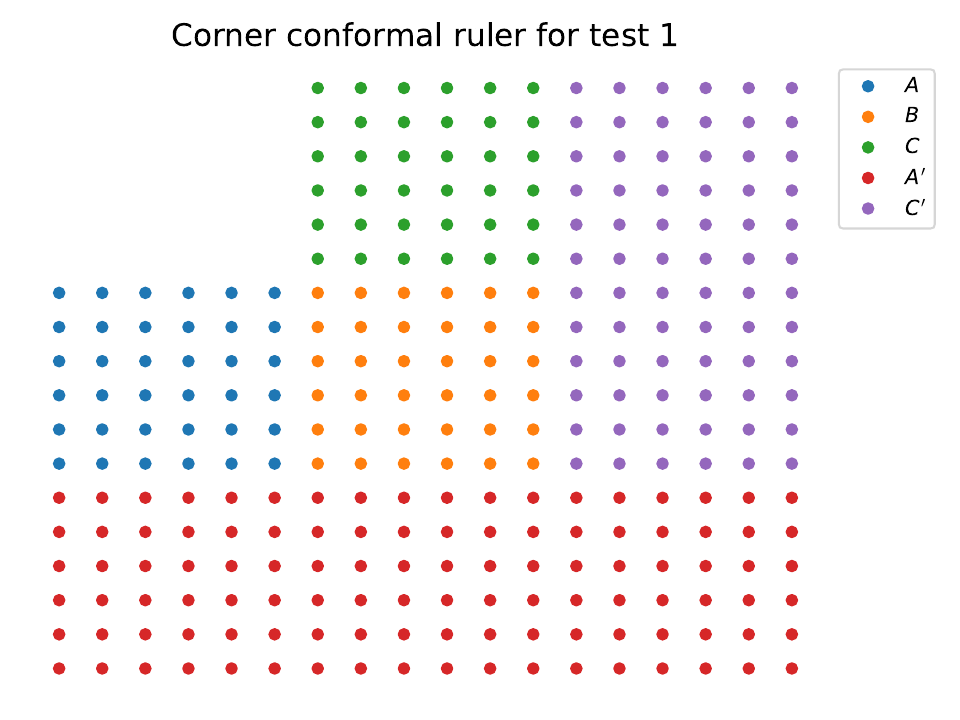}
    \includegraphics[width = 0.45\columnwidth]{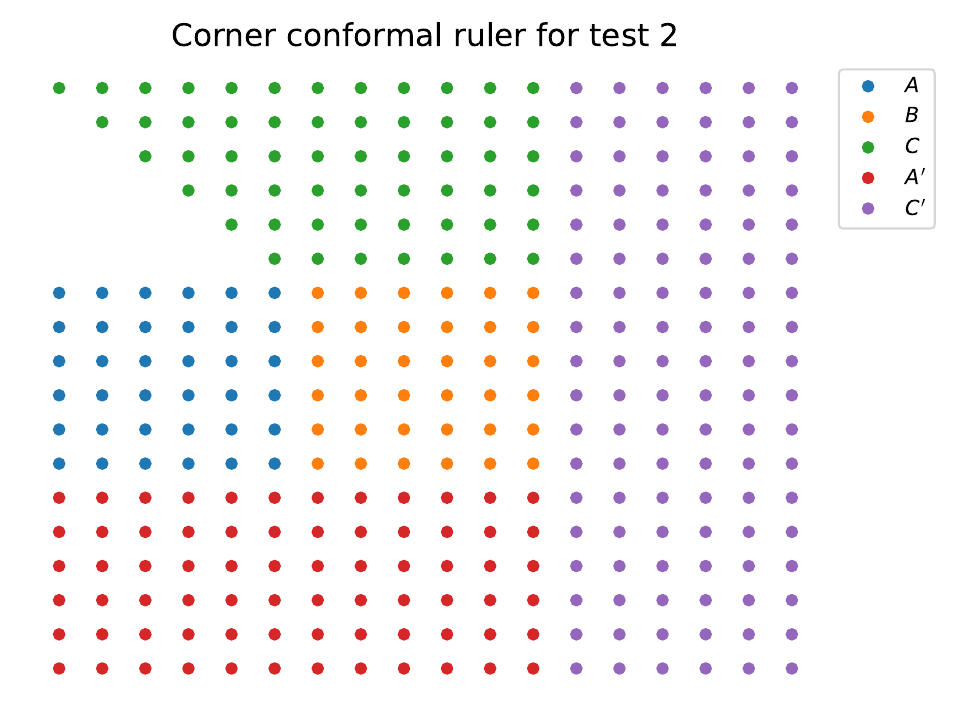}
	\caption{Corner conformal rulers for test 1 (left) and 2 (right).}
	\label{fig:corner-conf-ruler-test12}
\end{figure}

For the tests in this subsection, we set the system sizes to be $N_x = N_y = 40$. The subsystems for the ``corner conformal rulers'' are shown in Fig.~\ref{fig:corner-conf-ruler-test12}. Using this setup, we first compute $\fc$ and $\eta$ from $\Delta(AA',B,CC')$ and $I(A:C|B)$. Then we compute $J(A,B,C)$ which we expected to be $\pi c_{-}/3 \eta$ with $c_{-} = 1/2$ for $p+\ii p$ SC. To test this formula, we extract $\eta_J \equiv 3J(A,B,C)/(c_{-}\pi)$ and compare it with the $\eta$ obtained from $\Delta,I$. The results are listed in Table~\ref{tab:result-1} and Table~\ref{tab:result-2}. One can see that, as the correlation length is decrease, $\fc$ is approaching to $1/2$ and $\eta$ and $\eta_J$ become closer. 

\begin{table}[htb]
\begin{tabular}{|l|l|l|l|}
\hline
$(\Delta,\mu)$ and $\xi$ & (1.0, 1.2), $\xi = 1.06$ & (1.2, 1.4) , $\xi = 1.17$ & (1.3, 1.2) , $\xi = 1.45$ \\ \hline
$\fc$ & 0.569 & 0.645 & 0.691 \\ \hline
$\eta$ & 0.49993 & 0.49985 & 0.49948 \\ \hline
$\eta_J$ & 0.49998 & 0.49994 & 0.49965 \\ \hline
\end{tabular}
\caption{Result for test 1.}
\label{tab:result-1}
\end{table}

\begin{table}[htb]
\begin{tabular}{|l|l|l|l|}
\hline
$(\Delta,\mu)$ and $\xi$ & (1.0, 1.2), $\xi = 1.06$ & (1.2, 1.4) , $\xi = 1.17$ & (1.3, 1.2) , $\xi = 1.45$ \\ \hline
$\fc$ & 0.595 & 0.695 & 0.764 \\ \hline
$\eta$ & 0.96546 & 0.95805 & 0.94551 \\ \hline
$\eta_J$ & 0.96371 & 0.95616 & 0.94476 \\ \hline
\end{tabular}
\caption{Result for test 2}
\label{tab:result-2}
\end{table}

These results also emphasize the necessity to use conformal ruler as a measure of the angles. In a lattice, the geometric corners are not quite well defined. For a square lattice, in test 1, it is plausible to identity the corners for $A,B,C$ as $\pi/2$. The geometric cross-ratio computed using Eq.~\eqref{eq:eta-geo} is $eta_g = 1/2$, which agrees with the $\eta$ obtained using entanglement entropies $\Delta,I$ or using modular commutators $J(A,B,C)$. However, in test 2, the angle for the corner is hard to decide. From the geometry, one might think the angles for the sharp corners in $A,B,C$ are $\pi/2,\pi/2,3\pi/4$ respectively, then the geometric cross-ratio will be $\eta_g = 0.707$. This turns out to be incorrect. To obtain a consistent result, one should use $\eta$ computed from $\Delta,I$. In summary, an important lesson from this test is that, in the corner entanglement entropy formula $S_{\corner} = \frac{\fc}{6}\log(\sin(\varphi/2))$, the angle $\varphi$ should be decided using the conformal ruler. 

To test the vector fixed-point equation, we compute $\sigma(K_{\mathbb{D}}(x)) = \sqrt{\langle K_{\mathbb{D}}(x)^2 \rangle - \langle  K_{\mathbb{D}}(x) \rangle^2 }$, where 
\begin{align}
     K_{\mathbb{D}}(x) = x \hat{\Delta}(AA',B,CC') + (1-x) \hat{I}(A:C|B)
\end{align}
for various $x$. The results are shown in Fig.~\ref{fig:err-VFE}. To make comparison, we also plot the values of $\eta$ from $\Delta, I$ and $\eta_J$. Similar to the results above, the error of the vector fixed-point equation decreases as the correlation length decreases, and the location of the minimum is approaching to $\eta$ from $\Delta, I$.

\begin{figure}[htb]
    \centering
    \includegraphics[width = 0.45\columnwidth]{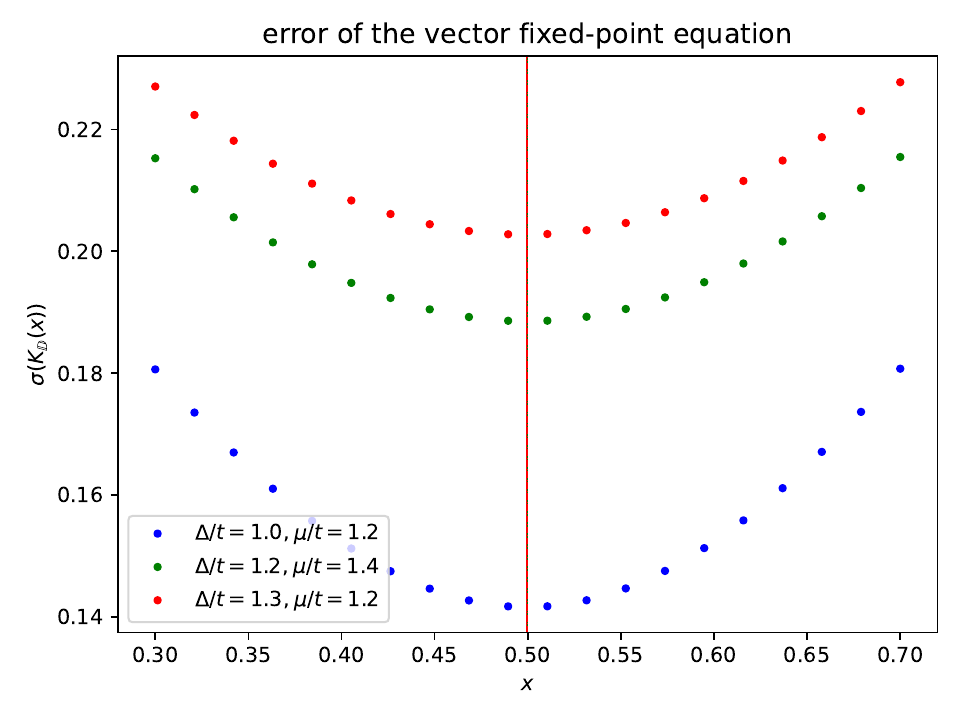}
    \includegraphics[width = 0.45\columnwidth]{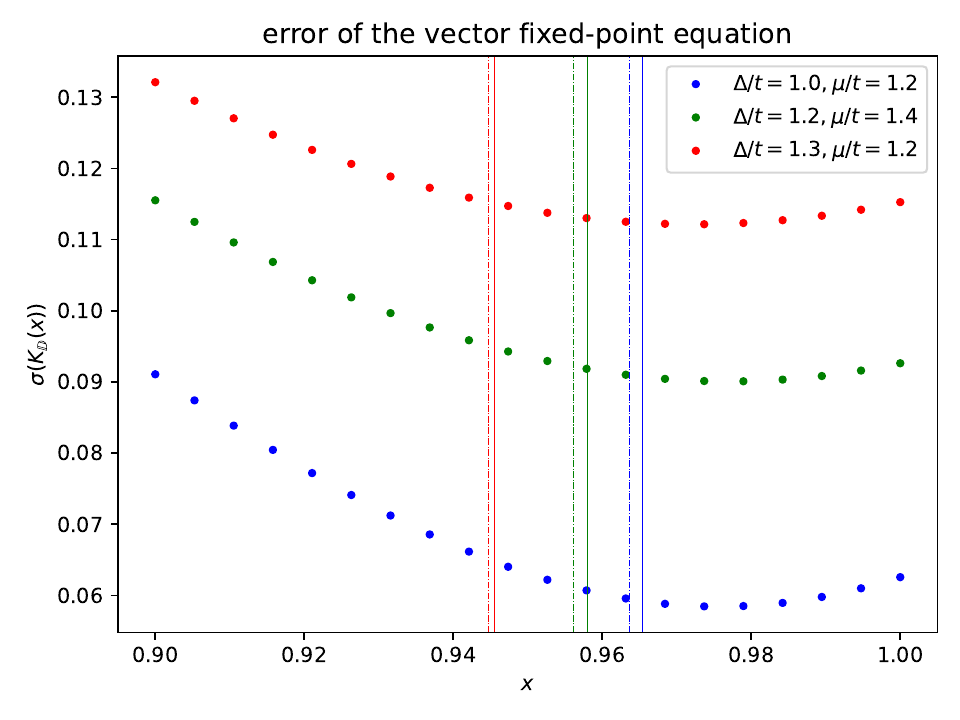}
	\caption{Error of the vector fixed point equations for test 1 (left) and test 2 (right). In the plot, the solid lines stand for the values of $\eta$ obtained from $\Delta,I$ and the dashed line stands for the values of $\eta_J$. In the left plot, since all the $\eta$ and $\eta_J$ are very close, the lines overlap.}
	\label{fig:err-VFE}
\end{figure}

\subsection{Error reduction with gradient descent using $H_{\rec}$}
\label{subsec:error-reduction}
In this subsection, we introduce a novel gradient descent scheme which can reduce the discrepancy between the numerical results and our predictions. The gradient descent can be understood as a procedure that drives the state closer to the zero correlation length IR fixed point of the phase, and therefore reduces the finite size errors.

Let us first briefly introduce the scheme. In Section~\ref{subsec:Hrec}, we introduced several reconstructed Hamiltonian such as $\Hrec^{\euler}$ and $H_{\text{rec}}^{\Delta}$ in the coarse-grained lattice.
The object function of the gradient descent is the expectation value 
$f^{\text{Euler}}(\ket{\psi}) = \langle \psi| \Hrec^{\euler} | \psi\rangle$ 
or 
$f^{\Delta}(\ket{\psi}) = 
\langle \psi| H_{\text{rec}}^{\Delta} | \psi\rangle$ 
as a function of $\ket{\psi}$, with the reconstructed Hamiltonians obtained from $\ket{\psi}$. 
By performing gradient descent, one reaches a local minimum of the object function. Let us specify the topology of the spatial manifold to be a torus, as this is the case for our numerics.
Then at the zero correlation RG fixed point $\ket{\Psi_*}$, we expect $f^{\text{Euler}}(\ket{\Psi}) = f^{\Delta}(\ket{\Psi_*}) = 0$. 
Notice both functions are lower bounded by zero, therefore $\ket{\Psi_*}$ in fact is a location of a global minimum. Therefore, if one starts with a nearby state, the gradient decent shall drive the state to the fixed point state $\ket{\Psi_*}$. 
In numerics, since we are working with a lattice with finite local dimensions, as shown in \cite{no-go}, there is no zero correlation IR fixed point of a chiral gapped phase within such a Hilbert space. Therefore, we do not expect one can obtain a minimum with the object functions being zero. Neverthless, since in $\langle H^{\Delta}_{\text{rec}}\rangle$ is a sum over all possible semi-positive \Aone entropy combinations $\Delta_i \geq 0$ in Eq.~\eqref{eq:A1}, runing gradient descent of $f^{\Delta}$ will reduce the error of \Aone. Moreover, we conjecture that runing gradient decent of $f^{\text{Euler}}$ will also achieve this goal, which we will numerically verify below. 

In practice, we are doing the ``gradient descent'' to obtain a sequence of states $\ket{\Psi_0},\ket{\Psi_1},\ket{\Psi_2},\cdots$ as follows: In the $k$-th step, we first use $\ket{\Psi_{k-1}}$ to obtain a $H_{\text{rec}}$, then we find $\ket{\Psi_{k}}$ as the groundstate of $H_{\text{rec}}$. The reason that we did not use the usual gradient descent method, namely obtaining $\ket{\Psi_{k}} = \ket{\Psi_{k-1}} - \epsilon \nabla f(\ket{\Psi_{k-1}})$ because the new state $\ket{\Psi_{k}}$ obtained this way might not be a gaussian state, which forbids us to do the computation with large system sizes. 

Strictly speaking, our method of obtaining $\ket{\Psi_{k}}$ from $\ket{\Psi_{k-1}}$ is only an approximate gradient descent, as the state is not perturbed exactly along the direction along which $f(\ket{\Psi})$ decreases the most. It is still going to decrease the value of $f(\ket{\Psi})$ along this sequence, if $\ket{\Psi}$ is already close to the local minimum. This can be understood as follows: Given a state $\ket{\Psi_k}$ and let us consider $H_{\text{rec}}^k = H_{\text{rec}}^{\Delta}$ constructed from $\ket{\Psi_k}$. Let $\ket{i}$ be the eigenstates of $H_{\text{rec}}^k$ with eigenvalues $\lambda_i$, we can write $\ket{\Psi_k} = \sum_i c_i \ket{i}$. The condition that $\ket{\Psi_k}$ is closed to the minimum implies $f(\ket{\Psi_k}) \approx 0$, which means $|\lambda_0| \approx 1$ and $\lambda_0 \approx 0$, where $\ket{0}$ is the groundstate of $H_{rec}^{k}$. That is, we can write 
\begin{align}
     \ket{0} = \frac{\ket{\Psi_k} - \sum_{i\neq 0} c_i \ket{i}}{c_0} = \frac{\ket{\Psi_k} - \epsilon \ket{\Psi'_k}}{\sqrt{1-\epsilon^2}} \equiv \ket{\Psi_{k+1}}
\end{align}
where $\epsilon = \sqrt{\sum_{i\neq 0}|c_i|^2} \approx 0$ and $\epsilon \ket{\Psi'_k} = \sum_{i\neq 0}c_i \ket{i}$. In our procedure, this groundstate $\ket{0}$ is the next state $\ket{\Psi_{k+1}}$, which can be understood by the state perturbed by $\ket{\Psi'_k}$ from $\ket{\Psi_k}$. Now we can write $H^{k+1}_{\text{rec}} = H^{k}_{\text{rec}} + \epsilon \cdot \partial H^{k}_{\text{rec}} + O(\epsilon^2)$ and $\ket{\Psi_{k+1}} = \ket{\Psi_k} + \epsilon \ket{\partial\Psi_k} + O(\epsilon^2) $, therefore we can expand $f(\ket{\Psi_{k+1}})$ as 
\begin{align}
    f(\ket{\Psi_{k+1}}) &= \langle \Psi_{k+1} | H_{\text{rec}}^{k+1} | \Psi_{k+1} \rangle \\ 
    & = \langle \Psi_{k} | H_{\text{rec}}^{k} | \Psi_{k} \rangle + \epsilon \langle \Psi_{k} | \partial H_{\text{rec}}^{k} | \Psi_{k} \rangle - \epsilon ( \langle \Psi_k| H_{\text{rec}}^k |\partial \Psi_k\rangle + c.c.) + O(\epsilon^2). 
\end{align}
We can further evaluate the $O(\epsilon)$ terms. Firstly, we have $ \langle \Psi_{k} | \partial H_{\text{rec}}^{k} | \Psi_{k} \rangle = 0$. This is because $H_{\text{rec}}^{k}$ is a linear combinations of entanglement Hamiltonians from $\ket{\Psi_k}$ and as shown in \cite{Lin:2023pvl} there is a quantum information-theoretic result that $\langle \psi | \partial K_{\bullet} | \psi\rangle $ where $\partial K_{\bullet}$ is the first order derivative of the entanglement Hamiltonian from $\ket{\Psi}$ resulting from any norm-preserving perturbation of $\ket{\psi}$. Then, we can show that 
\begin{align}
     f(\ket{\Psi_{k+1}}) - f(\ket{\Psi_{k}}) &= - \epsilon \langle \Psi_k| H_{\text{rec}}^k | \partial \Psi_k\rangle + c.c. = - \epsilon \langle \Psi_k | H_{\text{rec}}^k | \Psi'_{k}\rangle + c.c. \\ 
     & = -2 \sum_{i \neq 0} |c_i|^2 \lambda_i \leq 0. 
\end{align}
The last inequality is because the operator $H_{\text{rec}}^k$ is a positive operator and $\lambda_i \geq 0$ for all its eigenvalues. In the argument above, we use $H_{\text{rec}}^{\Delta}$ as the reconstructed Hamiltonian because it is an positive operator. In practice, one can also use $\Hrec$, because we are working with lattice with finite system sizes, this operator $\Hrec$ is bounded from below and one can repeat the argument with $\Hrec + \alpha \mathbbm{1} \geq 0$ with $\alpha$ being a sufficiently large finite number. 

Now we present the numerical results. The setup is the same as in Fig.~\ref{fig:EB-regions-tests} for \Azero and \Aone tests and Fig.~\ref{fig:corner-conf-ruler-test12} for the tests of corner entanglement, with $N_x = N_y = 36$. We made a coarse-grained lattice where each supersite is a $6\times 6$ square. To run the gradient descent, we use 
\begin{align}
    \Hrec = \sum_{f} K_f - \sum_e K_e + \sum_v K_v,
\end{align}
where $f,e,v$ runs over all the 4-supersite, 2-supersite and 1-supersite respectively, as shown in Fig.~\ref{fig:Hrec-num}. This is slightly different from the one we introduced above in Section~\ref{subsec:Hrec}, but in the same spirit. Following the derivation in App.~\ref{app:Hrec}, one can still show that, if \Aone is satisfied, it is equal to a $H_{\rec}^{\Delta}$ obtained from summing over all possible $\hat{\Delta}$ on a square (coarse-grained) lattice. Hence, we conjecture that if we start with a state with small \Aone violation, then it is possible that one can run gradient descent of $\langle H_{\rec}\rangle$ to reduce the error of \Aone. 

\begin{figure}[htb]
    \centering
    \includegraphics[width = 0.8\columnwidth]{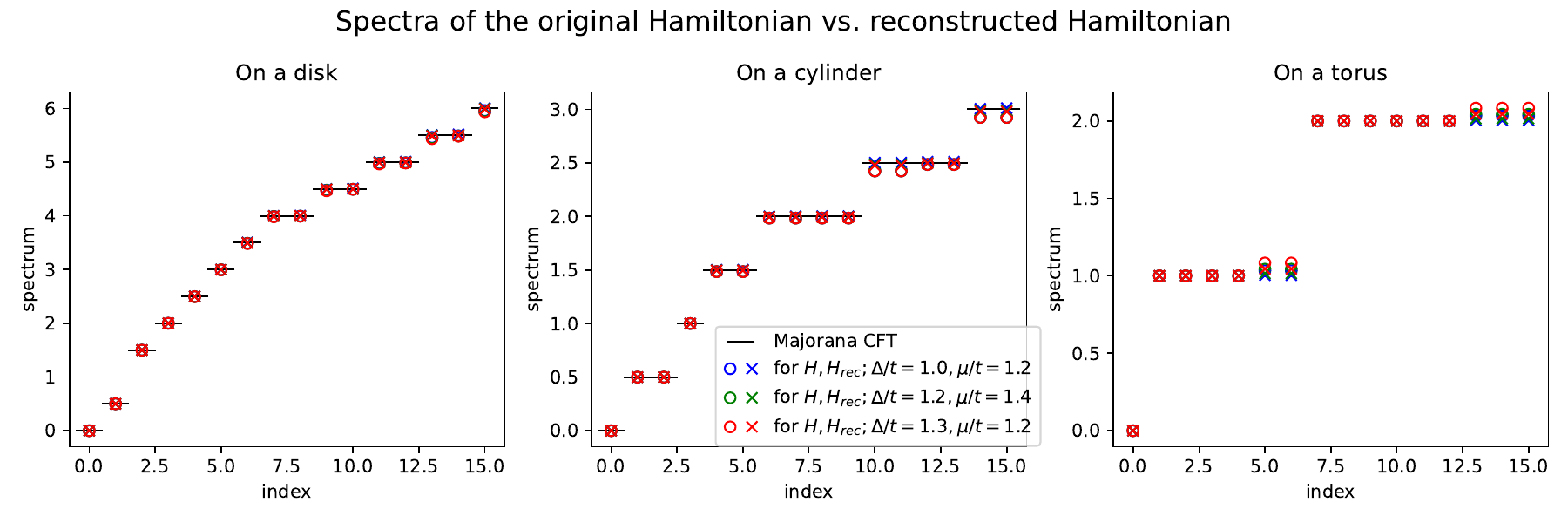}
	\caption{Low lying spectrum of the reconstructed Hamiltonians vs. the original Hamiltonians for three choices of the parameters $(\Delta/t, \mu/t)$. The systems are on a disk (left), a cylinder (middle) and a torus (right) respectively, with sizes $N_x = N_y = 36$. For the results on a disk and cylinder, we compared the spetrua with the chiral and non-chiral CFT spectra, and we rescaled the spectra so that the energy level is integer spacing in each confomral block. For the result on the torus, we rescale the spectra so that the gap is one.}
	\label{fig:Hrec-spectrum}
\end{figure}

We first checked the spectrum of the reconstructed Hamiltonians for the systems on a disk, cylinder and torus. We use the same systme sizes and coarse-grained lattice as Fig.~\ref{fig:Hrec-spectrum}. In Fig.~\ref{fig:Hrec-spectrum}, we compare the reconstructed spectra with the spectra of the the original Hamiltonians Eq.~\eqref{eq:pip-ham}. The results in Fig.~\ref{fig:Hrec-spectrum} show excellent agreements.  If the systems have bounadries, we expect the low-lying spectra will match with the edge CFT spectra. On a disk and cylinder, we expect the low-lying spectrum will be the chiral and non-chiral Majorana CFT respectively. The results in Fig.~\ref{fig:Hrec-spectrum} (left, middle) confirms this expectation. Furthermore, the reconstructed spectra show a slightly better agreement with the CFT spectra than spectra from the orignal Hamiltonians. 

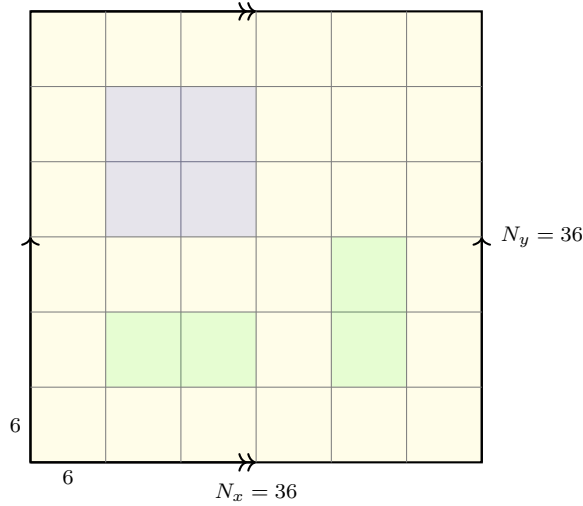
\begin{figure}[htb]
    \centering
    \begin{tikzpicture}
    \def\l{6}
    \def\d{1}
    \filldraw[color = yellow!10,draw = black,thick] (0,0) -- (\l,0) -- (\l,\l) -- (0,\l) -- cycle; 
    \draw[->>, thick] (0,0) -- ({\l/2}, 0); 
    \draw[->>, thick] (0,\l) -- ({\l/2}, \l);
    \draw[->, thick] (0,0) -- (0,{\l/2}); 
    \draw[->, thick] (\l,0) -- (\l,{\l/2});  

    \foreach \x in {1,...,5}
		  \draw[gray] (0,{\x*\l/6}) -- (\l, {\x*\l/6}); 
    \foreach \x in {1,...,5}
      \draw[gray] ({\x*\l/6},0) -- ({\x*\l/6},\l);

    \filldraw[color = blue, draw = gray, opacity = 0.1] (\d,{3*\d}) -- ({\d+2*\d},{3*\d}) -- ({\d+2*\d},{3*\d+2*\d}) -- (\d,{3*\d+2*\d}) -- cycle; 

    \filldraw[color = green, draw = gray, opacity = 0.1] (\d,{\d}) -- ({\d+2*\d},{\d}) -- ({\d+2*\d},{\d+\d}) -- (\d,{\d+\d}) -- cycle;
    
    \filldraw[color = green, draw = gray, opacity = 0.1] ({4*\d},{\d}) -- ({4*\d+\d},{\d}) -- ({4*\d+\d},{\d+2*\d}) -- ({4*\d},{\d+2*\d}) -- cycle;
    
    \node[font = \footnotesize] at ({\l/2}, -0.4) {$N_x = 36$};
    \node[font = \footnotesize] at (\l+0.8, {\l/2}) {$N_y = 36$};

    \node[font = \footnotesize] at ({\d/2}, -0.2) {$6$};
    \node[font = \footnotesize] at (-0.2, {\d/2}) {$6$};
  \end{tikzpicture}
	\caption{Coarse-grained lattice for constructing $H_{\text{rec}}$ on a torus. We can change the boundary condition and use the rule in Eq.~\eqref{eq:KD-decompose} to construct $H_{\rec}$ with boundaries. Each supersite $v$ is a block of size $6 \times 6$. The blue region and green regions are examples of a 4-supersite and two 2-supersites respectively.}
	\label{fig:Hrec-num}
\end{figure}

Let us then focus on the system on a torus. we check the error of \Azero and \Aone. The result is shown in Fig.~\ref{fig:EB-axioms-GD}. One can see that it is roughly exponentially decay as a function of the number of steps. 
\begin{figure}[htb]
    \centering
    \includegraphics[width = 0.9\columnwidth]{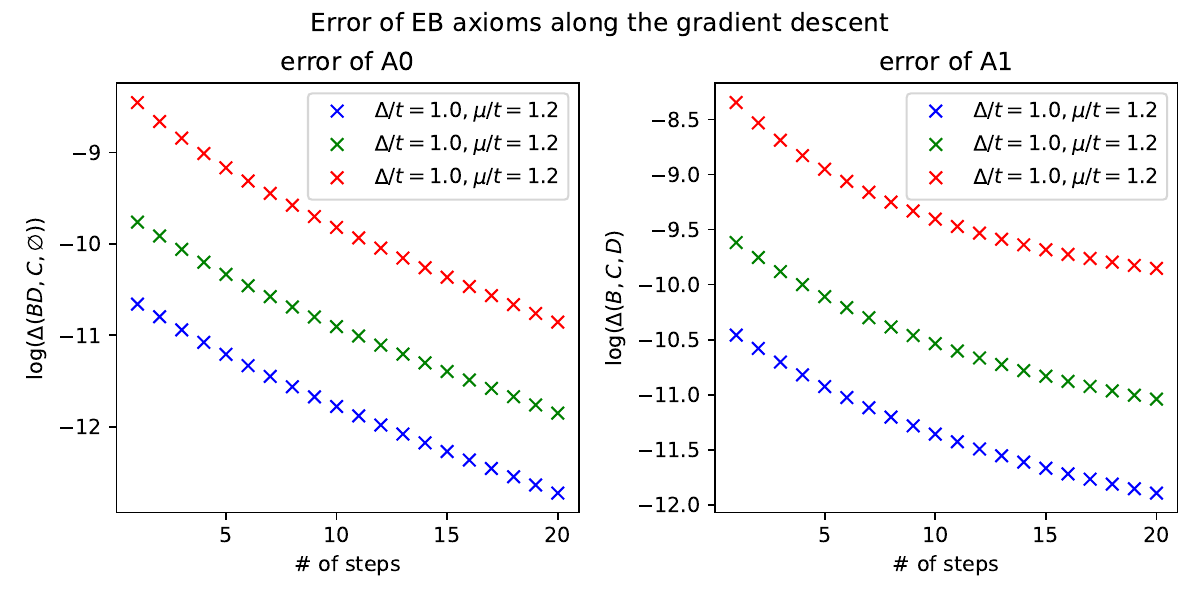}
	\caption{Error of \Azero and \Aone with gradient descent.}
	\label{fig:EB-axioms-GD}
\end{figure}

Then we use the same two setups in Fig.~\ref{fig:corner-conf-ruler-test12} to compute $\fc,\eta,\eta_J$ and error of the vector fixed point equations along the gradient descent. The result of the total central charge is shown in Fig.~\ref{fig:ctot-GD}. Its discrepancy with the expected value $\fc = 0.5$ decays roughly as power law as a function of the number of steps.  
\begin{figure}[htb]
    \centering
    \includegraphics[width = 0.9\columnwidth]{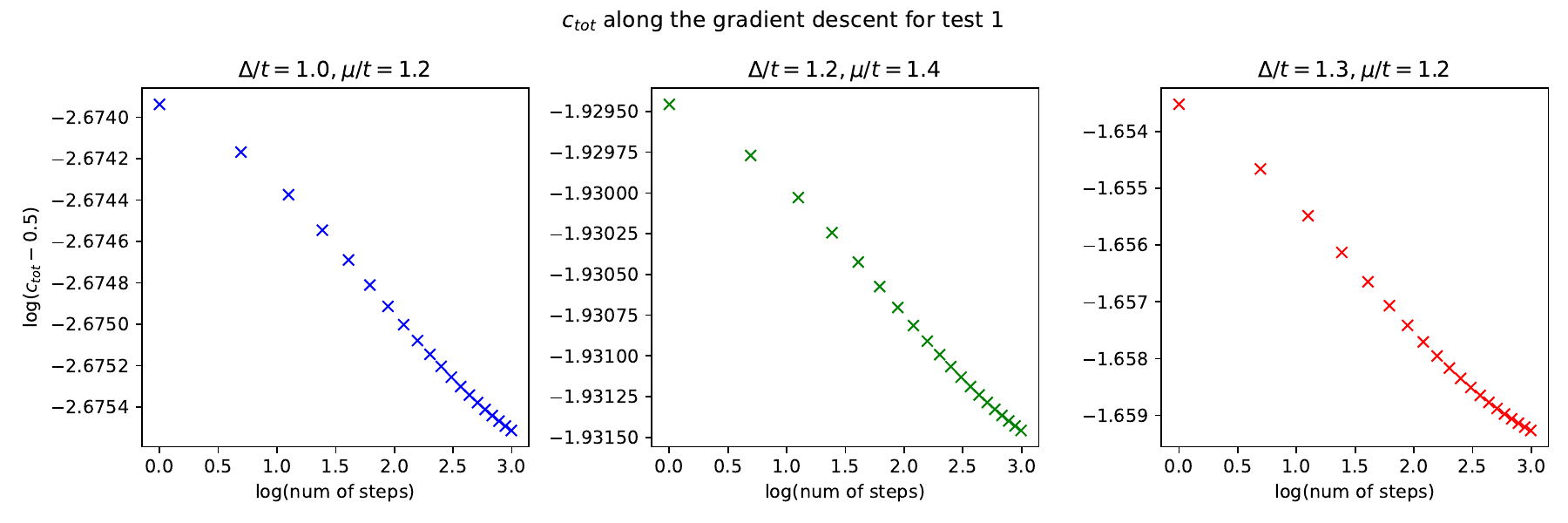}
    \includegraphics[width = 0.9\columnwidth]{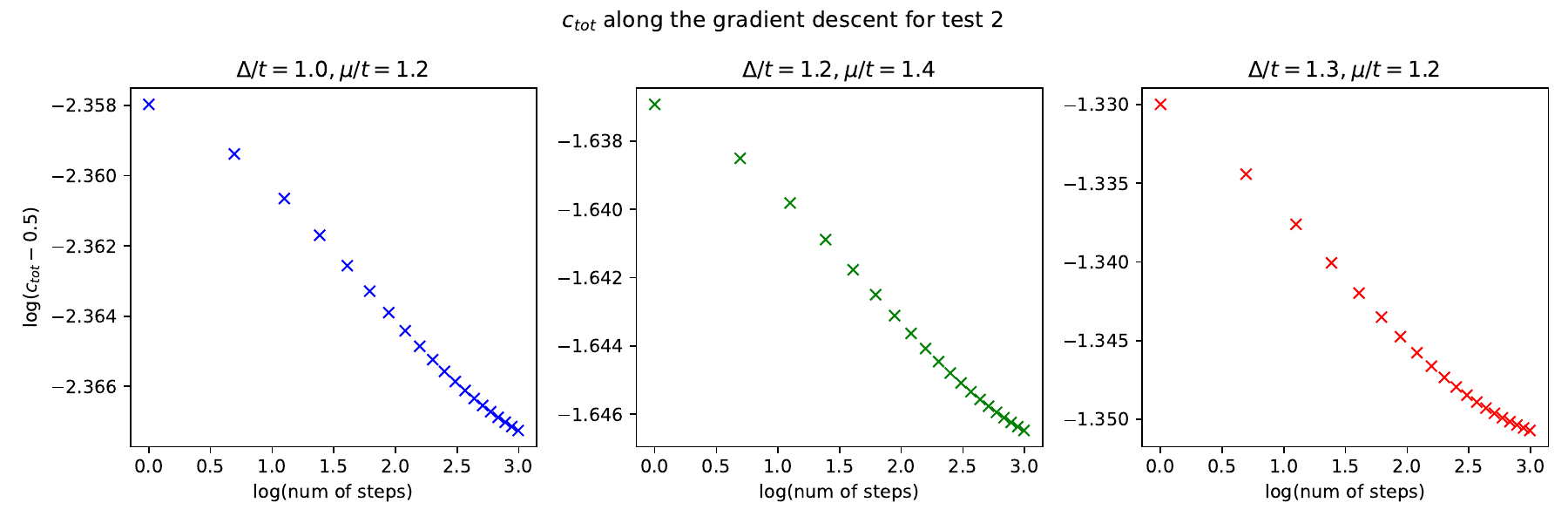}
	\caption{$\fc$ with gradient descent. The top figure is for test 1 Fig.~\ref{fig:corner-conf-ruler-test12} (left), and the bottom is for test 2 Fig.~\ref{fig:corner-conf-ruler-test12} (right).}
	\label{fig:ctot-GD}
\end{figure}
The error of the vector fixed point equations is shown in Fig.~\ref{fig:VFE-GD}, which is also roughly a power law decay.  
\begin{figure}[htb]
    \centering
    \includegraphics[width = 0.9\columnwidth]{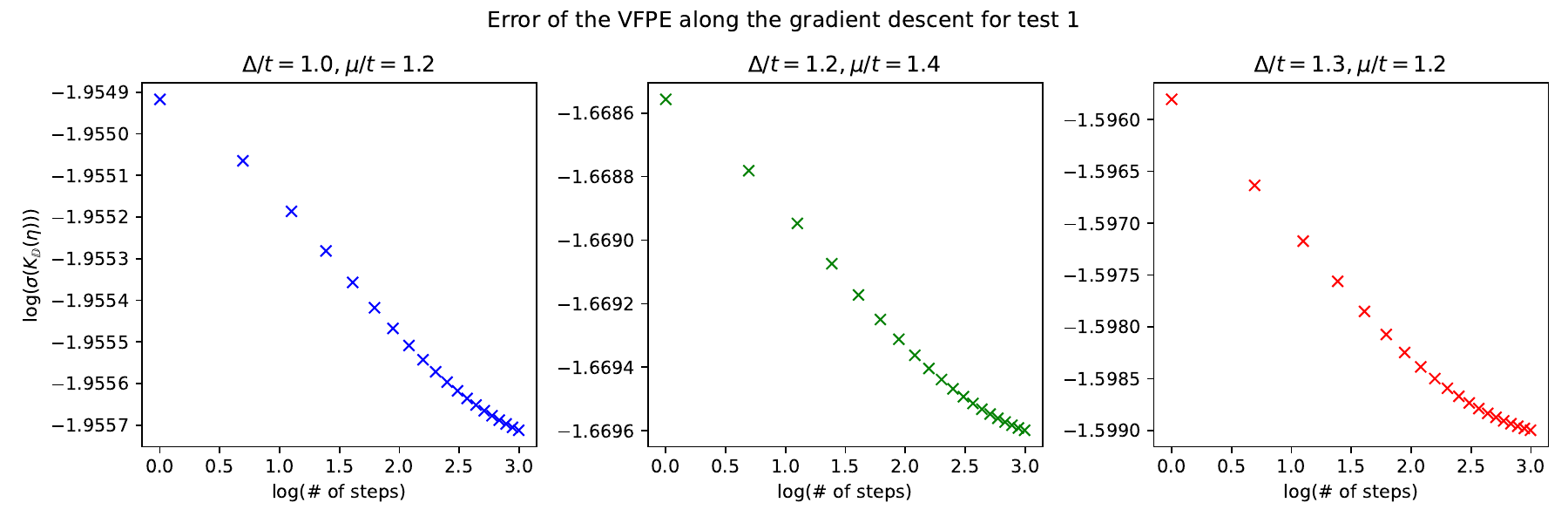}
    \includegraphics[width = 0.9\columnwidth]{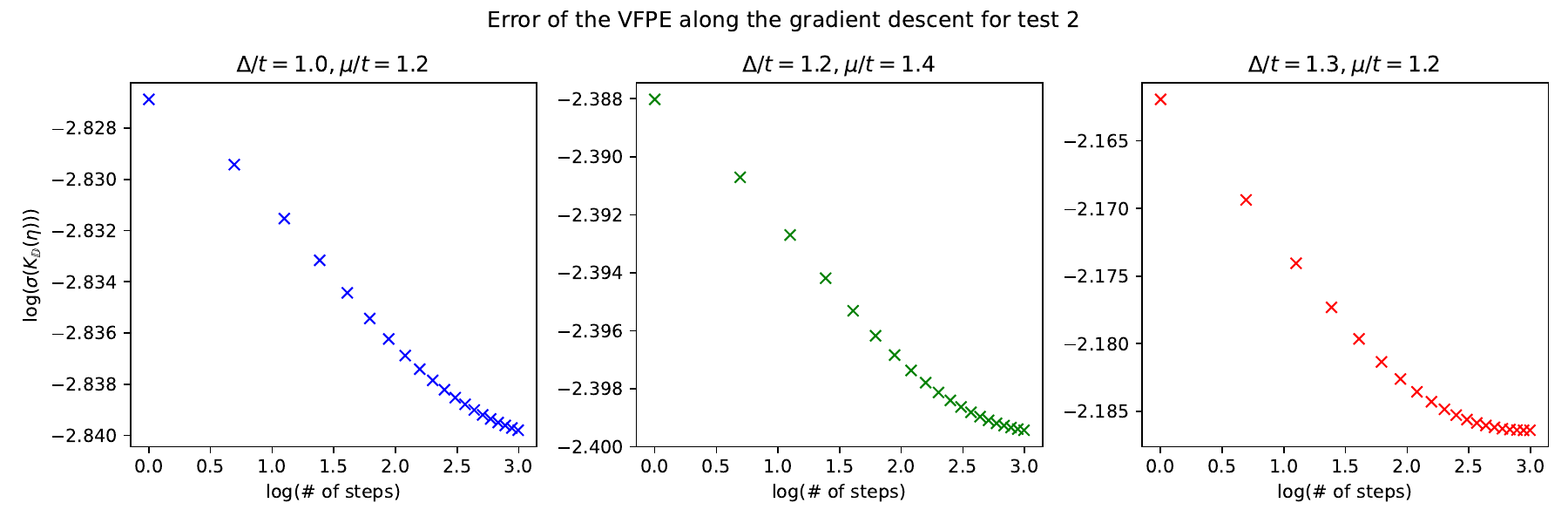}
	\caption{Error of the vector fixed point equations with gradient descent. The top figure is for test 1 Fig.~\ref{fig:corner-conf-ruler-test12} (left), and the bottom is for test 2 Fig.~\ref{fig:corner-conf-ruler-test12} (right)}
	\label{fig:VFE-GD}
\end{figure}
For the cross-ratios, the result is shown in Fig.~\ref{fig:etas-GD-test12}. 
\begin{figure}[htb]
    \centering
    \includegraphics[width = 0.8\columnwidth]{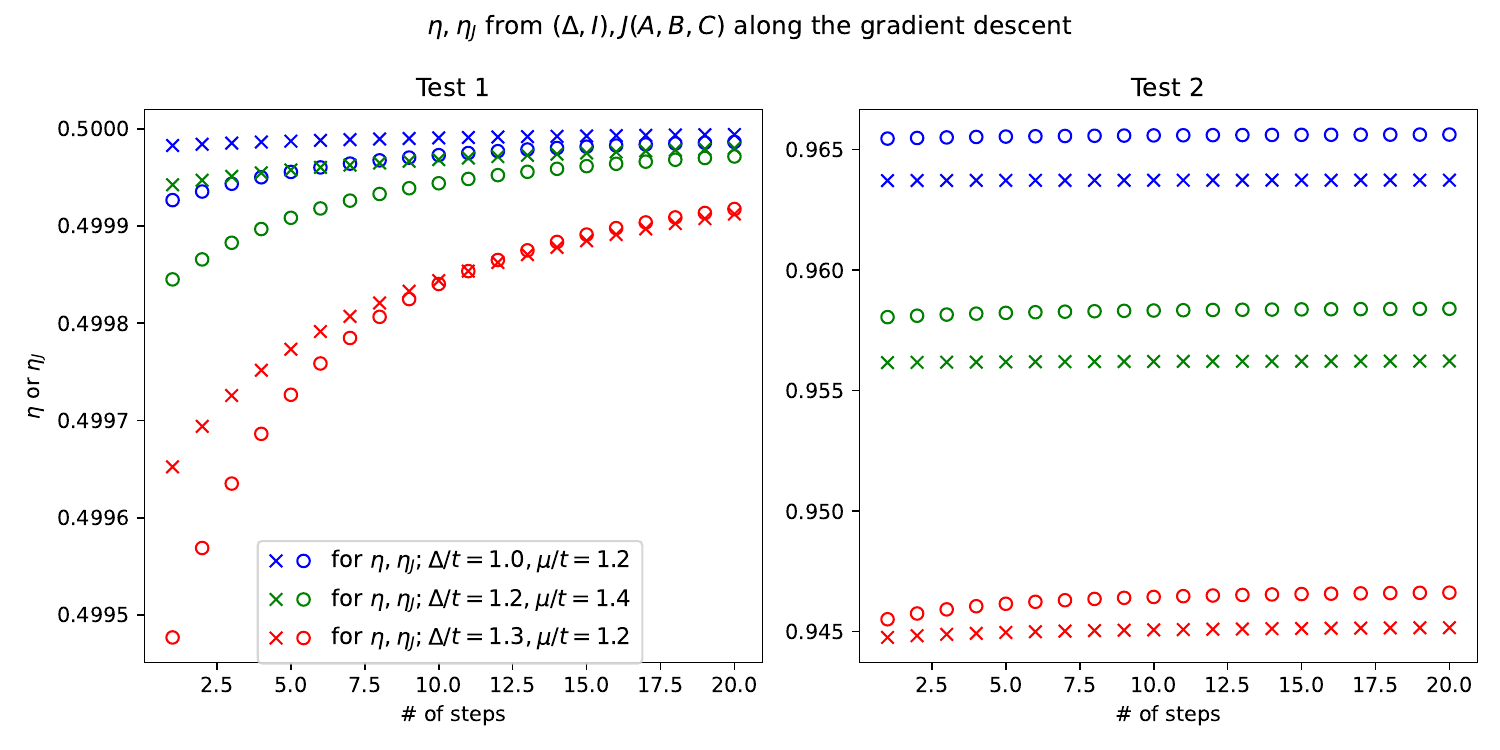}
	\caption{$\eta,\eta_J$ for test 1 (left) and test 2 (right) along the gradient descent.}
	\label{fig:etas-GD-test12}
\end{figure}
For test 1, we have a plausible expectation, namely $\eta = \eta_J = 1/2$ from the geometric cross-ratio of opening angles $\theta_A = \theta_B = \theta_C = \pi/2$. In the result, we can see a large improvement of the results along the gradient descents. We also directly plot the discrepancy from the expected value in Fig.~\ref{fig:etas-GD-test1}, which roughly shows an exponential decay as a function of the number of steps. 
\begin{figure}[htb]
    \centering
    \includegraphics[width = 0.5\columnwidth]{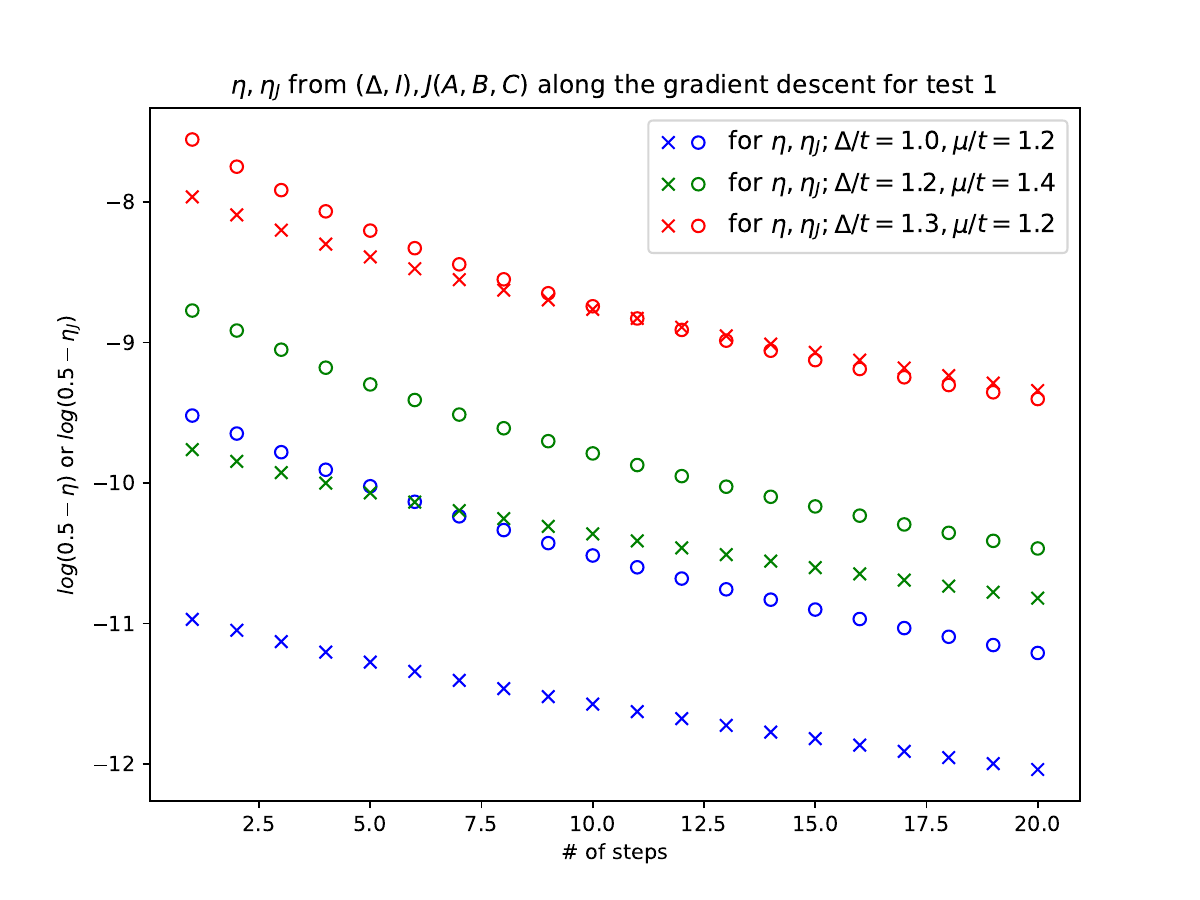}
	\caption{$\eta,\eta_J$ in test 1 with gradient descent, compared with the expected value $1/2$.}
	\label{fig:etas-GD-test1}
\end{figure}
For test 2, the improvement is relatively small compared with test 1. We suspect this might be due to the irregular shape of the corners. 

From these results, we can see that the finite size errors are reduced along the gradient descents. We do not expect that error for \Azero and \Aone can be reduced to zero and one could obtain exactly the same quantitative results as predicted from hypothesis~\ref{hypo:corner} in a Hilbert space with finite local dimensions. Nevertheless, these results suggest that the errors of \Azero and \Aone is positively correlated with the discrepancy between the corner entanglement results in finite system sizes and the expected results. This is align with the intuition that reducing \Azero and \Aone could drive the state closer to the IR fixed point. We shall leave the detailed correlations among these errors to future investigation. 

\section{Discussion}
In this paper, we have given a detailed analysis of the properties of the entanglement Hamiltonian of subregions in a 2+1d chiral gapped groundstate. We focus on its operator content and its role in the emergence of universal properties. We first proposed an operator bulk/edge correspondence that generalizes the Li-Haldane conjecture. Then, we've focused on the entanglement in the corner regions and proposed a physical picture [Hypothesis~\ref{hypo:corner}] as well as a logical framework [Section~\ref{sec:logical-framework}]. A concrete outcome is a solid and independent understanding of universal corner contributions to the entanglement entropy for these states \cite{Rodriguez:2010dm,Rozon:2019evk,Sirois:2020zvc,Liu:2023pdz,Liu:2024ulq} and how to extract them in a cutoff-independent way. Moreover, in the same framework, we've explained the reason why the modular commutator can compute the chiral central charge. The logical framework describes an emergence of a ``universal'' measure of a sharp corner up to a global conformal transformation. We also defined a quantity $(\fc)_{\text{min}}$ which can be used to diagnose edge ungappability. 

Perhaps most interestingly, our work implies that the bulk wavefunction knows something about geometry -- it can detect angles with the corner conformal ruler under the stationarity condition or the vector fixed point equation. This is consistent with the mathematicians' viewpoint that gapped chiral states are {\it not} described by topological field theory (e.g.~\cite{Freed:2016rqq}), but seems to go beyond the framing anomaly of \cite{Witten:1988hf} since it happens even when $c_- =0$. 
It seems that this is further evidence that a nonzero higher central charge can be regarded as a gravitational anomaly, perhaps related to modular transformations of a $T^2$ foliation of a lens space path integral \cite{Kaidi:2021gbs}.

Let us zoom out to a big picture. This work is a step toward establishing an explicit operator algebra correspondence between chiral topological order systems and 1+1D CFT. The following thought may cross the reader's mind: people already know that there is a correspondence between chiral TQFTs that describe the chiral topologically-ordered phases and rational chiral CFTs, why do we still pursue this line of study? Our goal here is to \emph{explain} why such a correspondence exists. As we explained in the introduction, it is indeed a common belief that one can use a UMTC to describe the IR fixed point of a gapped phase of matter\footnote{Here we are not considering symmetry enriched systems or fracton phases.}. However, \Azero and \Aone have the ability to \emph{explain} why this is true. Such an endeavor of analyzing how exactly the universal properties emerge can provide us with more insights and also help us examine whether there could be other cases not covered by the currently-known TQFT/CFT correspondence. For example, could there be some ungappable edges that are described by non-rational CFT or even a non-CFT? Furthermore, our method can be generalized to higher dimension. Could there be some analog of Hypothesis~\ref{hypo:corner} in higher dimension? Knowing under what conditions the bulk/edge correspondence will emerge can help us answer these questions.

Back to the concrete discussion. We envision that the operator algebra for the chiral topologically-ordered systems is obtained from local entanglement Hamiltonians. For example, one thing we can do is to extract a Virasoro algebra as in \cite{Vir}. On a corner region, by applying the Hypothesis~\ref{hypo:corner} to treat the corner as a hole with a gapless edge, one can design the same linear combinations of entanglement Hamiltonians as in \cite{Vir}. One can also apply the disentangler to actually make a small edge as discussed in Section~\ref{sec:diagnostic-for-ungappable}, then repeat the construction as in \cite{Vir}. 

As we discussed in the introduction, there are different scales $l$ on a chiral gapped wavefunction, compared with the lattice spacing $a$ and the correlation length $\xi$. In this paper, we are considering an ``extreme'' case where $a \ll l < \xi$, which is exactly the regime relevant to a corner region. This is the minimal length scale on which one can hope to identify any universal physics. The scale $l \gg \xi$, where the topological quantum field theory emerges, is the scale captured by the previous development of entanglement bootstrap by assuming no violation of \Azero and \Aone. An interesting question is how does the physics in these two scales ``interplolate'' in the region $l > \xi$. One concrete question along this direction is to find a more accurate description of the violation of \Azero and \Aone. One essence of this paper is that the Li-Haldane conjecture $\rho_D = e^{-\xi/l_D L_0}$ seems to be true across all these scales: On the scale $l$ with $a \ll l < \xi$ (i.e. the corner regions), Li-Haldane conjecture is reflected in the possibility that we can use an edge CFT groundstate to model the corner entanglement. On the scale $l \gg \xi$, one can think of $\rho_D$ as a CFT thermal state at infinite temperature and hence we have the exact area law and \Azero and \Aone with vanishing violations. On the scale $l > \xi$, one might be able to use the thermal CFT state with finite temperature to conclude some universal (as robust as the Li-Haldane conjecture) features in this regime. 

\vskip.2in
{\bf Acknowledgements.}
We are grateful to Meng Cheng, Isaac Kim, Max Metlitski, Shinsei Ryu, and Bowen Shi for helpful discussions and encouragement.
This work was supported in part by
funds provided by the U.S.~Department of Energy
(D.O.E.) under cooperative research agreement 
DE-SC0009919, and by the Simons Collaboration on Ultra-Quantum Matter, which is a grant from the Simons Foundation (652264, JM). JM received travel reimbursement from the Simons Foundation; the terms of this arrangement have been reviewed and approved by the University of California, San Diego in accordance with its conflict of interest policies.

\appendix 
\renewcommand{\theequation}
{\Alph{section}.\arabic{equation}}

\section{A sufficient condition for gapped boundary}
\label{app:siva}

In this appendix, we briefly summarize how to combine the arguments of 
\cite{Zou:2020bly} and \cite{Siva:2021cgo} 
to show that a state satisfying a strict area law admits a gapped boundary.

Suppose that for an arbitrary region $A$ with the topology of the disk, whose boundary has length $|\partial A|$, 
\be S_A = \alpha |\partial A| - \gamma_\text{topo}.\ee
This implies that $I(i-1:i+1|i) = 0$, $i=1..4$.  
Proposition 17 of \cite{Zou:2020bly} for the choice of regions in Fig.~\ref{fig:circles} 
then says that any such wavefunction can be written as a sum of polygon states 
with respect to regions 1,2,3,4.  
A sum of polygon states is of the form
\be \ket{\psi}_{ABCD}
= \sum_j \sqrt{p_j} \ket{\psi_j}_{A^j_RB_L^j}
\ket{\psi_j}_{B^j_RC_L^j}
\ket{\psi_j}_{C^j_RD_L^j}
\ket{\psi_j}_{D^j_RA_L^j}, \ee
with respect to a decomposition of each local Hilbert space
$ \CH_\alpha = \oplus_j \CH_{\alpha_L^j} \otimes \CH_{\alpha_R^j}$.  

For our purposes, the important point about the conclusion that the state is a sum of polygon states is that 
arguments of \cite{Siva:2021cgo} show that such a state would admit a gapped boundary.  The idea is that 
such a state has only bipartite and GHZ-type entanglement, and in particular 
the entanglement measure called ``Markov gap" vanishes for such states.  
The main result of \cite{Siva:2021cgo} is that the Markov gap is a bulk measure of the total central charge of the edge CFT.

\section{Reconstructed Hamiltonians}
\label{app:Hrec}
In this section, we discuss two reconstructed Hamiltonians $H_{\rec}^{\euler}$ and $H_{\rec}^{\Delta}$ constructed from entanglement Hamiltonians of a given wavefunction. In the big picture, the goal is to relate entanglement Hamiltonian to a real Hamiltonian, which is involved in the formulation of the operator bulk/edge correspondence. 

$H_{\rec}^{\euler}$ has the same form as a decomposition of an entanglement Hamiltonian of a disk. It is shown in \cite{Kim:2021tse} that for a state satisfying \Aone, one can decompose the entanglement Hamiltonian of a disk $D$ as 
\begin{equation}
\label{eq:KD-sum}
    K_D = \sum_{f \in D} K_f - \sum_{e \notin D_{\partial}} K_e + \sum_{v \notin D_{\partial}} K_v, 
\end{equation}
where $D_{\partial}$ is a single layer of lattice sites around the entanglement boundary of $D$. 

One might wonder about a similar formula for regions of other topology.  
For a generic region $D$, which may contain non-contractable loops, such an equation will not hold for the reduced density matrix from the groundstate, but for that of the maximal entropy states of the information convex set $\Sigma(D)$ defined in \cite{Shi_2020}. From the physical point of view, the reason is that the right hand side of Eq.~\eqref{eq:KD-sum} is invariant under any threading of anyons into the holes, hence the left hand side has to be the maximal entropy state which includes all the anyon sectors. Technically, the reason is that Eq.~\eqref{eq:KD-sum}, besides as a decomposition, can also be regarded as recovering the states $\rho_D$ from small pieces $\{\rho_f\}, \{\rho_e\}, \{\rho_v\}$, where the elementary step is Petz recovery map. Such a merging process will result in the maximal entropy states of $\Sigma(D)_{\rho}$ \cite{Shi_2020}.

Consider a generic region $D$ with a triangulation. $H_{\rec}^{\Delta}$ is defined as 
\begin{align}
    H_{\rec}^{\Delta} \equiv  \sum_{v \in D_{\inte}} \frac{1}{2} \overline{\hat{\Delta}_v} + \sum_{v \in D_{\partial}} \frac{1}{2}\overline{\hat{\Delta}^{\partial}_v}. 
\end{align}
where $\overline{\hat{\Delta}_v}$ is the averaging over all the possible operator bulk \Aone combination centered at $v$ and of a fixed size and $\overline{\hat{\Delta}_v^{\partial}}$ are those along the boundaries. The subscript denotes the center of the \Aone region and we consider all the possible partitions and orientations of the buffer region with one lattice spacing. This operator is manifestly positive as $\hat{\Delta} \geq 0$ due to \cite{Lin:2022jtx}. If $D$ is closed, then demanding $\ket{\psi}$ being its groundstate with zero energy will enforce the \Aone condition. If $D$ has a CFT boundary, on which there are CFT groundstate then the boundary term will give a reconstructed Hamiltonian of the edge CFT as constructed in \cite{Vir}. If it is a CFT thermal state with inverse temperature $\beta$ living on the boundary, under the limit $\beta/\ell_D$, one shall also obtain a CFT reconstructed Hamiltonian from the Hypothesis~\ref{hypo:op-bulk-edge}. This is the case where $\partial D$ is an entanglement boundary in the bulk of a chiral gapped state and one can identify $\beta/\ell_D$ as $\xi / \ell_D$.

In the result of this section, we are going to derive a relation between $H_{\rec}^{\Delta}$ and $H_{\rec}^{\euler}$. The setup is on any triangulation of a 2d manifold $\CM$. The degrees of freedom live on the vertices. Let $F, E, V$ be the set of faces (triangles), edges and vertices and we use $d_v$ to denote the degree of a vertex $v$. We shall consider $H_{\rec}^{\Delta}$ and $H_{\rec}^{\euler}$ from a state $\rho$ that satisfy \Aone in the bulk. We do not posit any specific boundary condition. In this setup, $H_{\rec}^{\euler}$ and $H_{\rec}^{\Delta}$ are related by 
\begin{align}\label{eq:Hrec-relation}
    H_{\rec}^{\Delta} - \sum_{v \in V_{\inte}} \frac{6-d_v}{6} \overline{\hat{\gamma}_v} = H_{\rec}^{\euler}, 
\end{align}
where $\overline{\hat{\gamma}_v}$ is an averaging of operator Kitaev-Preskill TEE. Explicitly, for a face $f = \langle abc \rangle$, we define 
\begin{align}
    \hat{\gamma}_f \equiv K_{ab} + K_{bc} + K_{ac} -  K_{abc} - K_a - K_b - K_c. 
\end{align}
The averaging is 
\begin{align}
    \overline{\hat{\gamma}_v} \equiv \frac{1}{d_v} \sum_{f, v\subset f} \hat{\gamma}_f. 
\end{align}

Let us first consider the bulk region. For a site $v$, we consider all the possible $\hat{\Delta}$ that are centered at $v$ and whose thickness of the buffer region is one lattice spacing. Therefore, there are $d_v = N$ number of sites in the buffer region. Let us denote them as $u_1 \cdots u_N$ in a certain order (say clockwise). We define 
\begin{align}
    \overline{\hat{\Delta}_v} \equiv \frac{1}{N_{p,o}}\sum_{p,o} \hat{\Delta}_{v,(p,o)}.
\end{align} 
Here $p$ denotes a partition of $p(N)$ into two positive integers $p(N) = m+n$ so that in $\hat{\Delta}(B,v,D)$ region $B,D$ contains $m,n$ sites, and $o$ denotes the orientations. $N_{p,o}$ denote the number of all partitions and orientations. Notice, when $N$ is even and the partition is $q(N) = N/2 + N/2$, i.e. $|B| = |D| = N/2$, $\hat{\Delta}_{v,q,o} = \hat{\Delta}_{v,q,o'}$ where $o,o'$ are related by a $\pi$ rotation. In the sum, we still count both of them. This has the advantage that for a given partition, the number of orientation is always $d_v$. Notice that we actually didn't overcount to give an extra ``weight'' for those $\hat{\Delta}$, because the factor $2$ from the double counting is cancelled by regarding $N_o = d_v$. 

Now we are in a position to show Eq.~\eqref{eq:Hrec-relation}. The key in the derivation is the Markov decomposition $K_{ABC} = K_{AB}+K_{BC}-K_B$ for a state $\rho_{ABC}$ being a quantum Markov chain $I(A:C|B)_{\rho} = 0$. With this decomposition, we can decompose $\hat{\Delta}$ into a sum over $K_f, K_e,K_v$. For example, consider a $\hat{\Delta}_v$ centered at a vertex $v$ of degree $d_v$. We can decompose as follows: 
\begin{align}
    \includegraphics[width = 0.2\columnwidth, valign = c]{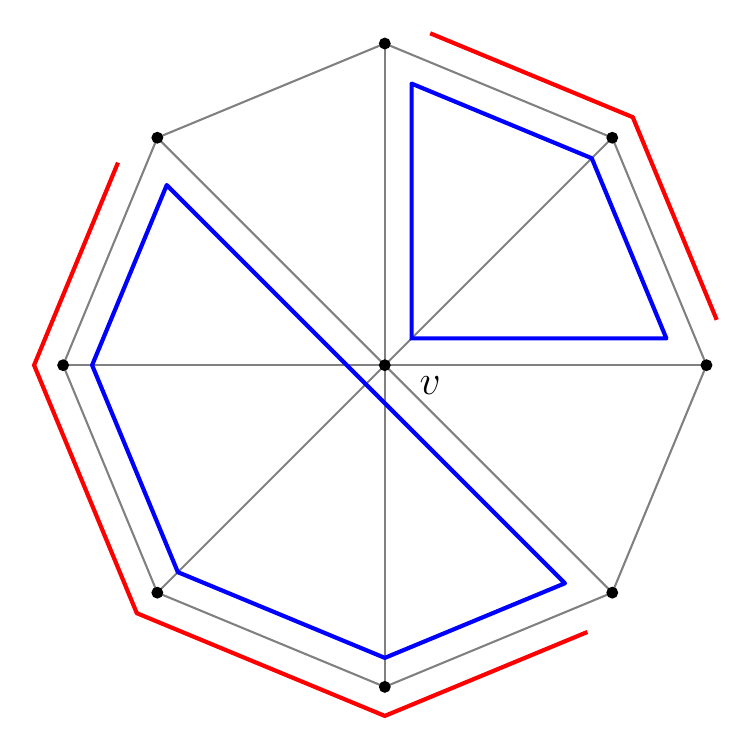} = \includegraphics[width = 0.2\columnwidth, valign = c]{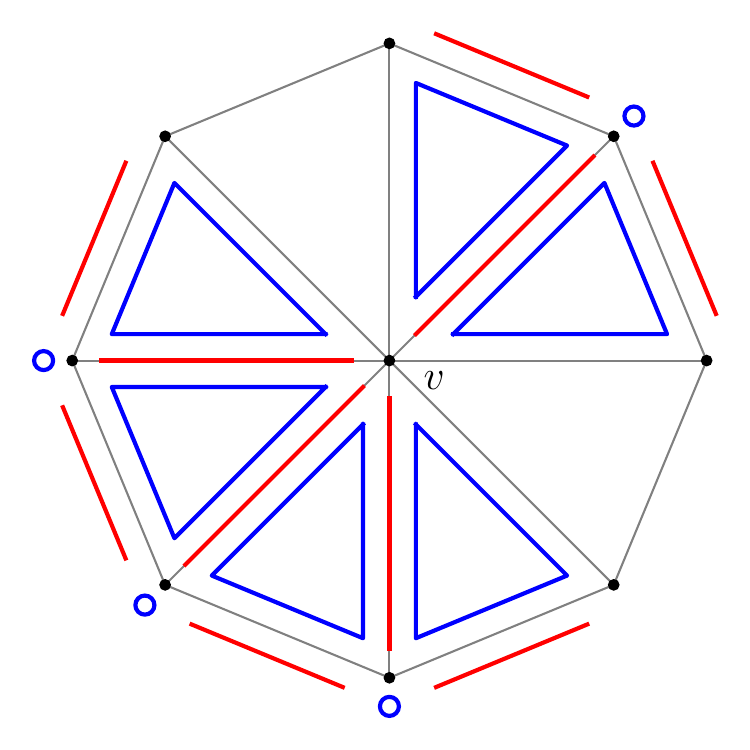}
\end{align}
Here we use a graphic notation. Each figure represents a sum over entanglement Hamiltonians. Degrees of freedom lives on the vertices represented by black dots. Each path, i.e. a collection of connected line segements, represent a $K_{X}$ in the sum with $X$ being the union of the sites along which the path goes. The blue color indicates the sign is positive and the red stands for negative. 

After the averaging over all possible partitions and orientations, we obtain 
\begin{align}
    \overline{\hat{\Delta}_v} =  \frac{d_v-2}{d_v}\includegraphics[width = 0.2\columnwidth, valign = c]{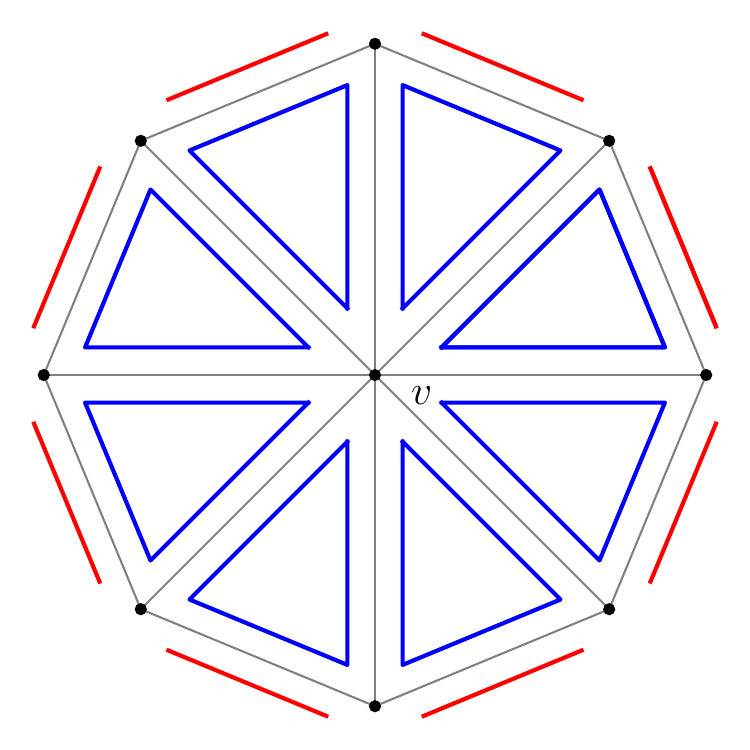} + \frac{d_v-4}{d_v} \includegraphics[width = 0.2\columnwidth, valign = c]{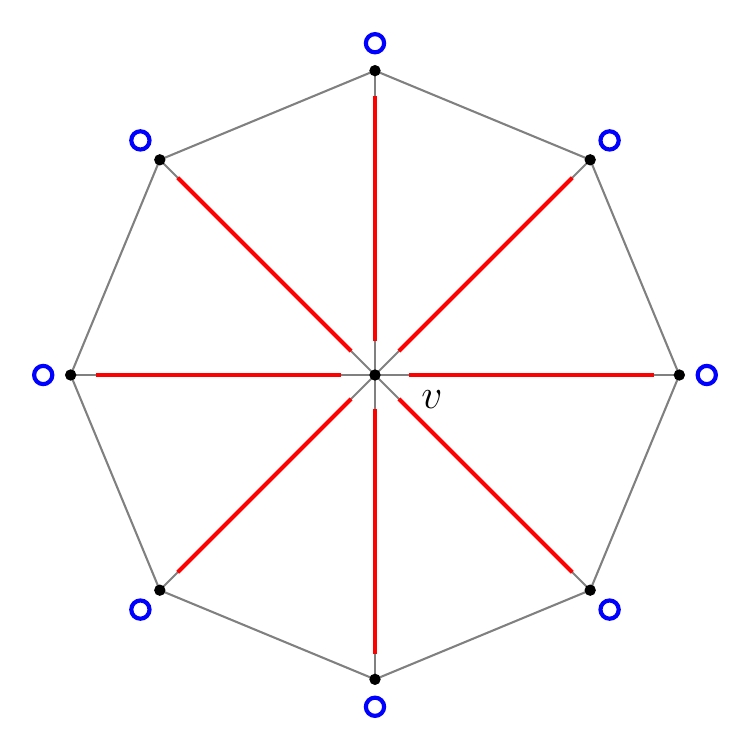}
\end{align}

For $-\overline{\hat{\gamma}_v}$, it is graphically represented as 
\begin{align}
    -\overline{\hat{\gamma}_v} =  \frac{1}{d_v}\includegraphics[width = 0.2\columnwidth, valign = c]{figs/Delta-hat-ave1.pdf} + \frac{2}{d_v} \includegraphics[width = 0.2\columnwidth, valign = c]{figs/Delta-hat-ave2.pdf} + K_v 
\end{align}

Therefore, one can obtain 
\begin{align}
    h_v&\equiv \frac{1}{2}\overline{\hat{\Delta}_v} - \frac{6-d_v}{6}\overline{\hat{\gamma}_v} \\& = \frac{1}{3} \includegraphics[width = 0.2\columnwidth, valign = c]{figs/Delta-hat-ave1.pdf} + \frac{1}{6}\includegraphics[width = 0.2\columnwidth, valign = c]{figs/Delta-hat-ave2.pdf} + \frac{6-d_v}{6} K_v
\end{align}

Now let us consider an arbitrary $f = \langle v i j \rangle, e = \langle vi\rangle, v$ shown in Fig.~\ref{fig:simplex} and examinee $K_f, K_e, K_v$ obtained from the left hand side of Eq.~\eqref{eq:Hrec-relation}. For $K_f$, there is $1/3 \cdot K_f$ from $h_v, h_i, h_j$ and hence we obtain $K_f$. For $K_{e}$, there will be $1/6 \cdot K_e$ from each of $h_{v}, h_{i}$ and $1/3 \cdot K_e$ from each of $h_j,h_l$. Therefore, there will be $K_e$ in total. For $K_v$, there will be $1/6\cdot K_v$ from each $h_{u}$ such that $u$ is connected to $v$ by a single edge and hence there are $d_v$ in total from them. There is $(6-d_v)/6 \cdot K_v$ from $h_v$. Therefore, we obtain $K_v$ in the end. Hence we finish the derivation of Eq.~\eqref{eq:Hrec-relation}. 

\begin{figure}[htb]
    \centering
    \includegraphics[width = 0.5\columnwidth]{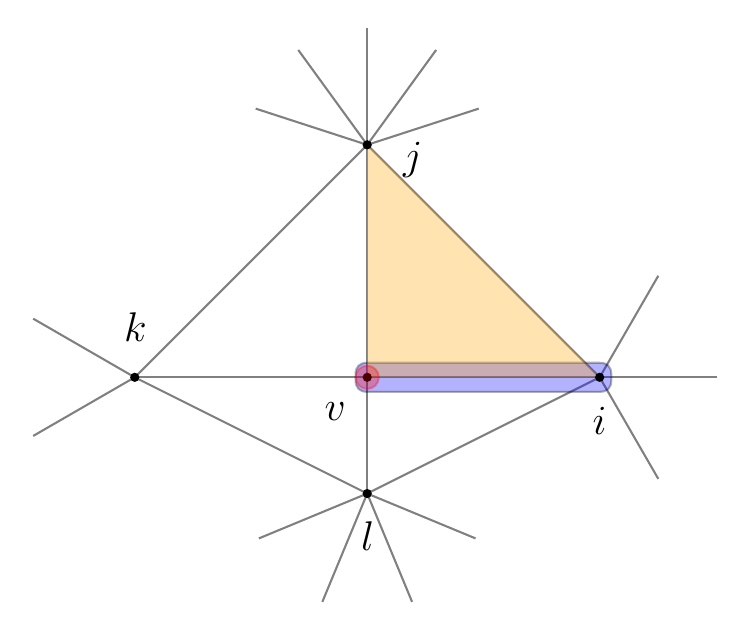}
	\caption{An arbitrary face $f = \langle vij\rangle$, edge $e = \langle vi\rangle$ and site $v$ in the triangulation.}
	\label{fig:simplex}
\end{figure}

Now we discuss the terms near the boundary. Because here we work with a coarse-grained lattice, we can choose to make triangle lattice near the boundary as shown in Fig.~\ref{fig:bdy-lattice}. 

\begin{figure}[htb]
    \centering
    \includegraphics[width = 0.5\columnwidth]{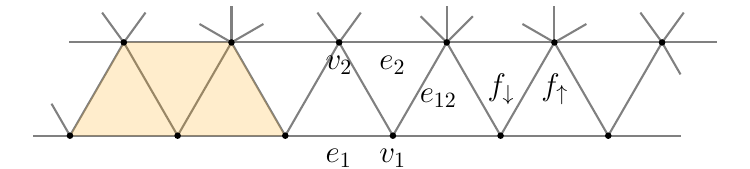}
	\caption{Triangle lattice along the boundary. $V_1=\{v_1\}$ are set of vertices that along the boundary. $V_2 = \{v_2\}$ are set of vertices that are away from the boundary vertices by one lattice spacing. $E_1 = \{e_1\},E_2 = \{e_2\}$ are set of edges that are made out of sites in $V_1, V_2$ respectively. $E_{12} = \{e_{12}\}$ are set of edges that contains both $v_1\in V_1, v_2\in V_2$. $F_{\uparrow}, F_{\downarrow}$ are faces that contains two and one boundary sites respectively.}
	\label{fig:bdy-lattice}
\end{figure}

The $\overline{\hat{\Delta}^{\partial}_v}$ is  
\begin{align}
    \overline{\hat{\Delta}^{\partial}_v} &\equiv \frac{1}{3} \(
    \includegraphics[width = 0.2\columnwidth, valign = c]{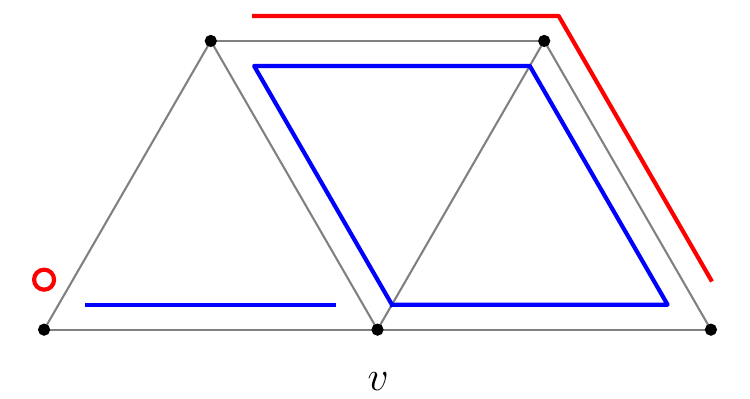} + \includegraphics[width = 0.2\columnwidth, valign = c]{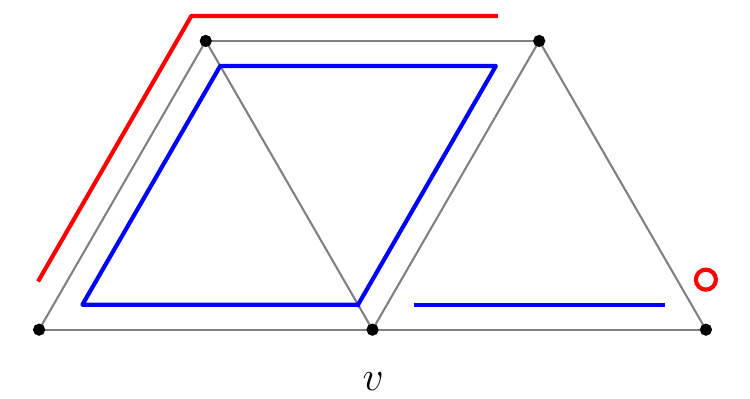} +\includegraphics[width = 0.2\columnwidth, valign = c]{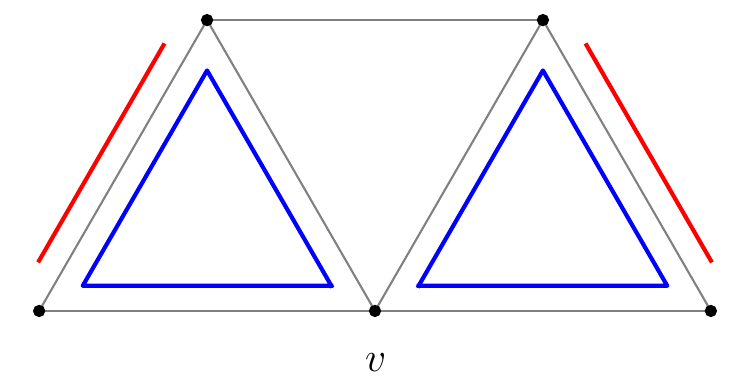}  \)  \\ 
    & = \frac{2}{3} \includegraphics[width = 0.2\columnwidth, valign = c]{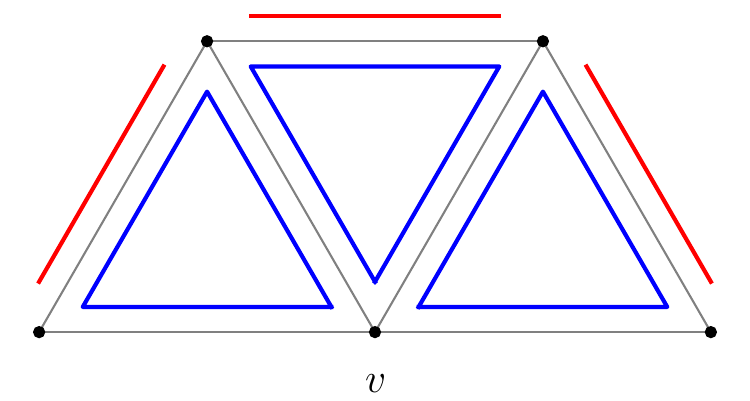} + \frac{1}{3} \includegraphics[width = 0.2\columnwidth, valign = c]{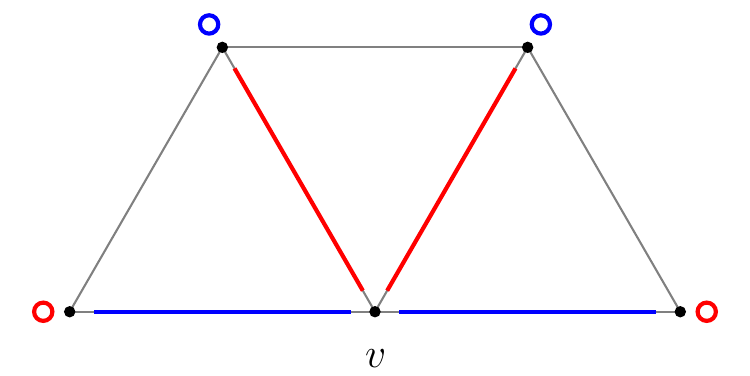},
\end{align}
where the second equality is obtained via Markov decomposition. Such decomposition only utilizes the \emph{bulk} \Aone condition and hence it is applicable regardless of the boundary condition. 

One can now do a counting similar to what we did in the bulk regions.
Notice $\overline{\hat{\Delta}_v^{\partial}}$ produce exactly the same number of $K_f$, $K_e, e \notin E_1$ and $K_{v}, v \in V_2$ as the averaged operator bulk \Aone combination, hence one shall obtain the same terms as in the bulk. For $K_{e}, K_{v}$ with $e\in E_1, v \in V_1$, the sign is flipped and one can see that they are canceled with those from $\overline{\hat{\Delta}_v}$ in the bulk. Hence we finish the derivation of Eq.~\eqref{eq:Hrec-relation}.

\section{Properties of $K_{D_\inte}$ and $K_{D_{\partial}}$}
\label{app:property}

This appendix contains the explicit derivations of various properties regarding $K_{D_\inte}$ and $K_{D_\partial}$ mentioned in the main text. 

First, we will show that $K_{D_\partial}$ fully capture the action of $K_D$ on the state. Explicitly, we will show that 
\begin{align}\label{eq:KDn-psi}
     K_D^n \ket{\Psi} = (K_{D_\partial} - \gamma \mathbbm{1})^n \ket{\Psi}. 
\end{align}
As a starter, notice that 
\begin{align}\label{eq:KDn-flip}
     K_D^n \ket{\Psi} = K_{\bar D}^n \ket{\Psi},\quad \forall n \in\mathbb{N} 
\end{align}
This can be shown by doing Schimidt decomposition of $\ket{\Psi}$ between $\CH_D$ and $\CH_{\bar D}$. One can apply this to the left-hand-side of Eq.~\eqref{eq:KDn-psi} and obtain 
\begin{align}
     K_D^n \ket{\Psi} = K_D K_{\bar D}^{n-1}\ket{\Psi} =  K_{\bar D}^{n-1} K_{D}\ket{\Psi}, 
\end{align}
where the last equality is because $[K_D, K_{\bar D}^{n-1}] = 0$ for they having different supports. Then one can apply the decomposition $K_D = K_{D_\inte} + K_{D_\partial}$ and obtain 
\begin{align} \label{eq:KDn-1}
     K_{\bar D}^{n-1} K_{D}\ket{\Psi} = K_{\bar D}^{n-1} (K_{D_\partial} - \gamma \mathbbm{1}) \ket{\Psi} = (K_{D_\partial} - \gamma \mathbbm{1}) K_{\bar D}^{n-1} \ket{\Psi} = (K_{D_\partial} - \gamma \mathbbm{1}) K_{D}^{n-1} \ket{\Psi}
\end{align}
Here, the usage of 
\begin{align}
     K_D \ket{\Psi} = (K_{D_\partial} - \gamma \mathbbm{1}) \ket{\Psi}
\end{align}
is based on the fact that $\mathbb{L} = K_D - K_{D_\partial}$ is a good modular flow generator defined in \cite{Vir}. With \Azero and \Aone, one can show that $\mathbb{L}\ket{\Psi} \propto \ket{\Psi}$ and the proportionality can be comptued by taking the overlap, which is $-\gamma$. In the last two equalities of Eq.~\eqref{eq:KDn-1}, we first use the fact that the two operator commutes to bring $K_{D_\partial} - \gamma \mathbbm{1}$ in front of $K_{\bar D}^{n-1}$, then the latter back to $K_D^{n-1}$ due to Eq.~\eqref{eq:KDn-flip}. Then for $K_D^{n-1}\ket{\Psi}$, one can repetitively apply the derivation above, until all the $K_D$ is converted to $K_{D_\partial} - \gamma \mathbbm{1}$. We thus finish the proof. 

\section{Detailed calculations for the modular commutator in chiral CFT}
\label{app:TT}
In this section, we give a detailed calculation of the modular commutator Eq.~\eqref{eq:J-chiral-CFT} for two generic intervals $[a,b]$ and $[c,d]$. The result is slightly more general that the one in \cite{Zou:2022nuj} as it also takes into account cases where the boundaries of the intervals coincide. In such cases, there will be singularities in the OPE computation. We present two ways in obtaining Eq.~\eqref{eq:J-chiral-CFT}, one is from Virasoro algebra and the other is OPE computation with a regulation of the `boundary singularities'. The Virasoro algebra gives a confirmation of the validity of the regulation. 

\subsection{Conventions}
We first introduce the notations and conventions: 
\begin{itemize}
    \item $x \in S^1$ means $x \sim x + 2\pi n$. The ``principal domain'' is $(-\pi, \pi]$. 
    \item The principal branch of $\ln(z)$ is $\im\ln(z)\in(-\pi,\pi]$. 
    \item Indicator function: 
    \begin{equation}
        \Theta_{[a,b]}(x) = \left\{ \begin{aligned}
            &1,&& \text{ if } x \in (a,b) \\ 
            &0,&& \text{ if } x \in S^1\setminus [a,b]  \\ 
        \end{aligned} \right. 
    \end{equation} 
    \item Step function 
    \begin{equation}
        \Theta(x) = \left\{ \begin{aligned}
            &1,&& \text{ if } x>0 \\ 
            &0,&& \text{ if } x<0  \\ 
        \end{aligned} \right. 
    \end{equation}  
    \item $l_{a,b} = b-a$.  
    \item $q = q_n = e^{\ii n}$.
    \item $\sum_{n} = \sum_{n\in\mathbb{Z}}$. 
\end{itemize}

\subsection{Commutators from Virasoro algebra} 

This section is to compute the modular commutators using Virasoro algebra. 
We first want to express the entanglement Hamiltonians using Virasoro generators. 
The entanglement Hamiltonian of an interval on $[a,b]$ is 
\begin{equation}
    K_{[a,b]} = \int_0^{2\pi} dx \beta_{[a,b]}(x)T(x),\quad \beta_{[a,b]}(x) = 2\Theta_{[a,b]}(x) \frac{\sin((x-a)/2)\sin((b-x)/2)}{\sin((b-a)/2)}.
\end{equation}
Using Fourier transformation 
\begin{equation}
    T(x) = \sum_{n} e^{-\ii n x}L_n,
\end{equation}
we can obtain 
\begin{equation}
    K_{[a,b]} = \sum_n \lambda^{[a,b]}_n L_n,
\end{equation}
with 
\begin{align}
   \lambda^{[a,b]}_n &= \int_0^{2\pi}dx \beta_{[a,b]}(x)e^{-\ii nx}  \\ 
    & = \frac{\ii\cot\(\frac{l_{a,b}}{2}\)\cdot (e^{-\ii nb}-e^{-\ii na}) - n (e^{-\ii nb}+e^{-\ii na})}{n^3-n} \\ 
    & = \frac{2e^{-\ii n(b+a)/2}}{n^3-n}\left[ \cot\( \frac{l_{a,b}}{2}\)\sin\(\frac{nl_{a,b}}{2}\) - n \cos\(\frac{nl_{a,b}}{2}\)\right].
\end{align}

Remarks: 
\begin{align}
    &\lambda_{n} \equiv \lambda_n^{[a,b]} \\ 
    &\lambda_{-n} = \lambda_n^* \\ 
    &\lambda_0 = 2 - (b-a) \cot \left(\frac{b-a}{2}\right) \\ 
    &\lambda_1 = \frac{e^{-i(a+b)/2}}{2} \frac{b-a+\sin(b-a)}{\sin((b-a)/2)} \\ 
    &\lambda_{-1} = \frac{e^{i(a+b)/2}}{2} \frac{b-a+\sin(b-a)}{\sin((b-a)/2)}
\end{align}

Now we compute the modular commutator: 
\begin{align}
    -\ii J_{[a,b],[c,d]}\equiv \langle [K_{[a,b]},K_{[c,d]}] \rangle &= \frac{c}{12}\sum_{n,m}\lambda_{n}^{[a,b]}\lambda_{m}^{[c,d]}(n^3-n)\delta_{n,-m} \\ 
    & = \frac{c}{12}\sum_n \lambda_n^{[a,b]}\lambda_{-n}^{[c,d]}(n^3-n) \\ 
    & = \frac{\ii c}{6}\sum_{n\geq 2}  \im(\lambda_n^{[a,b]}\lambda_{-n}^{[c,d]})(n^3-n) \\ 
    & = \frac{\ii c}{6}\sum_{n\geq 2} \Lambda_{[a,b],[c,d]}(n),
\end{align}
where 
\begin{align}
    \Lambda_{[a,b],[c,d]}(n) = \frac{4 \sin\( \frac{l_{a,b}}{2}\) \sin\(\frac{l_{c,d}}{2}\) \sin\(\frac{n(l_{a,c}+l_{b,d})}{2}\)}{n^3-n} F_{[a,b]}(n) F_{[c,d]}(n), 
\end{align}
with 
\begin{equation}
    F_{[x,y]}(n) = (n+1) \sin\( \frac{(n-1)l_{x,y}}{2} \) - (n-1) \sin\( \frac{(n+1)l_{x,y}}{2} \).
\end{equation}

The sum can be explicitly performed: 
\begin{align}
    \sum_{n\geq 2}\Lambda_{[a,b],[c,d]}(n) =(2\eta - 1) \cdot S_{[a,b],[c,d]},
\end{align}
where 
\begin{align}
    &\eta = \eta(a,c,b,d) = \frac{\sin(l_{a,c}/2)\sin(l_{b,d}/2)}{\sin(l_{a,b}/2)\sin(l_{c,d}/2)} \\ 
    &S_{[a,b],[c,d]} = g(l_{a,c}) - g(l_{b,c}) - g(l_{a,d})+g(l_{b,d}), 
\end{align}
with 
\begin{equation}
    g(x) = \frac{\ii }{2} \[ \ln(1-e^{\ii x}) - \ln(1-e^{-\ii x}) \].
\end{equation}

The result of the commutator is 
\begin{equation}\label{eq:J-vir}
    \boxed{ -\ii J_{[a,b],[c,d]} = \frac{\ii \pi c}{6} (2\eta(a,c,b,d)-1) \cdot S_{[a,b],[c,d]} }.   
\end{equation}

$g(x)$ has the following properties: 
\begin{itemize}
    \item This function is odd: 
    \begin{equation}
        g(-x) = -g(x). 
    \end{equation}
    \item This function is a periodic extension of $g_{[-\pi, \pi]}(x)$, defined as 
    \begin{equation}
        g_{[-\pi, \pi]}(x) = \sum_{n\geq 1} \frac{\sin(nx)}{n} = \left\{ \begin{aligned}
            &\frac{\pi-x}{2},&& \quad x\in(0,\pi] \\ 
            &0, && \quad x = 0 \\ 
            &-\frac{\pi-x}{2},&& \quad x\in[-\pi,0) \\ 
        \end{aligned} \right. . 
    \end{equation} 
\end{itemize}

Let's examine some cases: 
\begin{itemize}
    \item Usual case: $a<c<b<d$, then $l_{b,c}<0$ and 
    \begin{align}
        S_{[a,b],[c,d]} &= \frac{\pi - l_{a,c}}{2} + \frac{\pi - l_{c,b}}{2} - \frac{\pi- l_{a,d}}{2} + \frac{\pi-l_{b,d}}{2} \\ 
        & = \pi - \frac{1}{2}(c-a + b-c -d+a+d-b) \\ 
        & = \pi. 
    \end{align}
    Therefore we obtain 
    \begin{equation}
        -\ii J_{[a,b],[c,d]} = \frac{\ii \pi c}{6}(2\eta(a,c,b,d) - 1). 
    \end{equation}
    \item Left limit: $a<c<b<d$ with $c-a\to 0^+$. In this case, one just get the same result as above, with $\eta = 0$:  
    \begin{equation}
        J_{[a,b],[c,d]} = - \frac{\ii \pi c}{6}. 
    \end{equation}
    \item Left coincide: $a=c<b<d$. In this case 
    \begin{equation}
        g(l_{a,c}) = 0.
    \end{equation}
    Notice it's crucial that $g(0)=0$, because what actually appears is the sum. The correct order of limit is 
    \begin{equation}
        \lim_{N\to\infty}\lim_{x\to 0} \sum_{n = 1}^N \frac{\sin(nx)}{n} = 0. 
    \end{equation}
    One can explicitly check this step by step. First set $a=c$, then compute $\Lambda_{[a,b],[a,d]}(n)$. After doing the sum, one shall find there are only three $g(x)$: 
    \begin{align}
        S_{[a,b],[a,d]} =& g(l_{a,b}) - g(l_{a,d})+g(l_{b,d}) \\ 
        =& \frac{\pi}{2} +\frac{b-a-d+a+d-b}{2} \\ 
        =& \frac{\pi }{2}. 
    \end{align}
    Therefore 
    \begin{equation}\label{eq:J-ac}
        -\ii J_{[a,b],[a,d]} = -\frac{\ii \pi c}{6} \cdot {\frac{1}{2} }. 
    \end{equation}
    \item Right coincide: $a<c<b=d$. In this case 
    \begin{equation}
        g(l_{b,d}) = 0. 
    \end{equation}
    Then 
    \begin{align}
        S_{[a,b],[c,b]} &= \frac{\pi - l_{a,c}}{2} + \frac{\pi - l_{c,b}}{2} - \frac{\pi- l_{a,b}}{2} \\ 
        & = \frac{\pi}{2} - \frac{1}{2}(c-a + b-c -b+a) \\ 
        & = \frac{\pi}{2}.
    \end{align}
    Therefore 
    \begin{equation}\label{eq:J-bd}
        -\ii J_{[a,b],[c,b]} = -\frac{\ii \pi c}{6} \cdot { \frac{1}{2} }. 
    \end{equation}
\end{itemize} 

\subsection{Commutators from OPE}
This section is to compute the modular commutator from OPE of the stress-energy tensor. By doing this computation, one can see: 
\begin{itemize}
    \item the origin of $g(x)$ as a ``Fourier regulation'' of the indicator function. 
    \item the origin of $1/2$ as 
    \begin{equation}
        \Theta(0) = \frac{1}{2},
    \end{equation}
    that contributes to non-vanishing first derivative of the coolness function. 
\end{itemize}

The root of all the calculations in this section is from OPE of stress-energy tensor 
\begin{equation}
    T(x)T(y) = \frac{c/2}{(x-y)^4} + \frac{2 T(y)}{(x-y)^2} + \frac{\partial T(y)}{x-y} + \cdots.  
\end{equation}
Here the coordinate of space-time is 
\begin{equation}
    z = x+ \ii \tau. 
\end{equation}
If we require the OPE to be time-ordered, then 
\begin{equation}\begin{split} 
    [T(x),T(y)] &= \lim_{\epsilon \to 0^{+}} \Big[ T(x+\ii \epsilon)T(y-\ii\epsilon) - T(x - \ii\epsilon)T(y+\ii\epsilon) \Big] \\ 
     &=  \frac{\ii \pi c}{6}\partial^3_x \delta(x-y) + 4\pi \ii \partial_x \delta(x-y)T(y) - 2\pi \ii \partial T(y)\delta(x-y),
\end{split}
\end{equation}
where 
\begin{equation}
     \delta(x-y) = \frac{1}{\pi }\lim_{\epsilon\to 0^+} \frac{\epsilon}{(x-y)^2 + \epsilon^4}
\end{equation}
is used. 

Let's now compute $[K_{[x_1,x_2]},K_{[y_1,y_2]}]$. 
Let $\beta_{[a,b]}$ be the coolness function.  Using integration by parts many times,
\begin{align}
    &[K_{[x_1,x_2]},K_{[y_1,y_2]}] \\ 
    = &\int dx dy \beta_{[x_1,x_2]}(x) \beta_{[y_1,y_2]}(y) [T(x),T(y)] \\ 
    = &\int dx dy \beta_{[x_1,x_2]}(x) \beta_{[y_1,y_2]}(y) \Big[ \frac{\ii \pi c}{6}\partial^3_x \delta(x-y) + 4\pi \ii \partial_x \delta(x-y)T(y) - 2\pi \ii \partial T(y)\delta(x-y)\Big] \\  
    =& \int dx \Big[\(-\frac{\ii \pi c}{6}\) \beta'''_{[x_1,x_2]}(x)\beta_{[y_1,y_2]}(x) - 4\pi \ii \beta'_{[x_1,x_2]}(x)\beta_{[y_1,y_2]}(x) - 2\pi \ii \beta_{[x_1,x_2]}(x)\beta_{[y_1,y_2]}(x) \partial T(x)\Big] \\ 
    = & -\frac{\ii \pi c}{6}\int dx  \beta_{[x_1,x_2]}'''(x) \beta_{[y_1,y_2]}(x) - 2\pi \ii \int dx  \Big[ \beta_{[x_1,x_2]}'(x) \beta_{[y_1,y_2]}(x) - \beta_{[x_1,x_2]}(x) \beta_{[y_1,y_2]}'(x)\Big]T(x) ~.
\end{align}

Therefore 
\begin{align}
   -\ii J &\equiv \langle [K_{[x_1,x_2]},K_{[y_1,y_2]}] \rangle \\ 
    & = \underbrace{-\frac{\ii \pi c}{6}\int dx  \beta_{[x_1,x_2]}'''(x) \beta_{[y_1,y_2]}(x)}_{J_1}  \underbrace{- 2\pi \ii \int dx  \Big[ \beta_{[x_1,x_2]}'(x) \beta_{[y_1,y_2]}(x) - \beta_{[x_1,x_2]}(x) \beta_{[y_1,y_2]}'(x)\Big]\langle T(x) \rangle }_{J_2} \\ 
    & = J_1 + J_2.   
\end{align}

First evaluate $J_1$.  
\begin{align}
    \beta_{[x_1,x_2]} &= 2\Theta(x-x_1)\Theta(x_2-x)\frac{\sin((x-x_1)/2)\sin((x_2-x)/2)}{\sin((x_2-x_1)/2)}  \\ 
    & = \Theta(x-x_1)\Theta(x_2-x) \frac{\cos\((x_2+x_1-2x)/2\) - \cos\((b-a)/2\)}{\sin((x_2-x_1)/2)} \\
    \beta_{[x_1,x_2]}' &= \Theta(x-x_1)\Theta(x_2-x)\frac{\sin((x_2+x_1 - 2x)/2)}{\sin((x_2-x_1)/2)} \\ 
    \beta_{[x_1,x_2]}'' &= \delta(x-x_1) + \delta(x-x_2) -\Theta(x-x_1)\Theta(x_2-x)\frac{\cos((x_2+x_1 - 2x)/2)}{\sin((x_2-x_1)/2)} \\ 
    \beta_{[x_1,x_2]}''' &= \delta'(x-x_1) + \delta'(x-x_2) -\cot((x_2-x_1)/2)(\delta(x-x_1)-\delta(x-x_2)) + \beta'_{[x_1,x_2]}(x) 
\end{align}
Then 
\begin{equation}\begin{split} 
    J_1 = &-\frac{\ii \pi c}{6}\int dx  \beta_{[x_1,x_2]}'''(x) \beta_{[y_1,y_2]}(x)  \\ 
    = & \frac{\ii \pi c}{6} \Big[ \beta'_{[y_1,y_2]}(x_1) + \beta'_{[y_1,y_2]}(x_2) - \frac{\beta_{[y_1,y_2]}(x_2) - \beta_{[y_1,y_2]}(x_1)}{\tan((x_2-x_1)/2)} - \int dx \beta'_{[x_1,x_2]}(x)\beta_{[y_1,y_2]}(x)\Big]
\end{split}
\end{equation}

For the $J_2$ term, notice it doesn't vanish since the Casimir energy on the circle isn't zero! 
\begin{equation}
    \langle T(x)\rangle = - \(\frac{2\pi}{L}\)^2\frac{c}{24}
\end{equation}
Here we are working with a circle with radius 1, so the circumference is $L = 2\pi$. 
Therefore, 
\begin{align}
    J_2 &= 2 \cdot (-2\pi \ii) \cdot \frac{-c}{24} \int dx \beta'_{[x_1,x_2]}(x)\beta_{[y_1,y_2]}(x) \\ 
    & = \frac{\ii \pi c}{6} \int dx \beta'_{[x_1,x_2]}(x)\beta_{[y_1,y_2]}(x),
\end{align}
where integration by parts is used. The boundary term is $\beta_{[x_1,x_2]}(x)\beta_{[y_1,y_2]}(x)|_a^b$, which vanishes as $a,b$ take values from $\{x_1,x_2,y_1,y_2\}$. 
Notice this $J_2$ cancels the third term in $J_1$. 

Finally 
\begin{equation}
   -\ii J = \frac{\ii \pi c}{6} \Big[ \beta'_{[y_1,y_2]}(x_1) + \beta'_{[y_1,y_2]}(x_2) - \frac{\beta_{[y_1,y_2]}(x_2) - \beta_{[y_1,y_2]}(x_1)}{\tan((x_2-x_1)/2)} \Big].
\end{equation}
Using the same notation as Eq.~\eqref{eq:J-vir}, 
\begin{equation}
    \boxed{-\ii J_{[a,b],[c,d]} = \frac{\ii \pi c}{6} \Big[ \beta'_{[c,d]}(a) + \beta'_{[c,d]}(b) - \frac{\beta_{[c,d]}(b) - \beta_{[c,d]}(a)}{\tan((b-a)/2)} \Big] 
    }. 
\end{equation}
Notice 
\begin{align}
    \beta'_{[c,d]}(b) - \frac{\beta_{[c,d](b)}}{\tan(b-a)/2} = \[2\eta(a,c,b,d)-1\]\cdot \Theta(b-c)\Theta(d-b) \\ 
    \beta'_{[c,d]}(a) - \frac{\beta_{[c,d](a)}}{\tan(a-b)/2} = \[2\eta(b,c,a,d)-1\]\cdot \Theta(a-c)\Theta(d-a),
\end{align}
where 
\begin{equation}
    \eta(a,c,b,d) = \frac{\sin((c-a)/2)\sin((d-b)/2)}{\sin((b-a)/2)\sin((d-c)/2)} = 1-\eta(b,c,a,d). 
\end{equation}

Therefore the result is 
\begin{equation}
    \boxed{ 
        -\ii J_{[a,b],[c,d]} = \frac{\ii \pi c}{6}(2\eta(a,c,b,d)-1) \cdot \Big[\Theta_{[c,d]}(b)-\Theta_{[c,d]}(a)\Big]
    },
\end{equation}
where $\Theta_{[a,b]}$ is the $2\pi$ periodic extension of 
\begin{equation}
    \Theta_{[a,b]}(x) = \Theta(x-a)\Theta(b-x). 
\end{equation}

Now let us use it to compute several cases: 
\begin{itemize}
    \item Usual case, $a<c<b<d$: 
    \begin{align}
        -\ii J = \frac{\ii \pi c}{6} &= \frac{\ii \pi c}{6}\[2\eta(a,c,b,d)-1\] \cdot (1-0). \\ 
        &= \frac{\ii \pi c}{6}\[2\eta(a,c,b,d)-1\] . 
    \end{align}
    \item Left coincide, $a=c<b<d$: Using 
$        \Theta(0) = \frac{1}{2} $, 
    \begin{align}
        -\ii J_{[a,b],[a,d]} &= \frac{\ii \pi c}{6}\[2\eta(a,a,b,d)-1\] \cdot (1-1/2) \\ 
        &= -\frac{\ii \pi c}{6} \cdot {\frac{1}{2} },
    \end{align}
    which agrees with Eq.~\eqref{eq:J-ac} from using Virasoro algebra. 
    \item Right coincide, $a<c<b=d$: 
    \begin{align}
        -\ii J_{[a,b],[c,b]} &= \frac{\ii \pi c}{6}\[2\eta(a,c,b,b)-1\] \cdot (1/2 - 0) \\ 
        &= -\frac{\ii \pi c}{6} \cdot {\frac{1}{2} },
    \end{align}
    which agrees with Eq.~\eqref{eq:J-bd}. 
\end{itemize}

\subsection{Fourier regulation of the step function}

In the Virasoro computation, the $1/2$ factor comes from the $g(x)$ function, while in the OPE computation, it comes from $\Theta(0) = 1/2$. In fact, they are related by Fourier regulation of the step function, as we now show.

If $f(x)$ is piecewise continuous, then the Fourier series (or integral) converges to 
\begin{equation}
    \frac{1}{2}(f(x^{+}) + f(x^{-}))
\end{equation}
with $x^\pm \equiv x \pm \eps$ with $\eps \to 0$.

Consider the step function on a circle: 
\begin{equation}
    \Theta_{[0,\phi]}(x) = \left\{ \begin{aligned}
        &1, && \text{ if } x \in (0,\phi) \\ 
        &0, && \text{ if } x \in (\phi,2\pi) \\ 
    \end{aligned} \right.
\end{equation}
Then 
\begin{equation}
    \Theta_n = \int_0^{2\pi} \Theta(x) e^{-\ii n x} = \frac{1-e^{-\ii n\phi}}{\ii n}. 
\end{equation}
When $x=0$, we obtain 
\begin{equation}
    \Theta(0) = \frac{1}{2\pi}\sum_{n\in\mathbb{Z}} \Theta_n = \sum_{n\geq 1} \frac{\sin(n\phi)}{\pi n} + \frac{\phi}{2\pi},
\end{equation}
where $\phi/(2\pi)$ comes from $\Theta_0/(2\pi)$. 

The sum of the sinc function is exactly the $g(\phi)$ function defined above
\begin{align}
   \sum_{n\geq 1} \frac{\sin(n\phi)}{\pi n} =& \frac{1}{2 \ii}\Big[ - \ln\left( 1-e^{\ii\phi} \right) + \ln\left( 1 - e^{-\ii\phi}\right) \Big] \\ 
   =& g(\phi) 
\end{align}
where 
\begin{equation}
    \ln(1-z) = - \sum_{n\geq 1} \frac{z^n}{n}
\end{equation}
was used to perform the sum. 

Using this result, and recalling that  $g(\phi)$ is the $2\pi$ periodic extension of $g_{[-\pi,\pi]}(\phi)$: 
\begin{equation}
    g_{[-\pi, \pi]}(\phi) = \sum_{n\geq 1} \frac{\sin(nx)}{n} = \left\{ \begin{aligned}
        &\frac{\pi-\phi}{2},&& \quad \phi\in(0,\pi] \\ 
        &0, && \quad \phi = 0 \\ 
        &-\frac{\pi-\phi}{2},&& \quad \phi\in[-\pi,0) \\ 
    \end{aligned} \right. , 
\end{equation} 
we can obtain 
\begin{equation}
    \Theta(0) = \frac{1}{2}. 
\end{equation}

\section{Disentangler on a corner region}
\label{app:disentangler}
In this section, we will explicitly construct the disentangler.  

We start with a state $\ket{\Psi}$ that satisfies \Azero and \Aone. Let us consider the region in Fig.~\ref{fig:disentangler}. Because of the \Azero condition on region $XY$, one can conclude that 
\begin{align}
   \rho_{X \overline{XY}} = \rho_X \otimes \rho_{\overline{XY}},  
\end{align}
where $\rho_{\bullet}$ is the reduced density matrix from $\ket{\Psi}$. Then by Uhlmann theorem \cite{Uhlmann1976}, one can conclude that there exists a unitary $V_Y$ on $Y$ 
\begin{align}
     \ket{\Psi} = V_{Y} \ket{\Psi'}_{\overline{XY} Y_1} \otimes \ket{\Psi'}_{XY_2},
\end{align} 
where $Y = Y_1Y_2$ and $\ket{\Psi'}_{\overline{XY} Y_1} \otimes \ket{\Psi'}_{XY_2}$ is a purification of $\rho_X \otimes \rho_{\overline{XY}}$. Note that the decomposition $Y = Y_1Y_2$ is just an abstract decomposition of the Hilbert space $\CH_Y = \CH_{Y_1} \otimes \CH_{Y_2}$ and $Y_1,Y_2$ does not necessarily represent two regions (i.e. a collection of underlying lattice sites). For the state $\ket{\Psi'}_{XY_2}$, one can further apply some unitary $V_{XY_2}$ to trivialize the degrees of freedom in region $X$. That is, 
\begin{align}
     V_{XY_2}\ket{\Psi'}_{XY_2} = \ket{0}_X \otimes \ket{\Psi''}_{Y_2},
\end{align}
where $\ket{0}_X$ is a product state over the lattice Hilbert space $\CH_X = \otimes_{v \in X}\CH_v$. 

Therefore, we obtain the disentangler $U_{XY} = V^{\dagger}_{XY_2}V_Y$ such that we actually create a hole on $X$ and an actual boundary in $\partial X$. Explicitly, we obtain a unitary $U_{XY}$ such that 
\begin{align}
    U_{XY}\ket{\Psi} = \ket{\Psi'}_{\overline{X}} \otimes \ket{0}_{X},    
\end{align}  
where $\ket{\Psi'}_{\overline{X}}$ and $\ket{\Psi}$ have the same reduced density matrices on $\overline{XY}$. 

\section{Relation between $\fc$ and correlation length}
\label{app:ctot-xi}

In this section, we give an explicit derivation of Eq.~\eqref{eq:ctot-xi}. 

The goal is to consider $\fc$ in the limit $R \to \infty$, with the assumption that the leading order contribution to $\Delta(\mathbb{D}), I(\mathbb{D})$ is in Eq.~\eqref{eq:scaling-Delta-I} that is satisfied for a generic representative wavefunction of a gapped phase. 

Plug Eq.~\eqref{eq:scaling-Delta-I} into the definition of $\fc$ in Eq.~\eqref{eq:def-c-eta}, and with the expansion 
$e^{-6\Delta/\fc} \sim 1 - 6\Delta / \fc$ as $\Delta \sim e^{-R/\xi_{\Aone}}$ is exponentially small, we obtain 
\begin{align}
     1 = e^{-6\Delta/\fc} + e^{-6 I /\fc} = 1 - \frac{6\Delta}{\fc} + e^{-6 I / \fc}. 
\end{align}
Then $\fc =6 x_*$ where $x_*$ is the solution to the following equation
\begin{align}\label{eq:equ-x}
    x \ln x = R \( 2 \alpha - \frac{x}{\xi} \).
\end{align}
One can see from the figure below that, as $R \to \infty$, the solution is approaching to $2 \alpha \xi$ from below. 

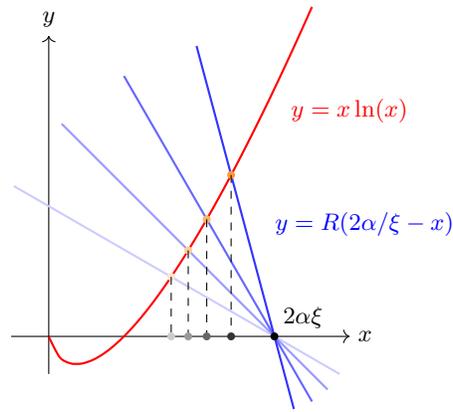
\begin{figure}[htb]
  \begin{tikzpicture}
  \usetikzlibrary{calc,intersections}

  \draw[->] (-0.5,0) -- (4,0) node[right] {$x$};
  \draw[->] (0,-0.5) -- (0,4) node[above] {$y$};

  \draw[domain=0.001:3.5, smooth, variable=\x, red, thick, name path=curve]
    plot ({\x}, {\x*ln(\x)});
  
  \foreach \i/\angle/\opacity in {1/30/20, 2/45/40, 3/60/60, 4/75/80} {
    \draw[blue!\opacity, thick, name path=line\i]
      ($(3,0) + (180-\angle:4)$) -- ($(3,0) + (-\angle:1)$);
    
    \fill[orange!\opacity, name intersections={of=curve and line\i}]
      (intersection-1) circle (1.5pt);
    \draw[dashed] let \p1=(intersection-1) in
      (intersection-1) -- (\x1, 0);
    \fill[black!\opacity] let \p1=(intersection-1) in
      (\x1, 0) circle (1.5pt);
  }

  \fill (3,0) circle (1.5pt) node[above right] {$2\alpha \xi$};
  \node[font = \small] at (4,3) {\color{red} $y = x \ln(x)$}; 
  \node[font = \small] at (4.2,1.5) {\color{blue} $y = R(2\alpha/\xi - x)$}; 
\end{tikzpicture}
\caption{Solution of Eq.~\eqref{eq:equ-x}. The blue and red lines are the right and left hand side of Eq.~\eqref{eq:equ-x}. The increasing of opacity shows the increasing of $R$ and the solutions are shown by the gray dots in the $x$ axis with increasing opacity. When $R\to \infty$, the solution approaches to $2 \alpha \xi$. }
\end{figure}

\bibliographystyle{ucsd}
\bibliography{ref} 
\end{document}